\newtheorem{example}{Example}[section]
\newtheorem{remark}{Remark}[section]
\newtheorem{theorem}{Theorem}[section]
\newtheorem{lemma}{Lemma}[section]
\def\ket#1{| #1 \rangle}
\def\bra#1{\langle #1 |}
\def\ip#1#2{\langle #1 | #2 \rangle}
\def\norm#1{\| #1 \|}
\def\diag{\operatorname{diag}}
\def\dim{\operatorname{dim}}
\def\rank{\operatorname{rank}}
\def\Span{\operatorname{span}}
\def\Tr{\operatorname{Tr}}
\def\Ad{\operatorname{Ad}}
\def\B{\mathcal{B}}
\def\C{\mathcal{C}}
\def\H{\mathcal{H}}
\def\M{\mathcal{M}}
\def\O{\mathcal{O}}
\def\T{\mathcal{T}}
\def\su{\mathfrak{su}}
\def\SU{\mathfrak{SU}}
\def\UU{\mathfrak{U}}
\def\xs{\vec{x}\cdot\vec{\sigma}}
\begin{document}
\title{Analysis of Lyapunov Method for Control of Quantum States}
\author{Xiaoting Wang}\email{x.wang@damtp.cam.ac.uk}
\affiliation{Department of Applied Maths and Theoretical Physics,
             University of Cambridge, Wilberforce Road, Cambridge, CB3 0WA, UK}
\author{Sonia Schirmer}\email{sgs29@cam.ac.uk}
\affiliation{Department of Applied Maths and Theoretical Physics,
             University of Cambridge, Wilberforce Road, Cambridge, CB3
	     0WA, UK}
\affiliation{Department of Maths and Statistics, 
             University of Kuopio, PO Box 1627, 70211 Kuopio, Finland}
\date{\today}

\begin{abstract}
We present a detailed analysis of the convergence properties of
Lyapunov control for finite-dimensional quantum systems based on the
application of the LaSalle invariance principle and stability
analysis from dynamical systems and control theory.  Under an ideal
choice of the Hamiltonian, convergence results are derived, with a
further discussion of the effectiveness of the method when the ideal
condition of the Hamiltonian is relaxed.
\end{abstract}
\maketitle

\section{Introduction}
\label{sec:intro} 

Control theory has developed into a very broad and interdisciplinary
subject. One of its major concerns is how to design the dynamics of a
given system to steer it to a desired target state, and how to stabilize
the system in a desired state.  Assuming that the evolution of the
controlled system is described by a differential equation, many control
methods have been proposed, including optimal control~\cite{Lewis,Kirk},
geometric control~\cite{Jurdjevic97} and feedback
control~\cite{Franklin}.

Quantum control theory is about the application of classical and modern
control theory to quantum systems.  The effective combination of control
theory and quantum mechanics is not trivial for several reasons.  For
classical control, feedback is a key factor in the control design, and
there has been a strong emphasis on robust control of linear control
systems.  Quantum control systems, on the other hand, cannot usually be
modelled as linear control systems, except when both the system and the
controller are quantum systems and their interaction is fully coherent
or quantum-mechanical~\cite{IEEETAC48p2107}.  This is not the case for
most applications, where we usually desire to control the dynamics of a
quantum system through the interaction with fields produced by what are
effectively classical actuators, whether these be control electrodes or
laser pulse shaping equipment.  Moreover, feedback control for quantum
systems is a nontrivial problem as feedback requires measurements, and
any observation of a quantum system generally disturbs its state, and
often results in a loss of quantum coherence that can reduce the system 
to mostly classical behavior.  Finally, even if measurement backaction
can be mitigated, quantum phenomena often take place on sub-nanosecond
(in many case femto- or attosecond) timescales and thus require ultrafast 
control, making real-time feedback unrealistic at present.

This is not to say that measurement-based quantum feedback control is
unrealistic.  There are various interesting applications, e.g., in the
area of laser cooling of atomic motion~\cite{PRL92n223004}, or for 
deterministic quantum state reduction~\cite{PRL96n010504} and 
stabilization of quantum states, to mention only a few, and progress in
technology will undoubtedly lead to new applications.  Nonetheless,
there are many applications of open-loop Hamiltonian engineering in
diverse areas from quantum chemistry to quantum information processing.
Even in the area of open-loop control many control design strategies,
both geometry~\cite{jurdjevic1,lowenthal,d'alessandro,schirmer} and
optimization-based~\cite{Shi1988,Maday2003,schirmer1}, utilize some form
of model-based feedback.  A particular example is Lyapunov control,
where a Lyapunov function is defined and feedback from a model is used
to generate controls to minimize its value.  Although there have been
several papers discussing the application of Lyapunov control to quantum
systems, the question of when, i.e., for which systems and objectives,
the method is effective and when it is not, has not been answered
satisfactorily.

Several early papers on Lyapunov control for quantum systems such
as~\cite{Vettori,Ferrante,Grivopoulos} considered only control of
pure-state systems, and target states that are eigenstates of the free
Hamiltonian $H_0$, and therefore fixed points of the dynamical system.
For target states that are not eigenstates of $H_0$, i.e., evolve with
time, the control problem can be reformulated either in terms of
asymptotic convergence of the system's actual trajectory to that of the
time-dependent target state, or as convergence to the orbit of the
target state (or more precisely its closure).  Such cases have been
discussed in several papers
~\cite{Mirrahimi2004a,Mirrahimi2004b,Mirrahimi2005,altafini1,altafini2}
but except for~\cite{altafini1,altafini2}, the problem was formulated
using the Schrodinger equation and state vectors that can only represent
a pure state.  To give a complete discussion of Lyapunov control, it is
desirable to utilize the density operator description as it is suitable
for both mixed-state and pure-state systems, and can be generalized to
open quantum systems subject to environmental decoherence or
measurements, including feedback control.  In~\cite{altafini1,altafini2}
Lyapunov control for mixed-state quantum systems was considered but the
notion of orbit convergence used is rather weak compared to trajectory
convergence, the LaSalle invariant set was only shown to contain certain 
critical points but not fully characterized, and a stability analysis of 
the critical points was missing, in addition to other issues such as the 
assumption of periodicity of orbits, etc.  Furthermore, while an attempt 
was made to establish sufficient conditions to guarantee convergence to 
a target orbit, the effectiveness of the method for realistic system was
not considered.

In this paper we address these issues.  We consider the problem of
steering a quantum system to a target state using Lyapunov feedback as a
trajectory tracking problem for a bilinear Hamiltonian control system
defined on a complex manifold, where the trajectory of the target state
is generally non-periodic, and analyze the effectiveness of the Lyapunov
method as a function of the form of the Hamiltonian and the initial
value of the target state.  In Sec.~\ref{sec:basics} the control problem
and the Lyapunov function are defined, and some basic issues such as
different notions of convergence and reachability of target states are
briefly discussed. In Sec.~\ref{sec:LaSalle} the controlled quantum
dynamics is formulated as an autonomous dynamical system defined on an
extended state space, and LaSalle's invariance principle~\cite{lasalle}
is applied to obtain a characterization of the LaSalle invariant set.
This characterization shows that even for ideal systems satisfying the
strongest possible conditions on the Hamiltonian, the invariant set is
generally large, and the invariance principle alone is therefore not
sufficient to conclude asymptotic stability of the target state.  Noting
that the invariant set must contain the critical points of the Lyapunov
function we characterize the former in Sec.~\ref{sec:critical}.  In 
Sec.~\ref{sec:conv_ideal} we give a detailed analysis of the convergence 
behaviour of the Lyapunov method for finite-dimensional quantum systems 
under an ideal control Hamiltonian based on the characterization of
the LaSalle invariant set and our stability analysis.  The discussion is 
divided into three parts, control of pseudo-pure states, generic mixed 
states, and other mixed states.  The result is for this ideal choice of 
Hamiltonian Lyapunov control is effective for most (but not all) target 
states.  Finally, in Sec.~\ref{sec:conv_real} we relax the unrealistic 
requirements on the Hamiltonian imposed in Sec.~\ref{sec:conv_ideal}, 
and show that this leads to a much larger LaSalle invariant set, and 
significantly diminished effectiveness of Lyapunov control.

\section{State and trajectory tracking problem for quantum systems}
\label{sec:basics}

\subsection{Quantum states and evolution}

According to the basic principles of quantum mechanics the state of an
$n$-level quantum system can be represented by an $n\times n$ positive
hermitian operator with unit trace, called a density operator $\rho$,
and its evolution is determined by the Liouville von-Neumann
equation~\cite{von-Neumann}
\begin{eqnarray*}
 \dot \rho(t) = -i\hbar [ H, \rho(t) ],
\end{eqnarray*}
where $H$ is the system Hamiltonian, denoted by an $n\times n$ Hermitian
operator.  If we are considering a sub-system that is not closed, i.e.,
interacts with an external environment, additional terms are required to
account for dissipative effects, although in principle, we can always
consider the Hamiltonian dynamics on an enlarged Hilbert space, and we
shall restrict our discussion here to Hamiltonian systems.  We shall say
a density operator $\rho$ represents a pure state if it is a rank-one
projector, and a mixed state otherwise.  We further define the special
class of pseudo-pure states, i.e., density operators with two
eigenvalues, one of which occurs with multiplicity $1$, the other with
multiplicity $n-1$, and generic mixed states, i.e., density operators
with $n$ distinct eigenvalues.

\subsection{Control Problem}
\label{subsec:problem}

In the following we consider the bilinear Hamiltonian control system
\begin{equation}
  \dot \rho(t) =-i [ H_0+f(t)H_1, \rho(t) ],
\end{equation}
where $f(t)$ is an admissible real-valued control field and $H_0$ and
$H_1$ are a free evolution and control interaction Hamiltonian,
respectively, both of which will be assumed to be time-independent.
We have chosen units such that the Planck constant $\hbar=1$ and can
be omitted for convenience.

The general control problem is to design a certain control function
$f(t)$ such that the system state $\rho(t)$ with $\rho(0)=\rho_0$ will
converge to the target state $\rho_d$.  Since the evolution of a
Hamiltonian system is unitary, the spectrum of $\rho(t)$ is therefore
time-invariant, or equivalently
\begin{equation}
  \Tr[\rho^n(t)]=\Tr[\rho_0^n], \quad \forall n\in \NN.
\end{equation}
Hence, for the target state $\rho_d$ to be reachable, $\rho_0$ and
$\rho_d$ must have the same spectrum, or entropy in physical terms.  If
$\rho_0$ and $\rho_d$ do \emph{not} have the same spectrum, we can still
attempt to minimize the distance $\norm{\rho(t)-\rho_d(t)}$, but it will
always be non-zero if we are restricted to Hamiltonian engineering.  For
the following analysis we shall assume that the initial and the target
state of the system have the same spectrum.  If this is the case and the
system is density-matrix controllable, or pure-state controllable if the
initial state of the system is pure or pseudo-pure, then we can conclude
that the target state is reachable, although a particular target state
may clearly be reachable even if the system is not
controllable~\cite{JPA35p4125}.

Assuming that $\rho_0$ and $\rho_d$ have the same spectrum, the quantum
control problem can be characterized by the spectrum of the target state.
If $\rho_d$ is pure, the problem is called a pure-state control problem.
Analogously, we can define the pseuo-pure-state control and generic-state
control.  Pure-state control problems are often represented in terms of
Hilbert space vectors or wavefunctions $\ket{\psi}$ evolving according to
the Schr\"odinger equation
\begin{eqnarray}
   \frac{d}{dt}\ket{\psi(t)} = -i ( H_0+f(t)H_1) \ket{\psi(t)}.
\end{eqnarray}
For pure states this wavefunction descritpion is equivalent to the density
operator description since any rank-one projector $\rho$ can be written
as $\rho=\ket{\psi}\bra{\psi}$ for some Hilbert space vector $\ket{\psi}$,
but it does not generalize to mixed states, and we shall not use this
formalism here.

Since the free Hamiltonian $H_0$ can usually not be turned off, it is
natural to consider non-stationary target states $\rho_d$ evolving
according to
\begin{equation}
\label{eqn:3}
\dot \rho_d(t) =-i [ H_0, \rho_d(t) ].
\end{equation}
It is easy to see that $\rho_d$ is stationary if and only if it commutes
with $H_0$, $[H_0,\rho_d(0)]=0$.  Thus the problem of quantum state control
for most target states is more akin to a trajectory tracking problem, where
the objective generally is to find a control $f(t)$ such that the trajectory
$\rho(t)$ of the initial state $\rho_0$ under the controlled evolution
asymptotically converges to a target trajectory $\rho_d(t)$.

\subsection{Trajectory vs Orbit Tracking}
\label{subsec:tracking}

It has been argued that the problem of quantum state control should
instead be viewed as an orbit tracking problem~\cite{altafini1,altafini2},
i.e., the problem of steering the trajectory $\rho(t)$ towards the orbit
of the target state $\rho_d$.  However, one problem with this approach
is that the notion of orbit tracking is relatively weak as the orbit of
a quantum state, or more precisely its closure, under free evolution can
be rather large, and there are generally infinitely many distinct quantum
states whose orbits under free evolution coincide.  For example, even for
two-level system evolving under the free Hamiltonian $H_0=\diag(0,\omega)$
the trajectories of the pure states
$\ket{\Psi_\pm} = \frac{1}{\sqrt{2}}(\ket{0}\pm \ket{1})$
are orthogonal, and thus perfectly distinguishable, for all times $t$,
$\ket{\Psi_\pm(t)} = \frac{1}{\sqrt{2}}(\ket{0}\pm e^{i\omega t}\ket{1})$,
but their orbits are the same,
$\O(\Psi_+)=\{\Psi(t): \Psi(0)=\Psi_+, t\ge 0\} = \O(\Psi_-)$.

For the two-level example above, the orbits are always periodic and thus
closed, and we can at least say that if the quantums state $\rho(t)$
converges to the \emph{periodic orbit} $\O(\rho_d)$ of $\rho_d$, then
for every state $\rho_a\in \O(\rho_d)$ there exists a sequence of times
$\{t_k\}$ such that $\norm{\rho(t_k)-\rho_a}\to 0$ as $k\to\infty$, but
this notion of convergence is much weaker than the notion of trajectory
convergence, which requires $\norm{\rho(t)-\rho_d(t)}\to 0$ as
$t\to\infty$, and we shall see that there are cases where it is possible
to track the orbit but \emph{not} a particular trajectory.
The notion of orbit tracking is even more problematic for non-periodic
orbits, which comprise the vast majority of orbits for systems of Hilbert
dimension $n>2$, except for the measure-zero set of Hamiltonians $H_0$
with commensurate energy levels, i.e., with transition frequencies that
are rational multiples of each other.  Of course, we can still ask the
question whether the state of the system converges to the \emph{closure}
of the orbit of a target state, but the dimension of this orbit set is
generally very large.  For instance, the state manifold of pure states
for an $n$-dimensional system has (real) dimension $2n-2$, while the
closure of the orbit of any state under a generic Hamiltonian $H_0$ has
dimension $n-1$.

For these reasons, we shall concentrate on quantum state control in
the sense of trajectory tracking as this is the strongest notion of
convergence and well-defined for arbitrary trajectories.

\subsection{Control Design based on Lyapunov Function}
\label{subsec:Vdef}

A natural design of $f(t)$ is inspired from the conception of Lyapunov
function, which is a very important tool in stability analysis for
dynamical systems. For an autonomous dynamical system $\dot x=f(x)$, a
differentiable scalar function $V(x)$, defined on the phase space
$\Omega=\{x\}$, is called a Lyapunov function, if:
\begin{enumerate}
\item[(i)] $V(x)$ is continuous and its partial derivatives are also
continuous on $\Omega$;
\item[(ii)] $V(x)$ is positive definite, i.e., $V(x)\ge 0$ with equality
only at $x=x_0$;
\item[(iii)] for any dynamical flow $\phi_t(x)$,
$\dot V(\phi_t(x))=\dot V(x(t)) \le 0$.
\end{enumerate}
With the conditions above, it can be shown that $x=x_0$ is Lyapunov
stable; if equality in (iii) holds only for $x=x_0$, we can further
conclude that $x=x_0$ is asymptotically stable.  However, in general, we
can only guarantee $\dot V \le 0$, and in this case, we can only use a
weaker result known as the LaSalle invariance principle~\cite{lasalle},
which claims that any bounded solution will converge to an invariant
set, called the LaSalle invariant set. 

Let $\M$ to be the set of density operators isospectral with $\rho_d(0)$ 
and consider the joint dynamics for $(\rho(t),\rho_d(t))$ on $\M\times\M$:
\begin{subequations}
\label{eqn:auto0}
\begin{align}
\dot \rho(t) &=-i [ H_0+f(t)H_1, \rho(t) ],\\
\dot \rho_d(t) &=-i [ H_0, \rho_d(t) ].
\end{align}
\end{subequations}
The Hilbert-Schmidt norm $\norm{A}=\sqrt{\Tr(A^\dagger A)}$ induces a 
natural distance function on $\M\times\M$, which provides a natural
candidate for a Lyapunov function
\begin{equation}
\label{eqn:4}
  V(\rho,\rho_d) = \frac{1}{2}\norm{\rho-\rho_d}^2
                 = \frac{1}{2}\Tr[(\rho-\rho_d)^2].
\end{equation}
If $\rho$ and $\rho_d$ are isospectral, this definition is equivalent to
\begin{equation}
 \label{eq:4a}
 V(\rho,\rho_d) = \Tr[\rho_d^2(t)]- \Tr[\rho(t)\rho_d(t)],
\end{equation}
the Lyapunov function used in~\cite{altafini1,altafini2}.  If
$\rho_d=\ket{\psi_d}\bra{\psi_d}$ and $\rho=\ket{\psi}\bra{\psi}$ we
have furthermore
\begin{equation}
 \label{eq:4b}
 V(\psi,\psi_d) = 1- |\ip{\psi_d(t)}{\psi(t)}|^2,
\end{equation}
a Lyapunov function often used for pure-state control.

To see that Eq.~(\ref{eq:4a}) defines indeed a Lyapunov function, note
that $V\ge 0$ with equality only if $\rho=\rho_d$, and
\begin{align*}
{\dot V} &= -\Tr(\dot\rho_d \rho)-\Tr(\rho_d \dot \rho)\\
         &= -\Tr([-iH_0,\rho_d]\rho)-\Tr(\rho_d[-iH_0,\rho])\\
         &\qquad -f(t)\Tr(\rho_d[-iH_1,\rho])\\
         &= -f(t)\Tr(\rho_d[-iH_1,\rho]),
\end{align*}
where we have used
\[
  \Tr([-iH_0,\rho_d]\rho) = -\Tr(\rho_d[-iH_0,\rho])
\]
and $\frac{d}{dt}\Tr(\rho^2_d)=0$.  If we choose the control field as
\begin{equation}
 \label{eqn:5}
  f(\rho,\rho_d) = \kappa \Tr(\rho_d[-iH_1,\rho]), \quad \kappa>0,
\end{equation}
then $\dot V(\rho(t),\rho_d(t)) \le 0$.  Without loss of generality,
we set $\kappa=1$ in the following.  

Hence, the evolution of the system $(\rho,\rho_d)$ with Lyapunov
feedback is described by the following nonlinear autonomous dynamical
system on $\M\times\M$:
\begin{subequations}
\label{eqn:auto}
\begin{align}
 \dot{\rho(t)}   &= -i [ H_0+f(\rho,\rho_d)H_1, \rho(t) ],\\
 \dot{\rho_d(t)} &= -i [ H_0, \rho_d(t) ],\\
 f(\rho,\rho_d)  &= \Tr([-iH_1,\rho]\rho_d).
\end{align}
\end{subequations}
The manifold $\M$ here is a homogeneous space known as a flag manifold,
whose dimension and topology depend on the spectrum, or more precisely,
the number of distinct eigenvalues, of the density operators $\rho$, 
$\rho_d$, under consideration.  For pure or pseudo-pure initial states 
$\rho_0$, for example, $\M$ is homeomorphic to the complex projective 
space $\CC P^{n-1}$, while for a generic mixed state, we obtain the 
$n^2-n$ dimensional manifold $U(n)/\oplus_{\ell=1}^n U(1)$.  By simply 
comparing the dimensions, we see that in the special case $n=2$ (and 
only $2$) the generic mixed states and pseudo-pure states have the same 
dimension, and one can easily show that in this case all mixed states 
(except the completely mixed state) are pseudo-pure, a fact that will 
be relevant later.

\section{LaSalle Invariance Principle and LaSalle Invariant Set}
\label{sec:LaSalle}

\subsection{Invariance Principle for Autonomous Systems}
\label{subsec:invar}

For an autonomous dynamical system with $\dot x=f(x)$, we say a set is
\emph{invariant}, if any flow starting at a point in the set will stay
in it for all times.  For any solution $x(t)$ we define the positive
limiting set $\Gamma^+$ to be the set of all limit points of $x(t)$ as
$t\to+\infty$.  First of all, we have the following two lemmas:

\begin{lemma}
For $\dot x=f(x)$ defined on a finite-dimensional manifold, the
positive limiting set $\Gamma^+$ of any bounded solution $x(t)$ is
an non-empty, connected, compact, invariant set.
\end{lemma}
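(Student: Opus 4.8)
The plan is to prove each of the four topological/dynamical properties of the positive limiting set $\Gamma^+$ in turn, exploiting the compactness of the ambient manifold $\M\times\M$ (which follows because $\M$ is a flag manifold and hence compact, so every solution is automatically bounded). First I would fix notation: write $x(t)=\phi_t(x_0)$ for the flow through $x_0$, and recall that $p\in\Gamma^+$ means there is a sequence $t_k\to+\infty$ with $\phi_{t_k}(x_0)\to p$. The four claims are that $\Gamma^+$ is nonempty, compact, invariant, and connected; I expect the first three to be essentially standard point-set arguments, and connectedness to be the one requiring a genuine idea.

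For \emph{nonemptiness and compactness}, I would argue as follows. Since $x(t)$ is bounded, it lies in a compact subset of the manifold; by sequential compactness any sequence $\phi_{t_k}(x_0)$ has a convergent subsequence, so $\Gamma^+\neq\emptyset$. For closedness, I would show $\Gamma^+$ is the intersection of closures,
\begin{equation*}
  \Gamma^+ = \bigcap_{T\ge 0}\,\overline{\{\phi_t(x_0): t\ge T\}},
\end{equation*}
each factor being closed, hence $\Gamma^+$ is closed; being a closed subset of a compact set, it is compact. For \emph{invariance}, I would take $p\in\Gamma^+$ with $\phi_{t_k}(x_0)\to p$, fix any $s$, and use continuity of the flow in the initial condition to get $\phi_{t_k+s}(x_0)=\phi_s(\phi_{t_k}(x_0))\to\phi_s(p)$; since $t_k+s\to+\infty$ this shows $\phi_s(p)\in\Gamma^+$, so $\Gamma^+$ is invariant under the flow for all $s$. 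The only subtlety here is that continuous dependence on initial data must hold uniformly enough to pass the limit, which is guaranteed because the vector field $f$ is smooth (the right-hand side of Eq.~(\ref{eqn:auto}) is polynomial in the entries of $\rho,\rho_d$) and the flow is defined on a compact manifold, so solutions exist for all time and depend continuously on $x_0$.

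\emph{The hard part will be connectedness.} Here I would argue by contradiction: suppose $\Gamma^+ = A\cup B$ with $A,B$ nonempty, disjoint, and both closed (hence compact). In a metric space two disjoint compact sets are separated by a positive distance $\delta=\mathrm{dist}(A,B)>0$. The idea is that the trajectory $\phi_t(x_0)$ must accumulate on \emph{both} $A$ and $B$, so it returns infinitely often to the $\delta/3$-neighborhood of $A$ and to that of $B$; by continuity of the trajectory in $t$ it must, on each transit between these neighborhoods, pass through the closed ``middle band'' of points at distance $\ge\delta/3$ from both $A$ and $B$. This produces a sequence of times $s_k\to\infty$ with $\phi_{s_k}(x_0)$ in that compact middle band, whose limit point lies in $\Gamma^+$ but in neither $A$ nor $B$, contradicting $\Gamma^+=A\cup B$. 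Making the intermediate-value / connectedness-of-the-time-axis step precise is the crux; the rest is bookkeeping with $\delta$'s.

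Throughout I would lean only on: smoothness of the vector field (to get a well-defined global flow with continuous dependence on initial conditions), compactness of $\M$ (to guarantee boundedness and extract convergent subsequences), and the metric structure induced by the Hilbert--Schmidt norm $\norm{\cdot}$ defined earlier. No properties specific to the Lyapunov function $V$ or to the quantum setting are needed for this lemma — it is purely the classical LaSalle/$\omega$-limit-set machinery specialized to a compact manifold — so I would present it at that level of generality and defer all quantum-specific structure to the later characterization of which points actually lie in $\Gamma^+$.
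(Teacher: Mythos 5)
Your proof is correct. Note that the paper itself supplies no argument for this lemma---it simply defers to Perko (Sec.~3.2, Theorem~2)---and what you have written is precisely the classical $\omega$-limit-set argument contained in that reference: the tails-closure intersection $\Gamma^+=\bigcap_{T\ge 0}\overline{\{\phi_t(x_0):t\ge T\}}$ for closedness and compactness, continuity of the flow map for invariance, and the positive-separation/middle-band contradiction (via the intermediate value theorem applied to $t\mapsto\mathrm{dist}(\phi_t(x_0),A)$) for connectedness. So you have filled in, along the same standard route, exactly the proof the paper omits; no quantum-specific structure is needed, as you correctly observe.
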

The proof can be found in~\cite{Perko} (Sec.~3.2, Theorem~$2$).

\begin{lemma}
\label{lemma:2}
Any bounded solution $x(t)$ will tend to any set containing its
positive limiting set $\Gamma^+$ as $t \to \infty$.
\end{lemma}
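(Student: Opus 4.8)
The plan is to interpret the phrase ``tends to a set $S$'' in the usual metric sense, namely that the point-to-set distance $d(x(t),S)=\inf_{y\in S}\norm{x(t)-y}$ satisfies $d(x(t),S)\to 0$ as $t\to\infty$. Since $\Gamma^+\subseteq S$ by hypothesis, we have $d(x(t),S)\le d(x(t),\Gamma^+)$ for every $t$, so it suffices to prove the single claim that $d(x(t),\Gamma^+)\to 0$ as $t\to\infty$. This reduces the lemma to a statement about the positive limiting set alone; the containing set $S$ then plays no further role, and in particular no regularity assumption on $S$ is needed.

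First I would argue by contradiction. Suppose $d(x(t),\Gamma^+)\not\to 0$. Then there exist $\epsilon>0$ and an increasing, unbounded sequence of times $t_k\to\infty$ with $d(x(t_k),\Gamma^+)\ge\epsilon$ for all $k$. Because $x(t)$ is a bounded solution on a finite-dimensional manifold, the closure of its trajectory is compact, so by the Bolzano--Weierstrass property the sequence $\{x(t_k)\}$ admits a convergent subsequence $x(t_{k_j})\to p$ for some point $p$.

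Next I would identify the limit $p$ as an element of $\Gamma^+$. Since $t_{k_j}\to\infty$ and $x(t_{k_j})\to p$, the point $p$ is by definition a positive limit point of $x(t)$, i.e.\ $p\in\Gamma^+$. But then $d(x(t_{k_j}),\Gamma^+)\le\norm{x(t_{k_j})-p}\to 0$ as $j\to\infty$, contradicting $d(x(t_{k_j}),\Gamma^+)\ge\epsilon>0$. This contradiction establishes $d(x(t),\Gamma^+)\to 0$, and hence $d(x(t),S)\to 0$, completing the argument.

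The proof is a routine compactness-plus-contradiction argument, so there is no substantial obstacle. The only step requiring care is the appeal to compactness that yields the convergent subsequence: one must invoke boundedness of the solution together with finite-dimensionality of the state space, so that bounded sets are precompact. In the present application this is automatic, since the relevant manifold $\M$ is itself compact and the trajectory closure is compact regardless of any boundedness hypothesis. A secondary point is simply fixing the meaning of ``tends to a set,'' which I take to be convergence of the point-to-set distance to zero; with this convention the result follows directly from the definition of $\Gamma^+$, and indeed uses only that definition rather than the finer properties (connectedness, invariance) supplied by the preceding lemma.
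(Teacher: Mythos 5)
Your proof is correct and follows essentially the same route as the paper's: both argue by contradiction, extract a convergent subsequence from the bounded trajectory at the times where the distance stays at least $\epsilon$, and observe that the resulting limit point must lie in $\Gamma^+$ by definition, contradicting the $\epsilon$-separation. Your write-up is somewhat more careful than the paper's (making explicit the reduction from an arbitrary superset $S$ to $\Gamma^+$ itself, and the compactness hypothesis needed for Bolzano--Weierstrass), but the underlying argument is identical.
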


\begin{proof}
Suppose $x(t)$ does not converge to $\Gamma^+$.  Then there exists
some $\epsilon>0$, and a sequence $t_n$ such that $x(t_n)$ is outside
the $\epsilon$-neighborhood of $\Gamma^+$.  But $x(t_n)$ is a bounded
set, so it has a subsequence that converges to a point $x_0$.  By
assumption $x_0\not\in\Gamma^+$, which contradicts the definition of
the positive limiting set.  Hence, $x_0$ must belong to the positive
limiting set.
\end{proof}

From these results we can derive the LaSalle invariance 
principle~\cite{lasalle}:

\begin{theorem}
\label{thm:lasalle:1}
For an autonomous dynamical system, $\dot x=f(x)$, let $V(x)$ be
a Lyapunov function on the phase space $\Omega=\{x\}$, satisfying
$V(x)>0$ for all $x \neq x_0$ and $\dot{V}(x) \le 0$, and let
$\O(x(t))$ be the orbit of $x(t)$ in the phase space.
Then the invariant set $E=\{\O(x(t))|\dot{V}(x(t))=0\}$
contains the positive limiting sets of all bounded solutions,
i.e., any bounded solution converges to $E$ as $t\to+\infty$.
\end{theorem}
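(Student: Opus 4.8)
The plan is to exploit the monotonicity of $V$ along trajectories together with the structure of the positive limiting set established in Lemma~1. First I would observe that since $V \ge 0$ is bounded below and $\dot V(x(t)) \le 0$ holds along any solution, the scalar function $t \mapsto V(x(t))$ is non-increasing and bounded below, and therefore converges to some limit $c \ge 0$ as $t \to +\infty$. This monotone convergence is what ties the pointwise behaviour of $V$ to the asymptotic structure of the orbit.

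Next, for a bounded solution $x(t)$, Lemma~1 guarantees that its positive limiting set $\Gamma^+$ is non-empty, compact, and invariant. I would then show that $V$ takes the constant value $c$ on $\Gamma^+$: for any $y \in \Gamma^+$ there is, by definition of the positive limiting set, a sequence of times $t_n \to +\infty$ with $x(t_n) \to y$, and by continuity of $V$ (condition (i) in the definition of a Lyapunov function) this yields $V(y) = \lim_n V(x(t_n)) = c$. Hence $V \equiv c$ throughout $\Gamma^+$.

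The main step, and the one I expect to be the crux, is upgrading this to $\dot V = 0$ on all of $\Gamma^+$. Here I would invoke invariance: if $y \in \Gamma^+$, then by the invariance asserted in Lemma~1 the entire orbit $\O(y)$ remains inside $\Gamma^+$, and since $V$ is constant on $\Gamma^+$, its restriction to this orbit is constant in time, which forces $\dot V = 0$ along it. Thus every point of $\Gamma^+$ satisfies $\dot V = 0$; combined with the fact that $\Gamma^+$ is itself invariant, this shows $\Gamma^+$ is a union of orbits on which $\dot V$ vanishes, i.e., $\Gamma^+ \subseteq E = \{\O(x(t)) \mid \dot V(x(t)) = 0\}$.

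Finally, with $\Gamma^+ \subseteq E$ in hand, Lemma~2 applies directly and gives that the bounded solution $x(t)$ converges to the set $E$ as $t \to +\infty$, which is the claimed conclusion. The delicate point throughout is the transition from the pointwise limit of $V(x(t))$ to the statement $\dot V = 0$ on the limiting set; once the continuity argument fixes $V \equiv c$ on $\Gamma^+$, invariance converts this into the vanishing of $\dot V$ and the containment in $E$ essentially for free.
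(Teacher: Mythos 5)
Your proof is correct and follows essentially the same route as the paper's: monotone convergence of $V(x(t))$ to a limit, continuity forcing $V$ to be constant on the positive limiting set $\Gamma^+$, invariance of $\Gamma^+$ converting this into $\dot V = 0$ there, and Lemma~2 yielding convergence to $E$. The only difference is that you spell out the sequence-of-times and orbit-restriction details that the paper leaves implicit.
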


\begin{proof}
Since $V(x(t))$ is monotonically decreasing due to $\dot V \le 0$,
$V(x(t))$ has a limit $V_0\ge0$ as $t \to + \infty$ for any bounded
solution $x(t)$. Let $\Gamma^+$ be the positive limiting set of
$x(t)$. By continuity, the value of $V$ on $\Gamma^+$ must be $V_0$.
Since $\Gamma^+$ is an invariant set, we can take the time derivative
of $V$ to conclude $\dot V=0$ on $\Gamma^+$. By Lemma~\ref{lemma:2},
$x(t)$ will converge to $\Gamma^+$, and hence to $E$.
\end{proof}

\begin{remark}
From the proof above, we can see that the theorem holds for both
real and complex dynamical systems.  Broadly speaking, what has been
proved is that bounded solutions with $\dot V(x)\ne 0$ will converge
to the set of solutions with $\dot V(x)=0$.  Therefore, it does not
matter if $V$ has many points $x$ with $V(x)=0$. For example, for
the quantum system~(\ref{eqn:auto}), the Lyapunov function $V$ is
zero on all points $(\rho_d, \rho_d)$.
\end{remark}

The quantum system~(\ref{eqn:auto}) is autonomous and defined on the
phase space $\M \times \M$, where $\M$ is a compact finite dimensional
manifold.  Therefore, any solution $(\rho(t),\rho_d(t))$ is bounded.
Although the Lyapunov function~(\ref{eq:4a}) is not positive definite,
we have $V=0$ if and only if $\rho=\rho_d$, which is sufficient to 
apply the LaSalle invariance principle~\ref{thm:lasalle:1} to obtain:

\begin{theorem}
\label{thm:lasalle:2}
Any system evolution $(\rho(t),\rho_d(t))$ under the Lyapunov
control (\ref{eqn:5}) will converge to the invariant set
$E=\{(\rho_1,\rho_2)\in \mathcal{M} \times
\mathcal{M}|\dot{V}(\rho(t),\rho_d(t))=0,
(\rho(0),\rho_d(0))=(\rho_1,\rho_2)\}$.
\end{theorem}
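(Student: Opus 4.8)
Let me look at what needs to be verified.

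The theorem being proven (thm:lasalle:2) claims: any trajectory of the Lyapunov-controlled system converges to the invariant set E.

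Let me sketch my approach.The plan is to verify that the autonomous system~(\ref{eqn:auto}) together with the candidate Lyapunov function~(\ref{eq:4a}) satisfies every hypothesis of Theorem~\ref{thm:lasalle:1}, so that the conclusion follows by direct application. The three things I would check, in order, are: boundedness of all solutions, the sign condition $\dot V \le 0$, and the (slightly weakened) positivity property needed to invoke the invariance principle. Each of these is already assembled in the surrounding text, so the proof is mainly a matter of collecting and invoking them correctly.

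First I would establish boundedness. Since $\M$ is a compact finite-dimensional flag manifold (as noted after~(\ref{eqn:auto})), the product $\M\times\M$ is compact, and hence \emph{every} solution $(\rho(t),\rho_d(t))$ is automatically bounded. This is the cleanest step and requires nothing beyond compactness. Second, I would recall that with the feedback choice~(\ref{eqn:5}) the computation preceding~(\ref{eqn:5}) gives
\[
  \dot V(\rho,\rho_d) = -f(\rho,\rho_d)\,\Tr(\rho_d[-iH_1,\rho]) = -\kappa\,\bigl[\Tr(\rho_d[-iH_1,\rho])\bigr]^2 \le 0,
\]
so condition~(iii) holds globally on $\M\times\M$. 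Third, regarding positivity: the Lyapunov function~(\ref{eq:4a}) is not positive definite in the strict sense of condition~(ii), since $V=0$ on the entire diagonal $\{(\rho_d,\rho_d)\}$ rather than at a single point. However, as the Remark after Theorem~\ref{thm:lasalle:1} explicitly notes, the invariance principle only requires $\dot V \le 0$ together with boundedness; the conclusion is about convergence to the set $\{\dot V = 0\}$, and the location of the zeros of $V$ itself is irrelevant. I would therefore cite that Remark to justify that the weaker property ``$V\ge 0$ with $V=0$ iff $\rho=\rho_d$'' suffices here.

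With these three points in hand, Theorem~\ref{thm:lasalle:1} applies verbatim to the system~(\ref{eqn:auto}) on $\Omega=\M\times\M$, yielding that every (bounded) trajectory converges to the invariant set $E=\{\O(x(t))\mid \dot V(x(t))=0\}$, which is exactly the set written in the statement once one unpacks $x=(\rho,\rho_d)$. The main subtlety, and the only place where care is genuinely needed, is the mismatch between the strict positive-definiteness demanded in condition~(ii) and the degenerate zero-set of our $V$; I expect this to be the one step a careful referee would scrutinize, and it is resolved entirely by the Remark, which observes that the LaSalle machinery never uses the precise zero locus of $V$ but only monotonicity of $V$ along trajectories and the characterization of the limit set through $\dot V = 0$. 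Everything else is routine bookkeeping, so I would keep the written proof correspondingly short.
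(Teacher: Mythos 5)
Your proposal is correct and follows essentially the same route as the paper: the paper likewise justifies Theorem~\ref{thm:lasalle:2} by noting that $\M\times\M$ is compact (hence all solutions are bounded), that $\dot V\le 0$ under the feedback~(\ref{eqn:5}), and that the failure of strict positive-definiteness of $V$ on the diagonal is harmless by the Remark following Theorem~\ref{thm:lasalle:1}, then applying that theorem directly. Your write-up merely makes explicit the computation $\dot V = -\kappa\bigl[\Tr(\rho_d[-iH_1,\rho])\bigr]^2\le 0$ that the paper performs earlier when deriving the control law.
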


We note here that except when $\rho_d$ is a stationary state, we must 
consider the dynamical system on the extended phase space $\M\times\M$ 
as $V$ is \emph{not} well-defined on $\M$.  Having established 
convergence to the LaSalle invariant set $E$, the next step is to 
characterize $E$ for the dynamical system~(\ref{eqn:auto}).

\subsection{Characterization of the LaSalle Invariant Set}
\label{subsec:charE}

LaSalle's invariance principle reduces the convergence analysis to
calculating the invariant set $E=\{\dot V(\rho(t),\rho_d(t))=0\}$, which
is equivalent to $f(t)=0$, for any $t$.  Therefore, we have
\begin{equation*}
\begin{aligned}
0 &= f      = \Tr([-iH_1,\rho]\rho_d)\\
0 &= \dot f = \Tr([-iH_1,\rho]\dot \rho_d)+\Tr([-iH_1,\dot \rho]\rho_d)\\
  & \qquad   -\Tr([[-iH_0,-iH_1],\rho]\rho_d)\\
  & \cdots \\
0 &= \frac{d^{\ell}f}{dt^\ell}=(-1)^n \Tr([\Ad^\ell_{-iH_0}(-iH_1),\rho]\rho_d),
\end{aligned}
\end{equation*}
where $\Ad^\ell_{-iH_0}(-iH_1)$ represents $\ell$-fold commutator adjoint
action of $-iH_0$ on $-iH_1$.  Hence,
$\Tr([A,B]C)=-\Tr([C,B]A)=-\Tr([A,C]B)$ gives a necessary condition for
the invariant set $E$:
\begin{equation}
\label{eq:trace-cond1}
 \Tr([\rho,\rho_d]\Ad^m_{-iH_0}(-iH_1))=0, \qquad \forall m\in\NN_0,
\end{equation}
where $\Ad^0_{-iH_0}(-iH_1)=-iH_1$.  Since $H_0$ is Hermitian we can choose
a basis such that $H_0$ is diagonal
\begin{equation*}
H_0=
\begin{pmatrix}
  a_1 & 0   & \ldots & 0\\
  0   & a_2 & \ldots & 0\\
  \vdots   &     & \ddots & \vdots\\
  0   & 0   & \ldots & a_n
\end{pmatrix}
\equiv \diag(a_1,\ldots,a_n)
\end{equation*}
with real eigenvalues $a_k$, which we may assume to be arranged so that
$a_k\ge a_{k+1}$ for all $k$. Let $(b_{k\ell})$ be the matrix
representation of $H_1$ in the eigenbasis of $H_0$, and
$\omega_{k\ell}=a_\ell-a_k$ be the transition frequency between energy
levels $k$ and $\ell$ of the system.

The Lie algebra $\su(n)$ can be decomposed into an abelian part called the
Cartan subalgebra $\C=\Span\{\lambda_k\}_{k=1}^{n-1}$, and an orthogonal
subalgebra $\T$, which is a direct sum of $n(n-1)/2$ root spaces spanned
by pairs of generators  $\{\lambda_{k\ell},\bar{\lambda}_{k\ell}\}$.
For instance, we can choose the generators
\begin{subequations}
\label{eq:lambda}
\begin{align}
 \lambda_k            &=  i(\hat{e}_{kk} - \hat{e}_{k+1,k+1}) \\
 \lambda_{k\ell}      &=  i(\hat{e}_{k\ell}+\hat{e}_{\ell k})\\
 \bar{\lambda}_{k\ell}&=   (\hat{e}_{k\ell}-\hat{e}_{\ell k})
\end{align}
\end{subequations}
for $1\le k<\ell\le n$, where the $(k,\ell)^{\rm th}$ entry of
the elementary matrix $\hat{e}_{mn}$ equals $\delta_{km}\delta_{\ell n}$.
Expanding $-iH_1\in\su(n)$ with respect to these generators
\begin{equation}
 \label{eq:H1}
  -i H_1 = \sum_{k=1}^{n-1}\left[ b_k \lambda_k + \sum_{\ell=k+1}^n
           -\Re(b_{k\ell}) \lambda_{k\ell} + \Im(b_{k\ell}) \bar{\lambda}_{k\ell}\right]
\end{equation}
and noting that we have for $D=\sum_{k=1}^n d_k\hat{e}_{kk}$
\begin{subequations}
\label{eq:lambda_comm}
\begin{align}
  [D,\lambda_k]             &= 0,\\
  [D,\lambda_{k\ell}]       &= +i(d_k-d_\ell) \bar{\lambda}_{k\ell}, \\
  [D,\bar{\lambda}_{k\ell}] &= -i(d_k-d_\ell) \lambda_{k\ell},
\end{align}
\end{subequations}
shows that $B_m=\Ad_{-iH_0}^m(-iH_1)$ is equal to
\begin{subequations}
\label{eq:Bm}
\begin{align}
  B_{2m-1} &= \sum_{k=1}^{n-1}\sum_{\ell=k+1}^n \!\!\!(-1)^m \omega_{k\ell}^{2m-1}
                       [\Re(b_{k\ell}) \bar{\lambda}_{k\ell}
                        +\Im(b_{k\ell}) \lambda_{k\ell}], \\
  B_{2m}   &= \sum_{k=1}^{n-1}\sum_{\ell=k+1}^n (-1)^m \omega_{k\ell}^{2m}
                       [\Re(b_{k\ell}) \lambda_{k\ell}
                        -\Im(b_{k\ell}) \bar{\lambda}_{k\ell}].
\end{align}
\end{subequations}
Let$\B^s=\Span\{B_m\}_{m=1}^s$ and $\B_0^s=\Span\{B_m\}_{m=0}^s$
with $B_0=-iH_1$.  Then Eq~(\ref{eq:trace-cond1}) is equivalent to
$[\rho,\rho_d]$ being orthogonal to the subspace $\B_0^s$ with
respect to the Hilbert-Schmidt norm.

\begin{theorem}
\label{thm:lasalle:3}
The subspace $\B^{n^2-n}$ generated by the Ad-brackets is a subset of
the Cartan subalgebra $\T$ of $\su(n)$ with equality if
\begin{itemize}
\item[(i)] $H_0$ is strongly regular, i.e.,
           $\omega_{k\ell}\neq \omega_{pq}$ unless $(k,\ell)=(p,q)$.
\item[(ii)] $H_1$ is fully connected, i.e., $b_{k\ell}\neq 0$ except
            (possibly) for $k=\ell$.
\end{itemize}
\end{theorem}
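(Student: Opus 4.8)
The plan is to prove the inclusion directly from the explicit expansion~(\ref{eq:Bm}) and then reduce the asserted equality to a statement of linear independence that I will settle with a Vandermonde-type root count. Throughout I take $\T=\Span\{\lambda_{k\ell},\bar\lambda_{k\ell}:1\le k<\ell\le n\}$, the direct sum of the $n(n-1)/2$ root planes. The inclusion $\B^{n^2-n}\subseteq\T$ is immediate: formula~(\ref{eq:Bm}) exhibits every $B_m$ with $m\ge 1$ as a combination of the root generators $\lambda_{k\ell},\bar\lambda_{k\ell}$ alone, with no component along the Cartan generators $\lambda_k$, since the Cartan part of $-iH_1$ in~(\ref{eq:H1}) is annihilated by $\Ad_{-iH_0}$ (because $[H_0,\lambda_k]=0$) and hence contributes nothing once $m\ge 1$. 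Since $\dim\T=n(n-1)=n^2-n$ is exactly the number of brackets $B_1,\dots,B_{n^2-n}$ at our disposal, proving equality amounts to showing these $n^2-n$ vectors are linearly independent.

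The structural observation that drives everything is read off from~(\ref{eq:lambda_comm}): each root plane $V_{k\ell}=\Span\{\lambda_{k\ell},\bar\lambda_{k\ell}\}$ is invariant under $\Ad_{-iH_0}$, which restricts to it as $\omega_{k\ell}J$ with $J^2=-\mathrm{id}$. Writing $v_{k\ell}$ for the projection of $-iH_1$ onto $V_{k\ell}$, the projection of $B_m$ onto $V_{k\ell}$ is therefore $(\omega_{k\ell}J)^m v_{k\ell}=\omega_{k\ell}^m J^m v_{k\ell}$, and hypothesis~(ii) (full connectivity, $b_{k\ell}\ne 0$) is precisely what guarantees $v_{k\ell}\ne 0$, so that $\{v_{k\ell},Jv_{k\ell}\}$ is a basis of $V_{k\ell}$. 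Now suppose $\sum_{m=1}^{n^2-n}c_m B_m=0$. Projecting onto $V_{k\ell}$ and splitting the sum by the parity of $m$ (using $J^{2s}=(-1)^s\mathrm{id}$ and $J^{2s+1}=(-1)^s J$) rewrites the coefficient operator as $E(\omega_{k\ell}^2)\,\mathrm{id}+\omega_{k\ell}\,O(\omega_{k\ell}^2)\,J$, where $E$ and $O$ are the real polynomials assembled from the even- and odd-indexed $c_m$, of degrees at most $p:=n(n-1)/2$ and $p-1$ respectively, with $E(0)=0$ by construction. Independence of $v_{k\ell}$ and $Jv_{k\ell}$ forces $E(\omega_{k\ell}^2)=0$ and $O(\omega_{k\ell}^2)=0$ for every pair.

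The conclusion is then a root count that invokes hypothesis~(i). Strong regularity makes the $p$ frequencies $\omega_{k\ell}$ pairwise distinct, and with the ordering $a_k\ge a_{k+1}$ they are nonzero and of one sign, so the squares $\omega_{k\ell}^2$ are $p$ distinct nonzero values. The polynomial $E$ then vanishes at these $p$ points \emph{and} at $0$, i.e.\ at $p+1$ distinct points, while $\deg E\le p$; hence $E\equiv 0$. Likewise $O$ vanishes at $p$ distinct points while $\deg O\le p-1$; hence $O\equiv 0$. Thus every $c_m=0$, the brackets are independent, and $\B^{n^2-n}=\T$.

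I expect the main obstacle to be the parity bookkeeping, which is the real crux rather than the easy inclusion. One must separate the even and odd pieces cleanly so that the structural relation $E(0)=0$ supplies the extra vanishing condition that makes the degree count close: the even family carries one more coefficient than its degree can absorb once the root at $0$ is accounted for, while the odd family exactly saturates its degree bound. The accompanying routine points that still need verification are that strong regularity really yields $p$ distinct squared frequencies (here the common sign of the $\omega_{k\ell}$ is used) and that full connectivity enters only through $v_{k\ell}\ne 0$; keeping these two roles disjoint is what cleanly attributes the equality to conditions~(i) and~(ii) separately.
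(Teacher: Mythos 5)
Your proof is correct, and it handles the one step that requires real work --- decoupling the odd-order and even-order brackets --- by a genuinely different mechanism than the paper. The paper first proves that the spans of the odd and even brackets are \emph{globally} orthogonal in $\su(n)$, via an explicit Hilbert--Schmidt trace computation, and then runs two separate $p\times p$ Vandermonde arguments ($p=n(n-1)/2$), one per parity class. You instead keep all $n^2-n$ coefficients in a single combination and decouple the parities \emph{locally}, inside each two-dimensional root plane $V_{k\ell}$: since $\Ad_{-iH_0}$ restricts there to $\omega_{k\ell}J$ with $J^2=-\mathrm{id}$, and since full connectivity makes $\{v_{k\ell},Jv_{k\ell}\}$ a basis, the even part of the combination lands on $v_{k\ell}$ and the odd part on $Jv_{k\ell}$, yielding the two scalar conditions $E(\omega_{k\ell}^2)=0$ and $\omega_{k\ell}O(\omega_{k\ell}^2)=0$ per pair. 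Your closing root-versus-degree count (with the root of $E$ at $0$ absorbing its extra degree) is the same linear-algebra fact as the paper's Vandermonde invertibility, merely phrased as a polynomial statement. What your route buys is that it dispenses with the trace computation entirely and isolates exactly where each hypothesis enters: (ii) only through $v_{k\ell}\neq 0$, (i) only through distinctness of the squares $\omega_{k\ell}^2$. One small correction: the nonvanishing of the $\omega_{k\ell}$ does not follow from the ordering $a_k\ge a_{k+1}$; it follows from strong regularity itself (compare $\omega_{k\ell}$ with $\omega_{\ell k}=-\omega_{k\ell}$, or with $\omega_{kk}=0$). The ordering is needed only so that distinct frequencies have distinct squares --- a point on which the paper's Vandermonde step also tacitly relies.
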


\begin{proof}
Since the dimension of $\T$ is $n^2-n$ and $B_m\in\T$ for all $m>0$,
it suffices to show that the elements $B_m$ for $m=1,\ldots,n^2-n$ are
linearly independent.  Moreover, the subspaces spanned by the odd and
even order elements, $\B_{\rm odd}^s=\Span\{B_{2m-1}: 1\le 2m-1\le s\}$
and $\B_{\rm even}^s = \Span\{B_{2m}: 1\le 2m\le s\}$, respectively, are
orthogonal since
\begin{multline*}
 B_{2m-1} B_{2m'}
 = (-1)^{m+m'}\sum_{k,\ell}\sum_{k',\ell'}
   \omega_{k\ell}^{2m-1}
   \omega_{k'\ell'}^{2m'} \times \\
  [\Re(b_{k\ell}) \Re(b_{k'\ell'}) \bar{\lambda}_{k\ell} \lambda_{k'\ell'}
  -\Im(b_{k\ell}) \Im(b_{k'\ell'}) \lambda_{k\ell} \bar{\lambda}_{k'\ell'}\\
  -\Re(b_{k\ell}) \Im(b_{k'\ell'}) \bar{\lambda}_{k\ell}\bar\lambda_{k'\ell'}
  +\Im(b_{k\ell}) \Re(b_{k'\ell'}) \lambda_{k\ell} \lambda_{k'\ell'}],
\end{multline*}
and thus observing the equalities
\begin{subequations}
\begin{gather}
  \Tr(\lambda_{k\ell}\lambda_{k'\ell'})
  = \Tr(\bar\lambda_{k\ell}\bar\lambda_{k'\ell'})
  = -2\delta_{kk'}\delta_{\ell\ell'} \\
   \Tr(\lambda_{k\ell}\bar\lambda_{k'\ell'}) =0
\end{gather}
\end{subequations}
shows that for all $m,m'>0$
\begin{multline*}
 \Tr(B_{2m-1} B_{2m'})
 = (-1)^{m+m'}\sum_{k,\ell}\sum_{k',\ell'}
   \omega_{k\ell}^{2m-1} \omega_{k'\ell'}^{2m'} \times \\
   \Re(b_{k\ell}) \Im(b_{k\ell})[-\bar{\lambda}_{k\ell}^2+\lambda_{k\ell}^2]=0.
\end{multline*}
Thus it suffices to show that the elements of $\B_{\rm odd}^{n^2-n}$
and $\B_{\rm even}^{n^2-n}$ are linearly independent separately.

For the odd terms, suppose there exists a vector
$\vec{c}=(c_1,\ldots,c_s)^T$ of length $s=n(n-1)/2$ such that
$\sum_{m=1}^s c_m B_{2m-1}=0$.  Noting that $\omega_{kk}=0$ and
$(\omega_{\ell k})^2=(-\omega_{k\ell})^2$ this gives $n(n-1)/2$
non-trivial equations
\begin{equation}
\label{eq:lin_indep}
  \omega_{k\ell} [\Re(b_{k\ell}) \bar{\lambda}_{k\ell}
                        +\Im(b_{k\ell}) \lambda_{k\ell}] \;
  \sum_{m=1}^{s} (-\omega_{k\ell}^2)^{m-1} c_m=0,
\end{equation}
for $1\le k<\ell\le n$.  Since $\omega_{k\ell}\neq 0$,
$b_{k\ell}\neq0$ by hypothesis, Eq.~(\ref{eq:lin_indep}) can be
reduced to $\Omega \vec{c}=\vec{0}$, where $\Omega$ is a matrix:
\begin{equation}
  \label{eq:vandermonde}
  \Omega = \begin{pmatrix}
    1 & -\omega_{12}^2 & \omega_{12}^4 & \ldots & (-\omega_{12}^2)^{m-1} \\
    1 & -\omega_{13}^2 & \omega_{13}^4 & \ldots & (-\omega_{13}^2)^{m-1} \\
    \vdots & \vdots & \vdots & \ddots & \vdots \\
    1 & -\omega_{n-1,n}^2 & \omega_{n-1,n}^4 & \ldots & (-\omega_{n-1,n}^2)^{m-1} \\
      \end{pmatrix}.
\end{equation}
Since $\Omega$ is a Vandermonde matrix, condition~(ii) of the
proposition guarantees that Eq.~(\ref{eq:lin_indep}) has only the
trivial solution $\vec{c}=\vec{0}$, thus establishing linear
independence.  For the even terms we obtain a similar system of
equations, which completes the proof.
\end{proof}

If $\B^{n^2-n}=\T$ then any point $(\rho_1,\rho_2)$ in the invariant
set $E$ must satisfy $[\rho_1,\rho_2]=\diag(c_1,\ldots,c_n)$.
Furthermore, $\Tr(-iH_1 [\rho_1,\rho_2])=0$ yields in addition
\begin{equation}
  -i\sum_{k,\ell=1}^n b_{k\ell}c_{\ell k}=-i\sum_{k=1}^n b_{kk}c_{kk}=0.
\end{equation}
However, in many applications the energy level shifts induced by the
field are negligible, and we can assume the diagonal elements of $H_1$ 
to be zero.  With this additional assumption we have $\B_0^s\subset\T$, 
and thus the maximum dimension of $\B_0^s$ is $n^2-n$, and we have the 
following useful result.
\begin{theorem}
\label{thm:lasalle:4} 
Under conditions (i) and (ii) of Theorem~\ref{thm:lasalle:3}
$(\rho_1,\rho_2)$ belongs to the invariant set $E$ if and only if
$[\rho_1,\rho_2]=\diag(c_1,\ldots,c_n)$.
\end{theorem}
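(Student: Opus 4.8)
The plan is to establish the two inclusions by reducing membership in $E$ to the orthogonality conditions already derived and then invoking Theorem~\ref{thm:lasalle:3}. On $E$ the control vanishes identically, $f\equiv 0$, so the joint evolution~(\ref{eqn:auto}) collapses to free motion, $\rho(t)=U(t)\rho_1U(t)^\dagger$ and $\rho_d(t)=U(t)\rho_2U(t)^\dagger$ with $U(t)=e^{-iH_0t}$. Writing the feedback as $f=\Tr(-iH_1[\rho,\rho_d])$ and conjugating, I would expand
\[
  f(t)=\Tr\!\big(U(t)^\dagger(-iH_1)U(t)\,[\rho_1,\rho_2]\big)
      =\sum_{m\ge 0}\frac{(-t)^m}{m!}\,\Tr\big(B_m\,[\rho_1,\rho_2]\big),
\]
using $U^\dagger(-iH_1)U=\sum_m\frac{(-t)^m}{m!}B_m$ with $B_m=\Ad^m_{-iH_0}(-iH_1)$. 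Since this power series converges for all $t$ and represents $f$ exactly, $f\equiv 0$ holds if and only if every coefficient vanishes, i.e. $[\rho_1,\rho_2]$ is orthogonal to each $B_m$; as all $B_m$ lie in the $(n^2-n)$-dimensional span of $\B_0^{n^2-n}$, this is equivalent to $[\rho_1,\rho_2]\perp\B_0^{n^2-n}$, recovering~(\ref{eq:trace-cond1}).

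For the forward implication, the vanishing-diagonal hypothesis on $H_1$ places $B_0=-iH_1$ in $\T$, so together with $B_m\in\T$ for $m>0$ we get $\B_0^{n^2-n}\subseteq\T$; Theorem~\ref{thm:lasalle:3} then upgrades this to $\B_0^{n^2-n}=\T$. As $\rho_1,\rho_2$ are Hermitian, $[\rho_1,\rho_2]$ is traceless and anti-Hermitian, hence an element of $\su(n)=\C\oplus\T$, and the orthogonal splitting forces $[\rho_1,\rho_2]\in\C$; since $\C$ consists precisely of the purely imaginary traceless diagonal matrices, this reads $[\rho_1,\rho_2]=\diag(c_1,\ldots,c_n)$.

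For the converse I would verify that this algebraic condition, imposed only at the initial point, actually persists along the (free) flow. Because $H_0$ is diagonal, $U(t)$ is diagonal and therefore commutes with $\diag(c_1,\ldots,c_n)$, so $[\rho(t),\rho_d(t)]=U(t)[\rho_1,\rho_2]U(t)^\dagger=[\rho_1,\rho_2]$ stays constant and diagonal for all $t$. Substituting into $f=\Tr(-iH_1[\rho,\rho_d])$ and using $b_{kk}=0$ gives $f(t)=-i\sum_k b_{kk}c_k=0$ for all $t$, whence $\dot V\equiv0$ and $(\rho_1,\rho_2)\in E$.

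The step I expect to be the crux is precisely this converse: membership in $E$ is a statement about the entire trajectory, whereas diagonality of $[\rho_1,\rho_2]$ is a single-point condition, and one must check that the infinite family of derivative constraints collapses to it with no loss. The key observation that makes this work is that free evolution under a \emph{diagonal} $H_0$ leaves a diagonal commutator invariant; this, together with the zero-diagonal assumption on $H_1$ (which both guarantees $\B_0^{n^2-n}=\T$ and annihilates the residual term $\Tr(-iH_1[\rho_1,\rho_2])$), is exactly what turns the necessary condition into a sufficient one.
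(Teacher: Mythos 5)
Your proof is correct and follows essentially the same route as the paper: necessity via the vanishing of $f$ and all its time derivatives (your Taylor-coefficient packaging of $\Tr(B_m[\rho_1,\rho_2])=0$ is just a restatement of the paper's successive-differentiation argument) combined with Theorem~\ref{thm:lasalle:3} and the zero-diagonal assumption on $H_1$, and sufficiency via invariance of a diagonal commutator under the diagonal free evolution $e^{-iH_0t}$. If anything, you are slightly more explicit than the paper in noting that a diagonal commutator kills the residual term $\Tr(-iH_1[\rho_1,\rho_2])$ and that the free trajectory is self-consistently the closed-loop trajectory, but the substance is identical.
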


\begin{proof}
We have proved the necessary part.  For the sufficient part note
that $\rho_k(t)=e^{-iH_0t}\rho_k e^{iH_0t}$, $k=1,2$,
\[
 [e^{-iH_0t}\rho_1 e^{iH_0t},e^{-iH_0t}\rho_2e^{iH_0t}]
  = e^{-iH_0t}[\rho_1,\rho_2] e^{iH_0t}
\]
and $e^{-iH_0 t}$ diagonal.  Thus if $[\rho_1,\rho_2]=
\diag(c_1,\ldots,c_n)$ then
\(
  e^{-iH_0t}[\rho_1,\rho_2] e^{iH_0t} =
 \diag(c_1,\ldots,c_n) = [\rho_1,\rho_2]
\)
and hence $(\rho_1,\rho_2)\in E$.
\end{proof}

Thus we have fully characterized the invariant set for systems with 
strongly regularly $H_0$ and an interaction Hamiltonian $H_1$ with a 
fully connected transition graph.  The result also shows that even 
under the most stringent assumptions about the system Hamiltonians,
the invariant set is generally much larger than the desired solution.
Therefore, the invariance principle alone is not sufficient to
establish convergence to the target state.  

\section{Critical Points of the Lyapunov Function}
\label{sec:critical}

In this section we show that invariant set $E$ always contains at least
the critical points of the Lyapunov function $V$ and classify the
stability of the critical points.  We start with the case where $\rho_d$
is a fixed stationary state.  In this case the Lyapunov function
$V(\rho,\rho_d)$ is effectively a function $V(\rho)$ on $\M$.  Since
$\rho$ can be written as $\rho=U\rho_d U^\dagger$ for some $U$ in the
special unitary group $\SU(n)$, $V$ also be considered a function on 
$\SU(n)$, $V(U)=V(U\rho_dU^\dagger\rho_d)$.  It is easy to see that the
critical points of $V(\rho)$ correspond to those of $V(U)$, and since
$\Tr[\rho_d(t)]^2=C$ is constant, it is equivalent to find the critical
points of $J(U)$:
\begin{align}
  \label{eq:J}
  J(U) &=C-V(U) \nonumber\\
  &= \Tr(U \rho_d U^\dagger \rho_d), \quad U\in\SU(n).
\end{align}

\begin{lemma}
\label{lemma:crit:1} 
The critical points $U_0$ of $J(U)$ defined by~(\ref{eq:J}) are such
that $[\rho_0,\rho_d]=0$ for $\rho_0=U_0\rho_d U_0^\dagger$.
\end{lemma}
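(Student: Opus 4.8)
The plan is to compute the critical points of $J(U) = \Tr(U\rho_d U^\dagger \rho_d)$ by a first-order variation argument on the Lie group $\SU(n)$. Since $\SU(n)$ is a Lie group with Lie algebra $\su(n)$, I would consider a smooth curve $U(s)$ through $U_0$ at $s=0$ of the form $U(s) = e^{sA}U_0$ for an arbitrary $A\in\su(n)$, so that the tangent direction at $U_0$ is parametrized by $A$. The point $U_0$ is critical precisely when $\frac{d}{ds}J(U(s))\big|_{s=0}=0$ for every choice of $A$.

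First I would carry out the differentiation. Writing $\rho_0 = U_0\rho_d U_0^\dagger$, I have $J(U(s)) = \Tr(e^{sA}\rho_0 e^{-sA}\rho_d)$, and differentiating at $s=0$ gives
\begin{equation*}
  \frac{d}{ds}J(U(s))\Big|_{s=0} = \Tr(A\rho_0\rho_d - \rho_0 A\rho_d)
                                 = \Tr(A[\rho_0,\rho_d]).
\end{equation*}
The criticality condition is therefore $\Tr(A[\rho_0,\rho_d])=0$ for all $A\in\su(n)$. The next step is to conclude from this that $[\rho_0,\rho_d]=0$. The commutator $[\rho_0,\rho_d]$ of two Hermitian operators is anti-Hermitian and traceless, hence $[\rho_0,\rho_d]\in\su(n)$ (after the usual identification, since $\su(n)$ consists of traceless anti-Hermitian matrices). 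Because the Hilbert--Schmidt form $\langle A,B\rangle = \Tr(A^\dagger B)$ restricts to a nondegenerate inner product on $\su(n)$, an element orthogonal to all of $\su(n)$ must vanish. Taking $A = [\rho_0,\rho_d]^\dagger = -[\rho_0,\rho_d]$ (which lies in $\su(n)$) in the criticality condition yields $\Tr([\rho_0,\rho_d]^\dagger[\rho_0,\rho_d]) = \norm{[\rho_0,\rho_d]}^2 = 0$, forcing $[\rho_0,\rho_d]=0$.

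The main subtlety to watch is the constraint that $U\in\SU(n)$ rather than the full unitary group $U(n)$: the admissible variations $A$ must be traceless, so the criticality condition only asserts orthogonality to the \emph{traceless} anti-Hermitian matrices. This is exactly why I need $[\rho_0,\rho_d]$ itself to be traceless for the argument to close, and indeed any commutator is automatically traceless, so the restriction to $\su(n)$ costs nothing here. I expect this to be the only real obstacle; the rest is a routine variational computation. A minor point is that this argument establishes the stated \emph{necessary} condition (critical points satisfy $[\rho_0,\rho_d]=0$); the converse and the finer classification of which such points are maxima, minima, or saddles are presumably handled in the stability analysis that follows, so I would not pursue them here.
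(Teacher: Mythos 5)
Your proof is correct and follows essentially the same route as the paper: both parameterize unitaries near $U_0$ via the exponential map as $e^{sA}U_0$ (the paper writes $e^{\vec{x}\cdot\vec{\sigma}}U_0$ with an explicit orthonormal basis of $\su(n)$), differentiate $J$ to obtain $\Tr(A[\rho_0,\rho_d])=0$ for all admissible directions, and conclude $[\rho_0,\rho_d]=0$ from nondegeneracy of the Hilbert--Schmidt pairing on $\su(n)$. Your coordinate-free phrasing, with the explicit choice $A=-[\rho_0,\rho_d]$, just makes the final nondegeneracy step more transparent than the paper's basis-expansion version.
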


\begin{proof}
Let $\{\sigma_m\}_{m=1}^{n^2-1}$ be an orthonormal basis for the Lie
algebra $\su(n)$, consisting of $n^2-n$ orthonormal off-diagonal
generators such as $\frac{1}{\sqrt{2}}\lambda_{k\ell}$,
$\frac{1}{\sqrt{2}}\bar{\lambda}_{k\ell}$ with
$\lambda_{k\ell},\bar{\lambda}_{k\ell}$ as in Eq.~(\ref{eq:lambda}), and
$n-1$ orthonormal diagonal generators
\begin{equation}
 \sigma_{n^2-n+r} = \frac{i}{\sqrt{r(r+1)}}
                     \left(\sum_{s=1}^{r} \hat{e}_{ss}
                     -r\hat{e}_{r+1,r+1} \right)
\end{equation}
for $r=1,\ldots,n-1$.  Set
$\vec{\sigma}=(\sigma_1,\ldots,\sigma_{n^2-1})$.  Any $U\in\SU(n)$ near
the identity $I$ can be written as $U=e^{\vec x\cdot\vec\sigma}$, where
$\vec{x}\in\RR^n$ is the coordinate of $U$, and any $U$ in the 
neighborhood of $U_0$ can be parameterized as 
$U=e^{\vec{x}\cdot\vec\sigma}U_0$.  Thus Eq.~(\ref{eq:J}) becomes
\begin{equation}
\label{eq:J:local} J = \Tr[ (e^{\xs} U_0) \rho_d (U_0^\dagger e^{-\xs})
  \rho_d].
\end{equation}
At the critical point $U_0$, $\nabla J=0$ implies that for all $m$
\begin{align}
 0= \frac{\partial J}{\partial x_m}
  &= \Tr(\sigma_m U_0\rho_d U_0^\dagger \rho_d
     -U_0\rho_d U_0^\dagger\sigma_m\rho_d)\nonumber\\
  &= \Tr[\sigma_m(U_0\rho_d U_0^\dagger \rho_d
     -\rho_d U_0\rho_d U_0^\dagger )]\nonumber\\
  &= \Tr(\sigma_m[U_0\rho_d U_0^\dagger, \rho_d]).
\end{align}
Thus $[U_0\rho_d U_0^\dagger,\rho_d]\in\su(n)$ is orthogonal to all
basis elements $\sigma_m$, and therefore
$[U_0\rho_d U_0^\dagger,\rho_d]=0$.
\end{proof}

Hence, for a given $\rho_d$, the critical points of $V(\rho)$ are such
that $[\rho,\rho_d]=0$, i.e., $\rho$ and $\rho_d$ are simultaneously
diagonalizable.  Let $\{w_1,\ldots,w_n\}$ be the spectrum of $\rho_d$
with $w_k$ arranged in a non-increasing order.  For any critical point
$\rho_0$ there thus exists a basis such that
\begin{align*}
\rho_d &=\diag(w_1,\ldots,w_n),\\
\rho_0 &=\diag(w_{\tau(1)},\ldots,w_{\tau(n)}),
\end{align*}
for some permutation $\tau$ of the numbers $\{1,\ldots,n\}$, and the
corresponding critical value of $V$ is
\begin{equation}
\label{eqn:V}
  V(\rho_0,\rho_d) = \sum_{k=1}^n w_k(w_k - w_{\tau(k)}).
\end{equation}

More generally, for $V(\rho_1,\rho_2)$ defined on $\M\times \M$, there 
exists $U_1$ and $U_2$ such that
\begin{align*}
\rho_1 = U_1\rho_d U_1^\dagger, \quad \rho_2 = U_2\rho_dU_2^\dagger.
\end{align*}
Since $\Tr(\rho_2^2)$ is constant, the critical points of $V$ are again
the critical points of $J(\rho_1,\rho_2)=\Tr(\rho_1\rho_2)$ and
\begin{align*}
J(U_1,U_2) &=\Tr\Big( U_1\rho_d U_1^\dagger U_2\rho_dU_2^\dagger\Big)\\
           &=\Tr\Big((U_2^\dagger U_1)\rho_d(U_2^\dagger U_1)^\dagger\rho_d\Big)
\end{align*}
together with Lemma~\ref{lemma:crit:1} shows that $J$ attains its
critical value when  
$[(U_2^\dagger U_1)\rho_d(U_2^\dagger U_1)^\dagger, \rho_d]=0$, and thus
\begin{align*}
 0 &= U_2 [(U_2^\dagger U_1)\rho_d(U_2^\dagger U_1)^\dagger, \rho_d]U_2^\dagger\\
   &= [U_1\rho_d U_1^\dagger,U_2 \rho_d U_2^\dagger] = [\rho_1,\rho_2].
\end{align*}
Thus we have the following:

\begin{theorem}
\label{thm:crit:1}
For a given $\rho_d$, the critical points of the Lyapunov function
$V(\rho_1,\rho_2)$ on $\M \times \M$ are
such that $\{[\rho_1,\rho_2]=0\}$. Therefore, the LaSalle invariant
set contains all the critical points of $V(\rho_1,\rho_2)$.
\end{theorem}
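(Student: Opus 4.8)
The plan is to reduce the computation to Lemma~\ref{lemma:crit:1}, which already disposes of the single--variable problem, and then to verify that the resulting commutativity condition places every critical point inside the invariant set characterised in Section~\ref{subsec:charE}. First I would strip off the quadratic self--terms: writing
\[
 V(\rho_1,\rho_2)=\tfrac12\Tr(\rho_1^2)+\tfrac12\Tr(\rho_2^2)-\Tr(\rho_1\rho_2)
\]
and using that $\rho_1,\rho_2$ are isospectral with $\rho_d$, the first two terms equal the constant $\tfrac12\Tr(\rho_d^2)$ everywhere on $\M\times\M$. Hence the critical points of $V$ coincide with those of $J(\rho_1,\rho_2)=\Tr(\rho_1\rho_2)$. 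Parameterising the factors as $\rho_1=U_1\rho_dU_1^\dagger$ and $\rho_2=U_2\rho_dU_2^\dagger$ and using cyclicity of the trace, $J$ depends on $(U_1,U_2)$ only through $W=U_2^\dagger U_1\in\SU(n)$, namely $J=\Tr(W\rho_dW^\dagger\rho_d)$, which is precisely the functional of Lemma~\ref{lemma:crit:1}.

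Next I would pass from $\M\times\M$ to the group. A point is critical for $V$ exactly when $J$ is stationary under every tangent variation, in particular under variations of $\rho_1$ with $\rho_2$ fixed; these are generated by left--translating $U_1$ by arbitrary elements of $\su(n)$ and, through $W=U_2^\dagger U_1$, realise all tangent directions at $W$. Criticality therefore forces $\nabla_W J=0$, and Lemma~\ref{lemma:crit:1} gives $[W\rho_dW^\dagger,\rho_d]=0$. Conjugating by $U_2$,
\[
 0=U_2[W\rho_dW^\dagger,\rho_d]U_2^\dagger
   =[U_1\rho_dU_1^\dagger,\,U_2\rho_dU_2^\dagger]=[\rho_1,\rho_2],
\]
which establishes the first assertion.

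For the second assertion I would argue directly, reusing the sufficiency step of Theorem~\ref{thm:lasalle:4} rather than its hypotheses (i)--(ii). By cyclicity $f(\rho_1,\rho_2)=\Tr([-iH_1,\rho_1]\rho_2)=\Tr(-iH_1[\rho_1,\rho_2])$, so a critical point with $[\rho_1,\rho_2]=0$ has $f=0$; the controlled flow then coincides with free evolution $\rho_k(t)=e^{-iH_0t}\rho_ke^{iH_0t}$ for $k=1,2$. Since $[\rho_1(t),\rho_2(t)]=e^{-iH_0t}[\rho_1,\rho_2]e^{iH_0t}=0$ for all $t$, the feedback stays identically zero, $\dot V\equiv0$ along the whole trajectory, and the point never leaves $\{\dot V=0\}$. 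Hence $(\rho_1,\rho_2)\in E$, and $E$ contains every critical point of $V$.

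The step I expect to need the most care is the passage in the second paragraph from stationarity of $V$ on the product flag manifold to the full gradient condition $\nabla_WJ=0$ on $\SU(n)$: one must check that the orbit map $(U_1,U_2)\mapsto(\rho_1,\rho_2)$ is a submersion, so that critical points lift to critical points of the pullback, and that varying $U_1$ alone already sweeps out every tangent direction of $U_1\mapsto W=U_2^\dagger U_1$, so that no information is lost in replacing the pair by the single element $W$. The remaining manipulations are routine uses of trace cyclicity and the conjugation identity for commutators.
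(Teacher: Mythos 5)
Your proposal is correct and follows essentially the same route as the paper: reduce $V$ to $J(\rho_1,\rho_2)=\Tr(\rho_1\rho_2)$ using isospectrality, parameterize $\rho_k=U_k\rho_d U_k^\dagger$ so that $J$ depends only on $W=U_2^\dagger U_1$, apply Lemma~\ref{lemma:crit:1}, and conjugate by $U_2$ to obtain $[\rho_1,\rho_2]=0$. Your explicit verification of the second assertion (that $[\rho_1,\rho_2]=0$ forces $f\equiv 0$ since free evolution preserves the vanishing commutator) correctly unpacks what the paper leaves implicit, and you rightly observe that this step needs none of the ideal-Hamiltonian hypotheses (i)--(ii) of Theorem~\ref{thm:lasalle:4}.
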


Next, we show that for a generic stationary state $\rho_d$,
$J(\rho)=\Tr(\rho\rho_d)$, and thus $V(\rho)=V(\rho,\rho_d)$, is a
Morse function~\cite{Morse} on $\M$, i.e., its critical points are
hyperbolic:
\begin{theorem}
\label{thm:crit:2}
If $\rho_d$ has non-degenerate eigenvalues then $J(\rho)$ is a Morse 
function on $\M$.  Moreover, all but two critical points corresponding 
to the global maximum and minimum of $J$, respectively, are saddle 
points with critical values $J_0$ satisfying 
$J_{\rm min}< J_0 < J_{\rm max}$.
\end{theorem}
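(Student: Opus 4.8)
The plan is to compute the Hessian of $J$ at each critical point in local coordinates and read off both its non-degeneracy and its signature. By Lemma~\ref{lemma:crit:1} and Theorem~\ref{thm:crit:1} the critical points are exactly the $\rho_0\in\M$ with $[\rho_0,\rho_d]=0$; since $\rho_d$ has distinct eigenvalues $w_1>\cdots>w_n$, its commutant consists of the matrices diagonal in the eigenbasis of $\rho_d$, so the isospectral critical points are the $n!$ isolated points $\rho_0=\diag(w_{\tau(1)},\ldots,w_{\tau(n)})$ indexed by permutations $\tau\in S_n$, with critical value $J(\rho_0)=\sum_k w_k w_{\tau(k)}$, cf.~(\ref{eqn:V}).

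First I would fix such a $\rho_0$ and use the chart $U=e^{\xs}U_0$ of Lemma~\ref{lemma:crit:1}, so that near $\rho_0$ one has $J(X)=\Tr(e^{X}\rho_0 e^{-X}\rho_d)$ with $X=\xs\in\su(n)$. Expanding $e^{X}\rho_0 e^{-X}=\rho_0+[X,\rho_0]+\tfrac12[X,[X,\rho_0]]+O(X^3)$ and using $[\rho_0,\rho_d]=0$ to annihilate the linear term, the second-order term $Q(X)$, whose signature coincides with that of the Hessian, is
\begin{equation*}
Q(X)=\tfrac12\Tr\big([X,[X,\rho_0]]\rho_d\big)=-\tfrac12\Tr\big([X,\rho_0][X,\rho_d]\big).
\end{equation*}
Passing to the common eigenbasis, where $[X,\rho_0]_{k\ell}=x_{k\ell}(w_{\tau(\ell)}-w_{\tau(k)})$ and $[X,\rho_d]_{k\ell}=x_{k\ell}(w_\ell-w_k)$, and using anti-Hermiticity $x_{\ell k}=-\overline{x_{k\ell}}$, the form diagonalizes into
\begin{equation*}
Q(X)=-\sum_{k<\ell}|x_{k\ell}|^2\,(w_{\tau(\ell)}-w_{\tau(k)})(w_\ell-w_k).
\end{equation*}
The diagonal entries $x_{kk}$ drop out, consistent with the fact that they lie in the kernel of $X\mapsto[X,\rho_0]$ and hence are not tangent to $\M$; the off-diagonal coordinates $\{x_{k\ell}\}_{k<\ell}$ furnish an honest chart on $T_{\rho_0}\M$, and $Q$ above is the genuine Hessian form.

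From this diagonal form both assertions follow. Since the eigenvalues are distinct, every coefficient $(w_{\tau(\ell)}-w_{\tau(k)})(w_\ell-w_k)$ is nonzero, so the Hessian is non-degenerate at every critical point and $J$ is Morse. For the index, note $w_\ell-w_k<0$ for $k<\ell$, so the coefficient of $|x_{k\ell}|^2$ in $Q$ is negative exactly when $\tau(k)<\tau(\ell)$ and positive exactly when $\tau(k)>\tau(\ell)$. Hence $\rho_0$ is a local maximum iff $\tau$ has no inversions, i.e.\ $\tau=\mathrm{id}$ with $J_{\max}=\sum_k w_k^2$, and a local minimum iff every pair is inverted, i.e.\ $\tau(k)=n+1-k$ with $J_{\min}=\sum_k w_k w_{n+1-k}$; any other permutation possesses both an inversion and a non-inversion, giving an indefinite Hessian and hence a saddle. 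Finally, the rearrangement inequality shows $\sum_k w_k w_{\tau(k)}$ is uniquely maximized by $\tau=\mathrm{id}$ and uniquely minimized by the reversal, so every saddle value obeys $J_{\min}<J_0<J_{\max}$.

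The step I expect to be the main obstacle is the Hessian computation of the middle paragraph: one must extract the \emph{intrinsic} Hessian on $T_{\rho_0}\M$ rather than a degenerate pullback, i.e.\ discard the stabilizer directions $x_{kk}$ cleanly, and then translate the sign pattern of the coefficients into the combinatorics of inversions of $\tau$ so that the maximum/minimum/saddle trichotomy emerges. Once the commutator identity $\Tr([X,[X,\rho_0]]\rho_d)=-\Tr([X,\rho_0][X,\rho_d])$ and the diagonalization are in place, the remaining index bookkeeping and the rearrangement-inequality comparison of critical values are routine.
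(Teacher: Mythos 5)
Your proof is correct and follows essentially the same route as the paper's: both work in the exponential chart $U=e^{X}U_0$ around a critical point, expand $J$ to second order, and read off the sign pattern from the swap differences $-(w_{\tau(k)}-w_{\tau(\ell)})(w_k-w_\ell)$, concluding that only $\tau=\mathrm{id}$ and the reversal give definite Hessians. The one place where your write-up is genuinely tighter is that you compute the full quadratic form $Q(X)=-\tfrac12\Tr\bigl([X,\rho_0][X,\rho_d]\bigr)$ and exhibit it as \emph{diagonal} in the off-diagonal coordinates $x_{k\ell}$. The paper instead expands $J$ only along the individual curves $\xs=\lambda_{k\ell}t$ and $\xs=\bar\lambda_{k\ell}t$; definiteness along $n^2-n$ coordinate directions would not by itself exclude a degenerate Hessian if cross terms were present (e.g.\ $x^2+y^2+2xy$ is positive along both axes yet degenerate), so your diagonalization is exactly what rigorously underwrites the paper's direction-by-direction conclusion of hyperbolicity. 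Likewise, your appeal to the rearrangement inequality makes explicit the strict separation $J_{\rm min}<J_0<J_{\rm max}$ of the saddle values, which the paper asserts by inspection. Your handling of the stabilizer directions $x_{kk}$ (kernel of $X\mapsto[X,\rho_0]$, hence not tangent to $\M$) is also the correct way to see that $Q$ restricted to the off-diagonal coordinates is the intrinsic Hessian on $T_{\rho_0}\M$.
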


\begin{proof}
For non-degenerate $\rho_d$, we choose a basis such that
$\rho_d=\diag(w_1,\ldots,w_n)$ with $w_k$ arranged in decreasing order.
Then there are $n!$ critical points satisfying
$\rho_0=\diag(w_{\tau(1)},\ldots,w_{\tau(n)})$, for some permutation
$\tau$, corresponding to the critical value $J(\rho_0) = \sum_{k=1}^n
w_k w_{\tau(k)}$.  Again, we consider
$J=\Tr(\rho\rho_d)=\Tr(U\rho_dU^\dagger\rho_d)$ as a function on
$\SU(n)$.  Let $U_0$ correspond to the critical point $\rho_0$.  As in
the proof of Theorem \ref{thm:crit:1}, any $U$ in the neighborhood of
$U_0$ can again be parameterized as $U=e^{\xs}U_0$. Substituting this
into $J$, we obtain:
\begin{align*}
J &=\Tr[e^{\xs}U_0\rho_d U_0^\dagger e^{-\xs}\rho_d]\\
  &=\Tr[(I+\xs+\frac{1}{2}(\xs)^2) \times U_0\rho_d U_0^\dagger \times \\
  &\qquad (I-\xs+\frac{1}{2}(\xs)^2)\rho_d]+\Theta(|\vec x|^3)\\
  &=\Tr[U_0\rho_d U_0^\dagger\rho_d]+ \frac{1}{2}\Tr[(\xs)^2 U_0\rho_d U_0^\dagger\rho_d]\\
  &\quad +\frac{1}{2}\Tr[U_0\rho_d U_0^\dagger(\xs)^2\rho_d]\\
  &\quad -\Tr[(\xs)U_0\rho_d U_0^\dagger(\xs)\rho_d]+\Theta(|\vec{x}|^3)
\end{align*}

Choosing a curve in $\SU(n)$ passing through $U_0$ such that
$\xs=\lambda_{k\ell}t$, we have
\begin{eqnarray*}
J&=&\Tr[U_0\rho_d U_0^\dagger\rho_d]+ t^2
\{\Tr(\lambda_{k\ell}U_0\rho_d U_0^\dagger
\lambda_{k\ell}^\dagger\rho_d)\\
&&-\Tr(U_0\rho_d U_0^\dagger\rho_d)\} +\Theta(|t|^3).
\end{eqnarray*}
Analogously, choosing a curve in $SU(n)$ passing through $U_0$ such
that $\xs=\bar\lambda_{k\ell}t$, we have
\begin{eqnarray*}
J&=&\Tr[U_0\rho_d U_0^\dagger\rho_d]+ t^2
\{\Tr(\bar\lambda_{k\ell}U_0\rho_d U_0^\dagger
\bar\lambda_{k\ell}^\dagger\rho_d)\\
&&-\Tr(U_0\rho_d U_0^\dagger\rho_d)\} +\Theta(|t|^3).
\end{eqnarray*}

The conjugate action of $\lambda_{k\ell}$ or $\bar\lambda_{k\ell}$
on the critical point $\rho_0=U_0\rho_d U_0^\dagger$ swaps the
$k$-th and $\ell$-th diagonal elements. Since $\rho_d$ is
non-degenerate, any swap $\lambda_{k\ell}$ or $\bar\lambda_{k\ell}$
will either increase or decrease the value of $J=\sum_{k=1}^n w_k
w_{\tau(k)}$, corresponding to a minimum or maximum along that
direction. This holds for all directions of $\lambda_{k\ell}$ and
$\bar\lambda_{k\ell}$, and since the dimensions of $\mathcal{T}$ and
$\mathcal{M}$ are both $n^2-n$, we have found $n^2-n$ independent
directions along which $J$ corresponds to a maximum or minimum. Thus
these $n!$ critical points are all hyperbolic. The maximal critical
value occurs only when $\rho_0=\rho_d$ and the minimal value occurs
only when $w_{\tau(k)}$'s are in an increasing order.  For all other
critical values, there always exists a swap that will increase the
value of $J$ and a swap that will decrease it, showing that they are
saddle points of the $J$.
\end{proof}

\section{Lyapunov Control under an ideal Hamiltonian}
\label{sec:conv_ideal}

In this section we consider the implications of the results of the
previous sections on the convergence behaviour and effectiveness of
Lyapunov control of a quantum system under an ideal Hamiltonian, i.e.,
assuming $H_0$ is strongly regular and $H_1$ is off-diagonal and fully
connected. Without loss of generality we can also assume $H_0\in\su(n)$,
as the identity part of $H_0$ only changes the global phase.  Once the
form of the Hamiltonian is fixed, the LaSalle invariant set $E$ depends
on the target state $\rho_d$ only.  We discuss in detail the two most
important cases when (a) $\rho_d$ is a pseudo-pure state and hence
$\dim(\M)=2n-2$, and when (b) $\rho_d$ is generic and $\dim(\M)=n^2-n$,
and conclude with a brief discussion of degenerate stationary target
states $\rho_d$.

\subsection{Pseudo-pure state control}

In this section we consider the special class of density operators
acting on $\H=\CC^n$ whose spectrum consists of two eigenvalues
$\{w,u\}$ where $u=(1-w)/(n-1)$ occurs with multiplicity $n-1$,
which includes pure states with spectrum $\{1,0\}$.  We first consider
the special case of a two-level system as the results for this case 
can be easily visualized in $\RR^3$ and are useful in the general 
discussion of pure-state control problems for $n$-level systems that
follows.

\subsubsection{Two-level systems}

For a two-level system strong regularity of $H_0$ simply means that the
energy levels are non-degenerate and full connectivity of $H_1$ requires
only $b_{12}\neq 0$, conditions that are satisfied in all but trivial 
cases.  The density operator of a two-level system can be written as
\begin{equation}
 \rho = \frac{1}{2} \left(\sigma_0 + x \sigma_x + y \sigma_y + z \sigma_z\right),
\end{equation}
where $\vec{s}=(x,y,z)\in\RR^3$ and the Pauli matrices are
\begin{equation*}
  \sigma_0 = \begin{bmatrix} 1 & 0 \\ 0 & 1\end{bmatrix},
  \sigma_x = \begin{bmatrix} 0 & 1 \\ 1 & 0\end{bmatrix},
  \sigma_y = \begin{bmatrix} 0 & -i \\ i & 0\end{bmatrix},
  \sigma_z = \begin{bmatrix} 1 & 0 \\ 0 & -1\end{bmatrix}.
\end{equation*}
Noting that $\Tr(\rho^2)=\frac{1}{2}(1+\norm{\vec{s}}^2)$ shows that
in this representation pure states, characterized by $\Tr(\rho^2)=1$,
correspond to points on the surface of the unit sphere $S^2\in\RR^3$,
while mixed states ($\Tr(\rho^2)<1$) correspond to points in the interior.
The vector $\vec{s}$ is often called the Bloch vector of the quantum
state.  Any unitary evolution of $\rho(t)$ under a constant Hamiltonian
corresponds to a rotation of $\vec{s}(t)$ about a fixed axis in $\RR^3$,
and free evolution under $H_0=\diag(a_1,a_2)$ in particular corresponds
to a rotation of the Bloch vector $\vec{s}(t)$ about the $z$-axis.  Thus,
in this special case the path $\vec{s}(t)$ traced out by any Bloch vector
$\vec{s}_0$ evolving under any constant Hamiltonian forms a circle, i.e.,
a closed periodic orbit.

Let $\vec{s}=(x,y,z)$ and $\vec{s}_d=(x_d,y_d,z_d)$ be the Bloch
vectors of $\rho$ and $\rho_d$, respectively.  It is straightforward
to show that $[\rho,\rho_d]$ diagonal implies
\begin{equation}
\label{eq:2D}
  z x_d-xz_d = 0, \qquad
  y z_d-zy_d = 0.
\end{equation}

If (a) $z_d\neq 0$ then $(x,y)= \alpha(x_d,y_d)$ with $\alpha=z/z_d$,
and thus $x^2+y^2+z^2=\alpha^2(x_d^2+y_d^2+z_d^2)$, and the RHS has
to equal $x_d^2+y_d^2+z_d^2$ since $\norm{\vec{s}}=\norm{\vec{s}_d}$.
Thus $\alpha=\pm 1$ and $(x,y,z)=\pm(x_d,y_d,z_d)$, and the
corresponding density operators $\rho,\rho_d$ commute, $[\rho,\rho_d]=0$.

If (b) $z_d=0$ then either (b1) $x_d=0$ and $y_d=0$ or (b2) $z=0$.
In case (b1) we have $\vec{s}_d=(0,0,0)$, i.e., the target state is
the completely mixed state.  Since the completely mixed state forms
a trivial equivalence class under unitary evolution, the invariant
set in this case is $E=\{(\vec{0},\vec{0})\}$.  Case (b2) is more
interesting with the invariant set being
\begin{equation}
 E=\{(\vec{s},\vec{s}_d): z=z_d=0, x^2+y^2=x_d^2+y_d^2\},
\end{equation}
i.e., all pairs of Bloch vectors that lie on a circle of radius
$\norm{\vec{s}_d}$ in the equatorial plane.  Notice that this set is
significantly larger than the set of critical points of $V$, which
consists only of $\{\pm\vec{s}_d\}$.

\begin{figure}
\scalebox{0.75}{\includegraphics{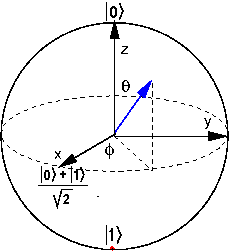}}
\caption{Bloch sphere: There is a one-to-one correspondence
between states $\rho$ of a two-level quantum system and points
inside the Bloch ball.  Pure states correspond to points on the
surface of the Bloch ball, mixed states to points in the interior.
The Bloch vector of a pure state
$\ket{\Psi}=\cos\theta\ket{0}+e^{i\phi} \sin\theta \ket{1}$
is $\vec{s}=(\sin(2\theta)\cos(\phi),\sin(2\theta)\sin\phi,\cos(2\theta))$.}
\label{fig:Bloch}
\end{figure}

Hence, the invariant set $E$ depends on the choice of the target
state $\vec{s}_d(t)$.  Ignoring the trivial case (b1), if
$\vec{s}_d(t)$ is not in the equatorial plane~\footnote{$\vec{s}_d(t)$
is lies in the equatorial plane for any $t$ then it lies in the
equatorial plane for all $t$ since its free evolution corresponds
to a rotation about the $z$-axis.}
then the invariant set is equal to the set of critical points
$\{\pm \vec{s}_d(t)\}$ of $V$, and hence $\vec{s}(t)$ for a given
initial state $\vec{s}(0)$ will converge to either $\vec{s}_d(t)$
or its antipodal point $-\vec{s}_d(t)$.  Furthermore, since
$V(\vec{s}(t),\vec{s}_d(t))$ assumes its (global) maximum for
$\vec{s}(t)=-\vec{s}_d(t)$ and $V$ is non-increasing, $\vec{s}(t)$
will converge to the target trajectory $\vec{s}_d(t)$ for all
initial states $\vec{s}(0)\neq-\vec{s}_d(0)$.

If $\vec{s}_d(t)$ lies in the equatorial plane $z_d=0$ then the
invariant set consists of all points $(\vec{s}(t),\vec{s}_d(t))$
with $z(t)=z_d(t)=0$ and $\norm{\vec{s}(t)}=\norm{\vec{s}_d(t)}$,
which lie on a circle of radius $\norm{\vec{s}_d(0)}$ in the $z=0$
plane, and we can only say that any initial state
$\vec{s}(0)\not\in E$
will converge to a trajectory $\vec{s}_1(t)$ with $z_1(t)=0$ and
$\norm{\vec{s}_1(t)}=\norm{\vec{s}_d(t)}$.
$V(\vec{s}_1(t),\vec{s}_d(t))$ can take any limiting value between
$V_{\rm min}=0$ and $V_{\rm max}=2\norm{\vec{s}_d(0)}^2$ in this
case.  Notice that, although in almost all cases the trajectories
$\vec{s}(t)$ and $\vec{s}_d(t)$ remain a fixed, non-zero distance
apart for all times in this case, this result is consistent with
the results in~\cite{altafini1,altafini2} for the weaker notion of
orbit convergence, since the circle in the equatorial plane in this
case corresponds to the orbit of $\vec{s}_d(t)$ under $H_0$, and
any initial state converges to this set in the sense that the
distance of $\vec{s}(t)$ to \emph{some} point on this circle goes
to zero for $t\to\infty$.

\subsubsection{Pseudo-pure states for $n>2$}

The density operator $\rho$ for a pseudo-pure state in $\CC^n$ with 
spectrum $\{w,u\}$ can be written as
\begin{equation}
\label{eq:pseudo-pure}
  \rho=w \Pi + \frac{1-w}{n-1}\Pi^\perp, \quad 0<w \le 1,
\end{equation}
where $\Pi$ is a rank-1 projector.  Since $\Pi+\Pi^\perp=I$, we have 
$0=[x,\Pi+\Pi^\perp]$ for all $x$, and thus $[x,\Pi^\perp]=-[x,\Pi]$. 
If $\rho_d(0)$ is pseudo-pure, with $\rho_d(0)=w\Pi_0+u\Pi_0^\perp$, 
then for any $(\rho_1,\rho_2)\in E$, $\rho_1$ and $\rho_2$ must also be
pseudo-pure, with the same spectrum $\{w,u\}$, i.e.,
$\rho_k=w\Pi_k+u\Pi_k^\perp$ for $k=1,2$. We have
\begin{align}
 [\rho_1,\rho_2]
 &= w^2[\Pi_1,\Pi_2] + uw[\Pi_1^\perp,\Pi_2] \nonumber\\
 &\qquad\qquad\qquad+ uw[\Pi_1,\Pi_2^\perp] +u^2[\Pi_1^\perp,\Pi_2^\perp] \nonumber\\
 &= w^2[\Pi_1,\Pi_2] - 2 uw[\Pi_1,\Pi_2] +u^2[\Pi_1,\Pi_2] \nonumber\\
 &= (w-u)^2 [\Pi_1,\Pi_2]. \label{eq:pseudo-pure:comm}
\end{align}
Thus the LaSalle invariant set contains all points such that
$M=[\Pi_1,\Pi_2]$ is diagonal. Since $\Pi_k$, $k=0,1,2$, are
rank-$1$ projectors, $\Pi_k=\ket{\Psi_k}\bra{\Psi_k}$, where
$\ket{\Psi_k}$ are unit vectors in $\CC^n$.  Setting
\begin{equation}
\begin{split}
 \ket{\Psi_1} &=(a_1e^{i\alpha_1},\ldots,a_ne^{i\alpha_n})^T \\
 \ket{\Psi_2} &=(b_1e^{i\beta_1},\ldots,b_ne^{i\beta_n})^T
\end{split}
\end{equation}
where $\ket{\Psi_k}$, $k=0,1,2$, are pure states, represented as
unit vectors in $\CC_+^n$. We have
\begin{align*}
M&=[\Pi_1,\Pi_2] \\
&=\ket{\Psi_1}\langle\Psi_1|\Psi_2
\rangle\bra{\Psi_2}-\ket{\Psi_2}\langle\Psi_2|\Psi_1
\rangle\bra{\Psi_1}
\end{align*}
For $(\rho_1,\rho_2)\in E$, we require that all off-diagonal
elements equal to zero, i.e.:
\begin{equation}
\label{eq:Mkl}
  M_{k\ell} = a_k b_\ell e^{i(\alpha_k-\beta_\ell)} \ip{\Psi_1}{\Psi_2}
             -a_\ell b_k e^{i(\beta_k-\alpha_\ell)} \ip{\Psi_2}{\Psi_1}.
\end{equation}
for all $k\ne \ell$. Let $\ip{\Psi_1}{\Psi_2}= r e^{i\theta}$. We
have the following two cases.

(a) $r=0$ i.e $\ip{\Psi_1}{\Psi_2}=0$ or $\rho_1 \perp \rho_2$. In
this case, $[\rho_1,\rho_2]=0$, and
$V(\rho_1,\rho_2)=V_{\rm max}=(w-u)^2$.

(b) If $r\neq 0$ then Eq.~(\ref{eq:Mkl}) together with $M_{kk}=0$
leads to $n(n-1)/2$ non-trivial equations for the population and
phase coefficients, respectively:
\begin{align}
 a_k b_\ell         &= a_\ell b_k, \label{eq:Mkla}\\
 \beta_k+\beta_\ell &= \alpha_k+\alpha_\ell + 2\theta \label{eq:Mklb}.
\end{align}
If $a_k=0$ then $0=a_k b_\ell=a_\ell b_k$ for $\ell\neq k$ and thus
we must have $b_k=0$ as $a_\ell=0$ for all $\ell$ is not allowed as
$\vec{a}$ is a unit vector.  Ditto for $b_k=0$.  Let $I_+$ be the
set of all indices $k$ so that $a_k,b_k\neq0$.  Then the remaining
non-trivial equations for the population coefficients can be rewritten
\begin{equation}
   \frac{a_k}{b_k} = \frac{a_\ell}{b_\ell}, \qquad \forall k,\ell \in I_+
\end{equation}
and thus $\vec{a}=\gamma \vec{b}$ and as $\vec{a}$ and $\vec{b}$
are unit vectors in $\RR_+^n$, $\gamma=1$ and $\vec{a}=\vec{b}$.

As for the phase equations~(\ref{eq:Mklb}), if $a_k=b_k=0$ then
$M_{k\ell}=0$ is automatically satisfied, thus the only non-trivial
equations are those for $k,\ell\in I_+$.  If the set $I_+$ contains
$n_1>2$ indices then taking pairwise differences of the $n_1(n_1-1)/2$
non-trivial phase equations and fixing the global phase of $\ket{\Psi_k}$
by setting $\alpha_{n_1}=\beta_{n_1}=0$ shows that
$\vec{\alpha}=\vec{\beta}$.  For example, suppose $I_+=\{1,2,3\}$
then we have $3$ non-trivial phase equations
\begin{align*}
  \beta_1+\beta_2 & = \alpha_1+\alpha_2 + 2\theta, \\
  \beta_1+\beta_3 & = \alpha_1+\alpha_3 + 2\theta, \\
  \beta_2+\beta_3 & = \alpha_2+\alpha_3 + 2\theta,
\end{align*}
taking pairwise differences gives
\begin{align*}
  \beta_2-\beta_3 & = \alpha_2-\alpha_3,\\
  \beta_1-\beta_3 & = \alpha_1-\alpha_3,\\
  \beta_1-\beta_2 & = \alpha_1-\alpha_2,
\end{align*}
and setting $\alpha_3=\beta_3=0$ shows that we must have
$\alpha_2=\beta_2$ and $\alpha_3=\beta_3$.  Thus, together with
$\vec{a}=\vec{b}$ we have $\rho_1=\rho_2$.  If $I_+$ contains only
a single element then $\ket{\Psi_1}$ and $\ket{\Psi_2}$ differ at
most by a global phase and again $\rho_1=\rho_2$ follows.
Incidentally, note that for $\ket{\Psi_1}=\ket{\Psi_2}$ we have
$\ip{\Psi_1}{\Psi_2}=1$, i.e., $r=1$, $\theta=0$.

The only exceptional case arises when $I_+$ contains exactly two
elements, say $\{1,2\}$, as in this case there is only a single
phase equation $\beta_1+\beta_2=\alpha_1+\alpha_2+2\theta$, and
thus even fixing the global phase by setting $\alpha_2=\beta_2=0$,
only yields $\beta_1-\alpha_1=2\theta$.  This combined with
$\vec{a}=\vec{b}$ gives
\begin{gather*}
  r e^{i\theta} = \ip{\Psi_1}{\Psi_2}
                = a_1^2 e^{2i\theta} + a_2^2
\end{gather*}
and thus $a_1^2 e^{i\theta} + a_2^2 e^{-i\theta} = r$ or
\begin{equation*}
   2i\sin\theta (a_1^2-a_2^2) = 0.
\end{equation*}
Therefore, either $\theta=0$ or $a_1=a_2$.  If $\theta=0$ then
$\vec{\alpha}=\vec{\beta}$ and $\rho_1=\rho_2$, which is one
possible solution in the LaSalle invariant set. If $\theta\neq 0$,
then any $(\rho_1,\rho_2)$ satisfying
\begin{subequations}
\label{eqn:pseudo}
\begin{align}
\ket{\Psi_1}&=2^{-1/2}(1,e^{i\alpha},0,\ldots,0)^T\\
\ket{\Psi_2}&=2^{-1/2}(1,e^{i\beta},0,\ldots,0)^T
\end{align}
\end{subequations}
with $\beta-\alpha=2\theta$ is also in the LaSalle invariant set.  

Hence, if the target state is $\rho_d(0)=w\Pi_0+u\Pi_0^\perp$ with
$\Pi_0=\ket{\Psi_0}\bra{\Psi_0}$ and $\ket{\Psi_0}$ has only two 
nonzero components with equal norm, e.g., if 
\begin{equation}
  \label{eq:rho_limit:t}
  \rho_d(0) =
  \begin{pmatrix} r_{11} & r_{12}(t)& 0 & \ldots & 0\\
                  r_{12}(t)^\dagger & r_{11} & 0 & \ldots & 0\\
                  0        & 0      & u & \\
                  \vdots   & \vdots &        & \ddots \\
                  0        & 0      &        &        & u
  \end{pmatrix},
\end{equation}
with $r_{11}=\frac{1}{2}(w+u)$, $r_{12}(t)=\frac{1}{2}(w-u)e^{i\alpha}$,
and $\ket{\Psi_0}=2^{1/2}(1,e^{i\alpha},0,\ldots,0)^T$, then the invariant 
set contains all points $(\rho_1,\rho_2)$ satisfying~(\ref{eqn:pseudo}), 
which includes $\rho_1=\rho_2$ and $\rho_1\perp\rho_2$.  Since $\rho_1$ 
and $\rho_2$ lie on the orbit of $\rho_d(0)$, any solution $\rho(t)$ will 
converge to this orbit but we \emph{cannot} guarantee $\rho(t)\to\rho_d(t)$ 
as $t\to +\infty$.  This case is analoguous to the case where the target
state was located in the equatorial plane of the Bloch ball for $n=2$.  
For all other $\rho_d(0)$ the LaSalle invariant set contains only points
with either $\rho_1=\rho_2$ or $\rho_1\perp\rho_2$, corresponding to $V=0$ 
and $V=V_{\rm max}$, respectively, and since $V$ is non-increasing, any 
solution $\rho(t)$ with $V(\rho(0),\rho_d(0))<V_{\rm max}$ will converge 
to $\rho_d(t)$ as $t\to +\infty$.

In summary we have the following result:
\begin{theorem}
\label{thm:conv:ideal-pseudo-pure}
Given a pseudo-pure state target state $\rho_d(t)$ with spectrum $\{w,u\}$ 
and `ideal' Hamiltonians as defined, Lyapunov control is effective, i.e., 
any solution $\rho(t)$ with $V(\rho(0),\rho_d(0))<V_{\rm max}$ will converge 
to $\rho_d(t)$ as $t\to+\infty$, \emph{except} when $\rho_d$ has a single 
pair of non-zero off-diagonal entries of the form 
$r_{k\ell}(t)=\frac{1}{2}(w-u)e^{i\alpha}$ and 
$r_{kk}=r_{\ell\ell}=\frac{1}{2}(w+u)$.  In the latter case any solution 
$\rho(t)$ will converge to the orbit of $\rho_d(t)$ but in general 
$\rho(t)\not\to\rho_d(t)$ as $t\to +\infty$ and $V(\rho,\rho_d)$ can take 
any limiting value between $0$ and $V_{\rm max}$.
\end{theorem}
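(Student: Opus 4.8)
The plan is to combine three facts established above: the characterization of the invariant set in Theorem~\ref{thm:lasalle:4}, the pseudo-pure commutator identity~(\ref{eq:pseudo-pure:comm}), and the monotonicity of the Lyapunov function. First I would apply Theorems~\ref{thm:lasalle:2} and~\ref{thm:lasalle:4}: under the ideal hypotheses on $H_0,H_1$ every solution $(\rho(t),\rho_d(t))$ converges to the invariant set $E=\{(\rho_1,\rho_2):[\rho_1,\rho_2]\text{ is diagonal}\}$. Since $\rho_1,\rho_2$ are isospectral with $\rho_d(0)$, hence pseudo-pure with the same spectrum $\{w,u\}$, identity~(\ref{eq:pseudo-pure:comm}) gives $[\rho_1,\rho_2]=(w-u)^2[\Pi_1,\Pi_2]$, so for $w\ne u$ membership in $E$ is equivalent to $M=[\Pi_1,\Pi_2]$ being diagonal.

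The core of the argument is the classification of $E$ recorded in Eqs.~(\ref{eq:Mkl})--(\ref{eqn:pseudo}). Writing $\Pi_k=\ket{\Psi_k}\bra{\Psi_k}$ and imposing $M_{k\ell}=0$ for $k\ne\ell$, in the non-orthogonal case $r=|\ip{\Psi_1}{\Psi_2}|\ne 0$ one obtains the amplitude conditions $a_kb_\ell=a_\ell b_k$, forcing $\vec a=\vec b$, and the phase conditions $\beta_k+\beta_\ell=\alpha_k+\alpha_\ell+2\theta$. I would then split on the size of the common support $I_+=\{k:a_k,b_k\ne0\}$: for $|I_+|\ne 2$, after fixing the global phase the phase conditions force $\vec\alpha=\vec\beta$ and hence $\rho_1=\rho_2$, while the orthogonal case $r=0$ gives $\rho_1\perp\rho_2$. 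Consequently, for every target \emph{except} those for which the surviving case $|I_+|=2$ actually occurs, $E$ contains only points with $\rho_1=\rho_2$ (where $V=0$) or $\rho_1\perp\rho_2$ (where $V=V_{\rm max}$).

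To convert this into trajectory convergence I would exploit that $V(\rho(t),\rho_d(t))=\frac12\norm{\rho(t)-\rho_d(t)}^2$ is non-increasing and bounded below, so it has a limit $V_0\ge0$; as in the proof of Theorem~\ref{thm:lasalle:1}, $V\equiv V_0$ on the nonempty positive limiting set $\Gamma^+\subseteq E$. In the generic case the preceding paragraph shows $V$ takes only the values $0$ and $V_{\rm max}$ on $E$, whence $V_0\in\{0,V_{\rm max}\}$; the hypothesis $V(\rho(0),\rho_d(0))<V_{\rm max}$ combined with monotonicity yields $V_0\le V(\rho(0),\rho_d(0))<V_{\rm max}$, forcing $V_0=0$. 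Then $\norm{\rho(t)-\rho_d(t)}^2=2V(\rho(t),\rho_d(t))\to 0$, which is precisely $\rho(t)\to\rho_d(t)$.

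Finally, for the exceptional targets~(\ref{eq:rho_limit:t}) the surviving case $|I_+|=2$ with $\theta\ne0$ yields the extra invariant points~(\ref{eqn:pseudo}); a short computation gives $|\ip{\Psi_1}{\Psi_2}|^2=\cos^2\theta$ and hence $V=(w-u)^2(1-\cos^2\theta)=V_{\rm max}\sin^2\theta$, so $V_0$ may take any value in $[0,V_{\rm max}]$ and trajectory convergence can fail. Because these points are pseudo-pure states supported on the same two-dimensional subspace as $\rho_d(0)$, they all lie on the free-evolution orbit $\O(\rho_d)$, so $\rho(t)$ still converges to that orbit. The hard part, I expect, is not any single computation but the passage from ``$\Gamma^+\subseteq E$ with $V$ constant on $\Gamma^+$'' to a definite value of $V_0$: this rests on the \emph{completeness} of the classification of $E$, since one must rule out invariant points carrying intermediate Lyapunov values, and it is exactly the isolation of the $|I_+|=2$ case that both produces the exceptional value set $[0,V_{\rm max}]$ and pins down the exceptional targets~(\ref{eq:rho_limit:t}).
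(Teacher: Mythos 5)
Your proposal is correct and follows the paper's own argument essentially step by step: LaSalle invariance plus Theorem~\ref{thm:lasalle:4}, the reduction $[\rho_1,\rho_2]=(w-u)^2[\Pi_1,\Pi_2]$, the amplitude/phase classification of $E$ with the $|I_+|=2$ exceptional case, and monotonicity of $V$ forcing $V_0=0$ whenever $V(\rho(0),\rho_d(0))<V_{\rm max}$. The only addition is your explicit computation $V=V_{\rm max}\sin^2\theta$ on the exceptional family~(\ref{eqn:pseudo}), which makes precise the paper's claim that $V$ can take any limiting value in $[0,V_{\rm max}]$.
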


\subsection{Generic-state Control}

For generic states $\rho_d$ we shall distinguish between stationary and 
non-stationary target states.  Recall that $\rho_d(t)$ is stationary if 
and only if $[H_0,\rho_d(0)]=0$.  If $H_0$ has non-zero eigenvalues, 
which is always the case if $H_0$ is strongly regular, then this happens 
if and only if $\rho_d$ is diagonal in the eigenbasis of $H_0$.

\subsubsection{Generic stationary target state}

When $\rho_d$ is a stationary state Eq.~(\ref{eqn:auto}) can be reduced 
to a dynamical system on $\M$
\begin{subequations}
\label{eqn:auto1}
\begin{align}
\dot \rho(t) &=-i [ H_0+f(\rho)H_1, \rho(t) ]\\
f(\rho)&=\Tr([-iH_1,\rho(t)]\rho_d)
\end{align}
\end{subequations}
and the LaSalle invariant set can be reduced to
\begin{align}
  E &=\{\rho_0|\dot V(\rho(t))=0,\rho(0)=\rho_0\} \nonumber\\
    &=\{\rho_0: [\rho_0,\rho_d]=\diag(c_1,\ldots,c_n)\}
\end{align}
according to Theorem~\ref{thm:lasalle:4}. 

If $\rho_d$ is generic and both $\rho_d$ and $[\rho,\rho_d]$ 
are diagonal then $\rho$ must be diagonal and $[\rho,\rho_d]=0$ since 
suppose $\rho_d=\diag(w_1,\ldots,w_n)$ and $\rho=(r_{k\ell})$.  Then 
the $(k,\ell)$-th component of $[\rho_d,\rho]$ is $r_{k\ell}(w_k-w_\ell)$.
Since $\rho_d$ is generic $w_k\neq w_\ell$ except for $k=\ell$ and thus
$[\rho_d,\rho]$ is diagonal only if $r_{k\ell}=0$ for $k\ne \ell$, i.e.,
if $\rho$ is diagonal.  Since the commutator of two diagonal matrices
vanishes, the invariant set in this case reduces to the set of all
$\rho_0$ that commute with the stationary state $\rho_d$, i.e., in 
this case the invariant set $E$ not only contains the set of critical
points $F$ of the Lyapunov function but we have $E=F$.  In summary we
have:

\begin{theorem}
\label{thm:generic:crit}
If $\rho_d$ is a generic stationary target state then the invariant set 
$E$ contains exactly the $n!$ critical points of the Lyapunov function 
$V$, i.e., the stationary states $\rho_d^{(k)}$, $k=1,\ldots,n!$, that 
commute with $\rho_d$ and have the same spectrum.  
\end{theorem}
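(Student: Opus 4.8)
The plan is to combine the reduced characterization of the invariant set for stationary targets (Theorem~\ref{thm:lasalle:4}) with the genericity of $\rho_d$, showing that the a priori weak condition $[\rho_0,\rho_d]=\diag(c_1,\ldots,c_n)$ collapses to the strong condition $[\rho_0,\rho_d]=0$, and then to count the resulting commuting isospectral states.

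First I would note that one inclusion is free: by Theorem~\ref{thm:crit:1} the invariant set $E$ already contains every critical point of $V$, i.e. $F=\{\rho_0:[\rho_0,\rho_d]=0\}\subseteq E$, so the real work is the reverse inclusion $E\subseteq F$. To set this up I would invoke Theorem~\ref{thm:lasalle:4}: for a stationary target the dynamics reduce to a single copy of $\M$ and $E=\{\rho_0:[\rho_0,\rho_d]\text{ diagonal}\}$. Because $\rho_d$ is stationary it commutes with $H_0$, and since $H_0$ is strongly regular (hence non-degenerate) $\rho_d$ is diagonal in the eigenbasis of $H_0$; genericity then lets us write $\rho_d=\diag(w_1,\ldots,w_n)$ with the $w_k$ pairwise distinct.

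The crux is a one-line matrix computation. Writing $\rho_0=(r_{k\ell})$, the $(k,\ell)$-entry of $[\rho_d,\rho_0]$ is $r_{k\ell}(w_k-w_\ell)$, so its off-diagonal part vanishes exactly when $r_{k\ell}(w_k-w_\ell)=0$ for all $k\ne\ell$. Genericity forces $w_k\ne w_\ell$ for $k\ne\ell$, hence every off-diagonal $r_{k\ell}=0$ and $\rho_0$ is diagonal; two diagonal matrices commute, so in fact $[\rho_0,\rho_d]=0$ and $\rho_0\in F$. This establishes $E\subseteq F$, and together with the free inclusion gives $E=F$.

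Finally I would count $F$. Any $\rho_0\in F$ is diagonal and, lying on $\M$, isospectral to $\rho_d$, so $\rho_0=\diag(w_{\tau(1)},\ldots,w_{\tau(n)})$ for a permutation $\tau$ of $\{1,\ldots,n\}$; distinctness of the $w_k$ makes the $n!$ permutations yield $n!$ distinct states, matching the count in the discussion following Lemma~\ref{lemma:crit:1}. I expect no serious obstacle here: the steps are elementary, and the single genuinely conceptual point is recognizing that genericity is precisely what forces the diagonal-commutator condition of Theorem~\ref{thm:lasalle:4} to coincide with the vanishing-commutator condition that defines the critical points, so that $E$ contains nothing beyond $F$.
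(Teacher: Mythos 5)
Your proposal is correct and follows essentially the same route as the paper: both reduce to the single-copy dynamics via Theorem~\ref{thm:lasalle:4}, use stationarity plus strong regularity of $H_0$ to put $\rho_d$ in diagonal form with distinct eigenvalues, and then run the identical entrywise computation $[\rho_d,\rho_0]_{k\ell}=r_{k\ell}(w_k-w_\ell)$ to force $\rho_0$ diagonal and hence $[\rho_0,\rho_d]=0$, with the $n!$ permutation count concluding the argument. The only cosmetic difference is that you make the easy inclusion $F\subseteq E$ (via Theorem~\ref{thm:crit:1}) explicit, which the paper leaves implicit.
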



These critical points are the only stationary solutions and all the
other solutions must converge to \emph{one} of these points.  However,
we still cannot conclude that all or even most solutions converge to 
the target state $\rho_d$.  In fact we shall see that not all solutions 
$\rho(t)$ converge to $\rho_d$ even for $\rho(0)\not\in E$.  However, the 
target state $\rho_d$ is the only hyperbolic sink of the dynamical system, 
and all other critical points are hyperbolic saddles or sources, and 
therefore most (almost all) initial states will converge to the target 
state as desired.

We note that Theorem~\ref{thm:crit:2} guarantees that for a given generic 
stationary state $\rho_d$ the critical points of the Lyapunov function 
$V(\rho)$ are hyperbolic.  Thus, if the dynamical system was the gradient 
flow of $V(\rho)$ then asymptotic stability of these fixed points could 
be derived directly from the associated index number of the Morse function 
$V$\cite{Morse}.  However, since the dynamical system~(\ref{eqn:auto1}) is 
\emph{not} the gradient flow, further analysis of the linearization of the 
dynamics near the critical points is necessary.  To this end we require a
real representation for our complex dynamical system.  A natural choice is 
the Bloch vector (sometimes also called Stokes tensor) representation, 
where a density operator $\rho$ is represented as a vector 
$\vec{s}\in\RR^{n^2-1}$ defined by
$s_k=\Tr(\rho \xi_k)$, where $\xi_k=-i\sigma_k$ and $\{\sigma_k\}$ is
the orthonormal basis of $\su(N)$, as defined in the proof of
Theorem~\ref{thm:crit:2}.  The adjoint action $\Ad_{iH}(\rho)=[iH,\rho]$
in this basis is given by a real anti-symmetric matrix $A$ acting on
$\vec{s}$.  Therefore, the quantum dynamical system~(\ref{eqn:auto}) can
be equivalently represented as
\begin{subequations}
\begin{align*}
\dot {\vec{s}}(t)   &= (A_0+f(\vec{s},\vec{s}_d)A_1)\vec{s}(t)\\
\dot {\vec{s}}_d(t) &= A_0\vec{s}_d(t)\\
f(\vec{s},\vec{s}_d)&= \vec{s_d}^TA_1\vec{s},
\end{align*}
\end{subequations}
where $A_0=A_{-iH_0}$ and $A_1=A_{-iH_1}$.  For a fixed stationary target 
state $\rho_d$ this system can be reduced to
\begin{subequations}
\label{eq:sys_real}
\begin{align}
\dot {\vec{s}}(t) &= (A_0+f(\vec{s})A_1)\vec{s}(t)\\
        f(\vec{s})&= \vec{s_d}^TA_1\vec{s}.
\end{align}
\end{subequations}
The linearized system near the critical point $\vec{s}_0$ is 
\begin{equation}
\label{eqn:linear}
  \dot {\vec{s}}= D_f(\vec{s}_0)\cdot (\vec{s}-\vec{s}_0),
\end{equation}
where $D_f(\vec{s}_0)=A_0+A_1 \vec{s}_0\cdot \vec{s_d}^T A_1$ is a 
linear map defined on $\RR^{n^2-1}$.

The state space $S_\M$ of the real dynamical system is the set of all
Bloch vectors $\vec{s}\in \RR^{n^2-1}$ that correspond to density
operators $\rho\in\M$.  For generic states, $\M$ is the complex flag
manifold $\M \simeq \SU(n)/\exp(\C)$, where $\C$ is the Cartan subspace
of the Lie algebra $\su(n)$.  Hence, the tangent space $T_\M(\rho_0)$
of $\M$ at any point $\rho_0$ corresponds to the non-Cartan subspace
$\T$ of $\su(n)$ and the Cartan elements of $\su(n)$ correspond to the
tangent space of the isotropy subgroup of $\rho_0$.  In the equivalent
real representation $\RR^{n^2-1}$ is therefore the direct sum of the 
($n^2-n$)-dimensional tangent space $S_\T$ to the manifold $S_\M$ and 
the ($n-1$)-dimensional subspace $S_\C$ corresponding to the Cartan
subspace of $\su(n)$.

\begin{theorem}
\label{thm:generic:hyperbolic}
For a generic stationary target state $\rho_d$ all the critical points 
of the dynamical system~(\ref{eqn:auto1}) are hyperbolic. $\rho_d$ is 
the only sink, all other critical points are saddles, except the global 
maximum, which is a source.  
\end{theorem}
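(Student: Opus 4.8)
The plan is to analyze the linearization (\ref{eqn:linear}) of the reduced system (\ref{eqn:auto1}) at each of the $n!$ critical points $\rho_0=\diag(w_{\tau(1)},\ldots,w_{\tau(n)})$ and read off the signs of the real parts of its eigenvalues. First I would note that at a critical point $f(\vec s_0)=\Tr([-iH_1,\rho_0]\rho_d)=\Tr(-iH_1[\rho_0,\rho_d])=0$ because $[\rho_0,\rho_d]=0$, so the Jacobian collapses to $D=A_0+uv^T$ with $u=A_1\vec s_0$ and $v=A_1^T\vec s_d=-A_1\vec s_d$, a rank-one perturbation of the antisymmetric matrix $A_0$. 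The analysis must be carried out on the tangent space $S_\T$: since $\rho_0,\rho_d$ are diagonal their Bloch vectors lie in $S_\C$, whereas $u,v$ are the Bloch vectors of the off-diagonal operators $[-iH_1,\rho_0]$ and $-[-iH_1,\rho_d]$ and therefore lie in $S_\T$. Consequently $DS_\C=0$ (the $n-1$ transverse Cartan directions form a spurious kernel reflecting the conserved spectrum), and $D$ leaves $S_\T$ invariant, so only $D|_{S_\T}$ is dynamically relevant.

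Second, I would exploit the root-space structure. By (\ref{eq:lambda_comm}), $A_0$ preserves each two-dimensional root space $\Span\{\lambda_{k\ell},\bar\lambda_{k\ell}\}$ and acts there as a rotation of frequency $\omega_{k\ell}$, so its $2s$ eigenvalues $\pm i\omega_{k\ell}$ (with $s=n(n-1)/2$) are simple and nonzero by strong regularity. The same relations show that $u$ and $v$ point along the \emph{same} one-dimensional direction $\hat d_{k\ell}\propto\Re(b_{k\ell})\bar\lambda_{k\ell}+\Im(b_{k\ell})\lambda_{k\ell}$ inside each root space, with coefficients proportional to $(w_{\tau(k)}-w_{\tau(\ell)})$ and $-(w_k-w_\ell)$. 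Writing $S_\T$ in the orthonormal basis $\{\hat d_{k\ell}\}\cup\{\hat d_{k\ell}^\perp\}$ puts $A_0$ in the form $\left(\begin{smallmatrix}0&-\Omega\\\Omega&0\end{smallmatrix}\right)$ with $\Omega=\diag(\omega_{k\ell})$ and makes $u,v$ supported on the first block; solving the lower block for the $\hat d^\perp$-components and substituting collapses the rank-one update to the scalar secular equation $\sum_{k<\ell}\mu\,p_{k\ell}/(\mu^2+\omega_{k\ell}^2)=1$, equivalently the degree-$2s$ polynomial $\prod_{k<\ell}(\mu^2+\omega_{k\ell}^2)=\mu\sum_{k<\ell}p_{k\ell}\prod_{(p,q)\neq(k,\ell)}(\mu^2+\omega_{pq}^2)$, where $p_{k\ell}=-g_{k\ell}(w_{\tau(k)}-w_{\tau(\ell)})(w_k-w_\ell)$ and $g_{k\ell}>0$.

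Hyperbolicity is then immediate: on $\mu=i\nu$ the left side of the secular equation is purely imaginary and cannot equal $1$, while at the poles $\mu=\pm i\omega_{k\ell}$ the polynomial equals $\mp i\omega_{k\ell}p_{k\ell}\prod_{(p,q)\neq(k,\ell)}(\omega_{pq}^2-\omega_{k\ell}^2)$, which is nonzero precisely because $H_0$ is strongly regular and, since $\rho_d$ is generic and $H_1$ fully connected, $p_{k\ell}\neq0$; hence no eigenvalue sits on the imaginary axis and every critical point is hyperbolic. To count the unstable directions I would scale $p_{k\ell}\mapsto t\,p_{k\ell}$, $t\in(0,1]$: the same two observations keep all roots off the imaginary axis for every $t$, the leading coefficient is $t$-independent so none escapes to infinity, and as $t\to0^+$ the $2s$ roots converge to the \emph{simple} values $\pm i\omega_{k\ell}$. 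First-order perturbation theory then gives each conjugate pair near $\pm i\omega_{k\ell}$ the real part $\tfrac{t}{2}p_{k\ell}+O(t^2)$, so $N_+$ is constant on $(0,1]$ and equals $2\,\#\{(k,\ell):p_{k\ell}>0\}=2\,\mathrm{inv}(\tau)$, twice the number of inversions of $\tau$. Thus $N_+=0$ only for $\tau=\mathrm{id}$ (the unique sink $\rho_0=\rho_d$), $N_+=2s=\dim\M$ only for the order-reversing $\tau$ (the unique source, which by Theorem~\ref{thm:crit:2} is the global maximum of $V$), and $0<N_+<2s$ for every other permutation, giving a saddle.

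The step I expect to be the main obstacle is this eigenvalue count: reducing the rank-one update to the secular equation and, above all, justifying the homotopy — uniform hyperbolicity for all $t\in(0,1]$, boundedness of the roots, and the simple sign-definite correspondence with the $\pm i\omega_{k\ell}$ in the limit $t\to0^+$. The remaining identifications — that $u,v\in S_\T$ lie along a single direction per root space, and that $\mathrm{inv}(\tau)$ vanishes only at the identity and is maximal only at the reversal — are routine once the root-space decomposition and the formula for $p_{k\ell}$ are in hand. As a cross-check at the sink one has exactly $v=-u$, so the symmetric part of $D$ equals $-uu^T\preceq0$ which, combined with hyperbolicity, re-confirms directly that $\rho_d$ is a sink.
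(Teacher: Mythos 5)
Your proposal is correct, and it splits naturally into a half that parallels the paper and a half that takes a genuinely different route. For hyperbolicity, both you and the paper discard the spurious Cartan kernel, restrict the Jacobian to the tangent subspace $S_\T$, exploit the fact that $\vec u$ and $\vec v$ are parallel inside each two-dimensional root space, and use the matrix determinant lemma to show that a purely imaginary eigenvalue of the restriction would have to be an eigenvalue of $B_0$; where the paper then rules out $\mu=\pm i\omega_{k\ell}$ by an eigenvector analysis (Lemma~\ref{lemma:generic:3}, Eqs.~(\ref{eq:Bx})), you simply evaluate the degree-$(n^2-n)$ characteristic polynomial at these points and note it is nonzero by strong regularity, full connectivity and genericity of $\rho_d$ --- a cleaner variant of the same step. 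The classification into sink, saddles and source is where you genuinely diverge. The paper argues softly: since $V$ is non-increasing along flows, a hyperbolic sink (source) would be a local minimum (maximum) of $V$, which by Theorem~\ref{thm:crit:2} occurs only at $\rho_d$ (respectively the global maximum); no eigenvalue computation is needed. You instead compute the unstable index exactly, by homotoping the rank-one term to zero, checking that the spectrum stays off the imaginary axis for all $t\in(0,1]$ and remains bounded, and reading the signs of the real parts from first-order perturbation theory at the simple eigenvalues $\pm i\omega_{k\ell}$, obtaining $N_+=2\operatorname{inv}(\tau)$. The perturbation computation indeed comes out real, equal to $\tfrac{t}{2}p_{k\ell}$ per conjugate pair, so the bookkeeping you flag as the delicate step does go through. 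This route costs more work but buys strictly more than the theorem asserts: it yields the exact dimension $2\operatorname{inv}(\tau)$ of the unstable manifold at every saddle, which in effect proves the paper's subsequent (unproved) Remark that the stable-manifold dimension equals the Morse index of $V$ at each critical point, something the paper's soft argument cannot deliver. Your cross-check at $\tau=\mathrm{id}$, where $\vec v=-\vec u$ makes the symmetric part of the Jacobian negative semidefinite, correctly pins down the sign convention on which the inversion count rests.
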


\begin{proof}
We show that the critical points $\vec{s}_0$ of the corresponding real 
dynamical system~(\ref{eq:sys_real}) defined on $S_\M$ are hyperbolic.  
To this end, we first show that $D_f(\vec{s}_0)$ vanishes on the 
$(n-1)$-dimensional subspace $S_\C$, which is orthogonal to the tangent 
space of $S_\M$.  In the second step we show that the restriction of 
$D_f(\vec{s}_0)$ onto the tangent space of $S_\M$ is well-defined and 
has $n^2-n$ non-zero eigenvalues.  Finally, we show that the restriction 
of $D_f(\vec{s}_0)$ onto the tangent space of $S_\M$ does not have any
purely imaginary eigenvalues, from which it follows that $\vec{s}_0$ 
is a hyperbolic fixed point of the (real) dynamical system defined on 
$S_\M$, and the local behavior of the original dynamical system can 
therefore be approximated by the linearized system~\cite{stability}.

\begin{lemma}
$D_f(\vec{s}_0)$ vanishes on the subspace $S_\C$.
\end{lemma}

\begin{proof}
To show that $D_f(\vec{s}_0)\vec{s}=0$ for all $\vec{s}\in S_\C$, it 
suffices to show that $A_0 \vec{s}=0$ and $\vec{s}_d^T A_1 \vec{s}=0$
for $\vec{s} \in S_\C$.  $\vec{s}\in S_\C$ corresponds to density 
operators $\rho\in i\C$, i.e., $\rho$ diagonal.  As $A_0\vec{s}$ is 
the Bloch vector associated with $[-iH_0,\rho]$, $-iH_0$ is diagonal 
and since diagonal matrices commute, $[-iH_0,\rho]=0$ and $A_0\vec{s}=0$ 
follows immediately.  To establish the second part, we note that for 
$i\rho \in\C$ and $-iH_1\in \T$, we have $[-iH_1,i\rho] \in \T$, or 
$[-iH_1,\rho]\in i\T$, and $A_1\vec{s} \in S_\T$.  Since $\rho_d$ is 
diagonal and thus $\vec{s}_d \in S_\C \perp S_\T$, we have 
$\vec{s}_d^TA_1\vec{s}=0$ for $\vec{s}\in S_\C$.
\end{proof}

This lemma shows that $\vec{s}_0$ is not a hyperbolic fixed point of the
dynamical system~(\ref{eq:sys_real}) defined on $\RR^{n^2-1}$.  However,
we are only interested in the dynamics on the manifold $S_\M$, and thus
it suffices to show that $\vec{s}_0$ is a hyperbolic fixed point of the
restriction of $D_f(\vec{s}_0)$ to the tangent space $S_\T$ of $S_\M$.

\begin{lemma}
The restriction $B$ of $D_f(\vec{s}_0)$ to $S_\T$ is well-defined and 
has $n^2-n$ non-zero eigenvalues.
\end{lemma}

\begin{proof}
Since we already know that $S_\C$ is in the kernel of $D_f(\vec{s}_0)$,
it suffices to show that the image of $D_f(\vec{s}_0)$ is contained in
$S_\T$, i.e., $D_f(\vec{s}_0)\vec{s}\in S_\T$.  To this end
\begin{align*}
  D_f(\vec{s}_0) \vec{s}
  &= A_0 \vec{s} + A_1 \vec{s}_0  \; \vec{s_d}^T A_1\vec{s} \\
  &= A_0 \vec{s} + (\vec{s_d}^T A_1 \vec{s} ) \, A_1 \vec{s}_0
\end{align*}
shows that it suffices to show that $A_0\vec{s} \in S_\T$ and
$A_1\vec{s}_0\in S_\T$.  Both relations follow from the fact that the
commutator of a Cartan element and a non-Cartan element of the Lie
algebra $\su(n)$ is always in the non-Cartan algebra $\T$, and thus
$[-iH_0,\rho] \in i\T$ since $-iH_0\in \C$, and $[-iH_1,\rho_d]\in i\T$
since $i\rho_d \in \C$.  Therefore, the restriction $B: S_\T \to S_\T$ 
of $D_f(\vec{s}_0) \vec{s}$ is well defined.

Furthermore, the restriction of $A_0$ to $S_\T$ is a block-diagonal 
matrix $B_0=\diag(A_0^{(k,\ell)})$ with
\begin{equation*}
  A_0^{(k,\ell)}= \omega_{k\ell} 
  \begin{pmatrix} 0 & 1 \\ -1 & 0 \end{pmatrix}.
\end{equation*}
The restriction $\vec{u}$ of $A_1\vec{s}_0$ to $S_\T$ is a column vector 
$(\vec{u}^{(1,2)};\vec{u}^{(1,3)};\ldots;\vec{u}^{(n-1,n)})$ of length 
$n(n-1)$ consisting of $n(n-1)/2$ elementary parts
\begin{equation}
 \label{eq:ukl}
 \vec{u}^{(k,\ell)} =
\frac{\Delta_{\tau(k)\tau(\ell)}}{\sqrt{2}}
 \begin{pmatrix}
 \Im(b_{k\ell}) \\ \Re(b_{k\ell})
\end{pmatrix}
\end{equation}
for $k=1,\ldots,n-1$ and $\ell=k+1,\ldots,n$.  Similarly, let
$\vec{v}$ be the restriction of $A_1\vec{s}_d$ to $S_\T$.  Then
$\vec{v}=(\vec{v}^{(1,2)};\ldots;\vec{v}^{(n-1,n)})$ with
$\vec{v}^{(k,\ell)}$ as in Eq.~(\ref{eq:ukl}) and $\tau$ the
identity permutation.

Thus the restriction of $D_f(\vec{s}_0)$ to the subspace $S_\T$ is
$B=B_0-\vec{u}\vec{v}^T$.  Since $\omega_{k\ell}\neq 0$ for all
$k,\ell$ by regularity of $H_0$, we have $\det(B_0) =
\prod_{k,\ell} \omega_{k\ell}^2 \neq 0$, i.e., $B_0$ invertible,
and by the matrix determinant lemma~\cite{matrix}
\begin{equation*}
  \det(B) = \det(B_0 - \vec{u}\vec{v}^T)
  = (1-\vec{v}^T B_0^{-1} \vec{u}) \det(B_0).
\end{equation*}
$B_0^{-1}$ is a block-diagonal matrix with blocks
\begin{equation*}
 C^{(k,\ell)} = [A_0^{(k,\ell)}]^{-1}
              = \frac{1}{\omega_{k\ell}}
 \begin{pmatrix} 0 & -1 \\ 1 & 0 \end{pmatrix}.
\end{equation*}
Hence $\vec{v}^T B_0^{-1}\vec{u}=\sum_{k,\ell}
[\vec{v}^{(k,\ell)}]^T C^{(k,\ell)} \vec{u}^{(k,\ell)}$ vanishes
since
\begin{equation*}
  (\Im(b_{k\ell}),\Re(b_{k\ell})
 \begin{pmatrix} 0 & -1 \\ 1 & 0 \end{pmatrix}
 \begin{pmatrix}
 \Im(b_{k\ell}) \\ \Re(b_{k\ell})
\end{pmatrix} = 0, \quad \forall k, \ell.
\end{equation*}
Therefore, $\det(B)=\det(B_0)\neq0$ and thus the restriction of
$D_f(\vec{s}_0)$ to $S_\T$ is invertible, and hence has only
non-zero eigenvalues.
\end{proof}

\begin{lemma}
\label{lemma:generic:3}
If $i\beta$ is a purely imaginary eigenvalue of $B$ then it must be 
an eigenvalue of $B_0$, i.e., $i\beta=\pm i\omega_{k\ell}$ for some
$(k,\ell)$, and either the associated eigenvector $\vec{e}$ must be 
an eigenvector of $B_0$ with the same eigenvalue, or the restriction 
of $A_1\vec{s}_0$ to the $(k,\ell)$ subspace must vanish.
\end{lemma}

\begin{proof}
If $i\gamma$ is not an eigenvalue of $B_0$ then $(B_0-i\beta I)$ is 
invertible and by the matrix determinant lemma
\begin{align*}
  0 &= \det(B_0 - \vec{u}\vec{v}^T - i\beta I) \\
    &= \det( (B_0-i\beta I) - \vec{u}\vec{v}^T) \\
    &= (1-\vec{v}^T (B_0-i\beta I)^{-1} \vec{u}) \det(B_0-i\beta I).
\end{align*}
Since $\det(B_0-i\beta I)\neq 0$ we must therefore have
\begin{equation*}
  \vec{v}^T (B_0-i\beta I)^{-1} \vec{u} = 1.
\end{equation*}
Noting that $(B_0-i\beta I)^{-1}$ is block-diagonal with blocks
\begin{equation}
 \label{eq:B0inv}
  C_\beta^{(k,\ell)} =
  \frac{1}{\omega_{k\ell}^2-\beta^2}
  \begin{pmatrix}
   -i\beta & -\omega_{k\ell} \\ \omega_{k\ell} & -i\beta
  \end{pmatrix},
\end{equation}
\begin{equation*}
  \Big(\Im(b_{k\ell}),\Re(b_{k\ell}\Big)
 \begin{pmatrix} -i\beta & -\omega_{k\ell} \\ \omega_{k\ell} & -i\beta \end{pmatrix}
 \begin{pmatrix}
 \Im(b_{k\ell}) \\ \Re(b_{k\ell})
\end{pmatrix} = -i\beta |b_{k\ell}|^2
\end{equation*}
for all $k,\ell$, this leads to
\begin{align*}
 1& =\vec{v}^T (B_0-i\beta I)^{-1}\vec{u}
    =\sum_{k,\ell}[\vec{v}^{(k,\ell)}]^T C_\beta^{(k,\ell)}\vec{u}^{(k,\ell)} \\
  & = \frac{-i\beta}{2} \sum_{k,\ell}
    \frac{\Delta_{k\ell}\Delta_{\tau(k)\tau(\ell)}}{\omega_{k\ell}^2-\beta^2}
    |b_{k\ell}|^2.
\end{align*}
Since all terms in the sum are real this is a contradiction.  Thus if
$i\beta$ is a purely imaginary eigenvalue of $B$ then it must be an
eigenvalue of $B_0$.  

Since the spectrum of $B_0$ is $\{\pm i\omega_{k\ell}\}$, this means
$i\beta =\pm i\omega_{k\ell}$ for some $(k,\ell)$.  Without loss of 
generality assume $\gamma=\omega_{12}>0$ and let $\vec{e}=\vec{x}+i\vec{y}$ 
be the associated eigenvector of $B$.  Then
\begin{align}
\label{eqn:Bx0}
  B\vec{e} = (B_0-\vec{u}\vec{v}^T)(\vec{x}+i\vec{y})
           = i\omega_{12}(\vec{x}+i\vec{y}),
\end{align}
which is equivalent to
\begin{subequations}
\label{eq:Bx}
\begin{align}
 (B_0-\vec{u}\vec{v}^T)\vec{x}&= -\omega_{12} \vec{y}\\
 (B_0-\vec{u}\vec{v}^T)\vec{y}&=  \omega_{12} \vec{x}.
\end{align}
\end{subequations}
Multiplying~(\ref{eq:Bx}b) by $-\omega_{12} B_0^{-1}$ and adding it to 
(\ref{eq:Bx}a) 
\begin{align*}
 \underline{B_0\vec{x}}-\vec{u}\vec{v}^T\vec{x}
 +\omega_{12}B_0^{-1}\vec{u}\vec{v}^T\vec{y}
 &= \underline{-\omega_{12}^2 B_0^{-1} \vec{x}}
\end{align*}
Eq.~(\ref{eq:B0inv}) shows that
$-\omega_{12}^2[B_0^{(1,2)}]^{-1}=B_0^{(1,2)}$, i.e., on the $\T_{12}$
subspace the underlined terms above cancel, and thus the first two rows
of the above system of equations are
\begin{align*}
 \begin{pmatrix} u_1 \\ u_2 \end{pmatrix}
 (\vec{v}^T\vec{x})
= \begin{pmatrix} 0 & -1\\ 1 & 0 \end{pmatrix}
 \begin{pmatrix} u_1 \\ u_2 \end{pmatrix}
 (\vec{v}^T\vec{y}). 
\end{align*}
If $\vec{v}^T\vec{x}\neq 0$ then the last equation gives $u_1=-c^2u_1$
and $u_2=-c^2u_2$ with $c=\vec{v}^T\vec{y}/\vec{v}^T\vec{x}$, which can
only be satisfied if $u_1=u_2=0$.  Similarly if $\vec{v}^T\vec{y}\neq0$.
If $\vec{v}^T\vec{x}=\vec{v}^T\vec{y}=0$ then we have
$B\vec{e}=B_0\vec{e}=i\omega_{12}\vec{e}$, implying that $\vec{e}$ is 
an eigenvector of $B_0$ associated with $i\omega_{12}$. 
\end{proof}

The previous lemma shows that $B$ can have a purely imaginary eigenvalue
$i\beta$ only if $i\beta=\pm i\omega_{k\ell}$ for some $(k,\ell)$, and
either $\vec{u}^{(k,\ell)}=\vec{0}$, i.e., the projection of
$A_1\vec{s}_0$ onto the $(k,\ell)$ subspace vanishes, or the associated
eigenvector is also an eigenvector of $B_0$.  In the first case this 
means that $A_1\vec{s}_0$ vanishes on the subspace $\T_{k\ell}$, or 
equivalently that $[-iH_1,\rho_0]$ has no support in $\T_{k\ell}$, which 
contradicts the assumption that $H_1$ is fully connected and $\rho_0$ 
has non-degenerate eigenvalues.  On the other hand, if $\vec{e}$ is an 
eigenvector of $B_0$ with eigenvalue $i\beta=\pm i\omega_{k\ell}$ and 
$H_0$ is strongly regular then the projection of $\vec{e}$ onto the 
$(k,\ell)$ subspace is proportional to $(1,\pm i)$ and $\vec{e}$ is zero 
elsewhere, and thus $\vec{v}^T\vec{e}=0$ implies $\vec{v}^{(k,\ell)}=0$, 
which contradicts the fact that the projection $A_1\vec{s}_d$ or 
$[-iH_1,\rho_d]$ onto the $(k,\ell)$ subspace must not vanish if $H_1$ 
is fully connected and $\rho_d$ has non-degenerate eigenvalues.  Thus 
we can conlude that if $H_0$ is strongly regular, $H_1$ fully connected 
and $\rho_d$ has non-degenerate eigenvalues, $D_f(\vec{s}_0)$ cannot 
have purely imaginary eigenvalues, and thus $\vec{s}_0$ is hyperbolic.
\end{proof}

From the previous theorem we know that all critical points $\rho_0$
of $V$ are in fact hyperbolic fixed points of the dynamical system.
It is easy to see that among the $n!$ fixed points, $\rho_0=\rho_d$,
which corresponds to $V(\rho_0)=0$, must be a sink, and the point
corresponding to $V(\rho_0)=V_{\rm max}$ must be a source.  Any other 
fixed point $\rho_0$ must be a saddle, with eigenvalues having both 
negative and positive real parts, for otherwise $\rho_0$ would be a
sink or source, and thus a local minimum or maximum of $V$, which
would contradict Theorem \ref{thm:crit:2}.  Each of these saddle 
points has a stable manifold of dimension $<n^2-n$, on which solutions 
$\rho(t)$ will converge to the saddle point, but since the dimension
is less than the dimension of the state manifold, these solutions 
only constitute a measure-zero set.  Hence, for almost any flow 
$\rho(t)$ outside $E$ will converge to $\rho_d$.  In this sense, the 
Lyapunov control is still effective.  

\begin{remark}
Since the critical points of the dynamical system (\ref{eqn:auto}) for
a generic stationary state $\rho_d$ are hyperbolic and they are also 
hyperbolic critical points of the function $V(\rho)=V(\rho,\rho_d)$, 
the dimension of the stable manifold at a critical point must be the 
same as the index number of the critical point of the function $V$.
\end{remark}

\subsubsection{Generic non-stationary target state}

For non-stationary states characterizating the invariant set is more
complicated as $E$ may now contain points with nonzero diagonal 
commutators.

\begin{example}
Let $\rho_2=\rho_d(0)$ and consider
\begin{eqnarray*}
\rho_1= \begin{pmatrix}
\frac{1}{12} & -\frac{1}{12} & -\frac{1}{12}  \\
-\frac{1}{12} & \frac{11}{24} & \frac{1}{8} \\
-\frac{1}{12} & \frac{1}{8}   & \frac{11}{24}
\end{pmatrix}, \quad
\rho_2= \begin{pmatrix}
\frac{1}{3} & -\frac{i}{12} & \frac{i}{12} \\
\frac{i}{12} & \frac{1}{3}  & -\frac{i}{4} \\
-\frac{i}{12} & \frac{i}{4} & \frac{1}{3}
\end{pmatrix}.
\end{eqnarray*}
$\rho_1$ and $\rho_2$ are isospectral and 
$[\rho_1,\rho_2]=\frac{11i}{144}\diag(0,1,-1)$ and thus
$(\rho_1,\rho_2)\in E$.
\end{example}

Simulations suggest that Lyapunov control is ineffective, i.e., fails 
to steer $\rho(t)$ to $\rho_d(t)$ or even the orbit of $\rho_d(t)$ in 
such cases.  However, it is difficult to give a rigorous proof of this
observation, as we lack a constructive method to ascertain asymptotic
stability near a non-stationary solution.  In the special case where 
$\rho_d(t)$ is periodic there are tools such as Poincar\'e maps but it 
is difficult to write down an explicit form of the Poincar\'e map for 
general periodic orbits~\cite{Perko}.  Moreover, as observed earlier,
for $n>2$ the orbits of non-stationary target states $\rho_d(t)$ under
$H_0$ are periodic only in some exceptional cases.  Fortunately though,
we shall see that $E=\{[\rho_1,\rho_2]=0\}$ still holds for a very 
large class of generic target states $\rho_d(t)$, and in these cases
Lyapunov control tends to be effective.


Noting $[\rho_1,\rho_2]=-\Ad_{\rho_2}(\rho_1)$, where $\Ad_{\rho_2}$ 
is a linear map from the Hermitian or anti-Hermitian matrices into 
$\su(n)$, let $A(\vec{s}_2)$ be the real $(n^2-1)\times (n^2-1)$ 
matrix corresponding to the Stokes representation of $\Ad_{\rho_2}$. 
Recall $\su(n)=\T\oplus\C$ and $\RR^{n^2-1}=S_\T\oplus S_\C$, where 
$S_\C$ and $S_\T$ are the real subspaces corresponding to the Cartan 
and non-Cartan subspaces, $\C$ and $\T$, respectively.  Let 
$\tilde{A}(\vec{s}_2)$ be the first $n^2-n$ rows of $A(\vec{s}_2)$ 
(whose image is $S_\T$).

\begin{lemma}
For a generic $\rho_d(t)$ the invariant set $E$ contains points with
nonzero commutator if and only if $\rank\tilde{A}(\vec{s}_d(0))<n^2-n$.
\end{lemma}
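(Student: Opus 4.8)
The plan is to reduce the question to a statement of linear algebra about the single matrix $\tilde{A}(\vec{s}_d(0))$. Since the target component of the flow is pure free evolution, the second entry $\rho_2$ always stays on the orbit of $\rho_d(0)$, and by the conjugation identity $[\,e^{-iH_0t}\rho_1 e^{iH_0t},e^{-iH_0t}\rho_2 e^{iH_0t}\,]=e^{-iH_0t}[\rho_1,\rho_2]e^{iH_0t}$ used in the proof of Theorem~\ref{thm:lasalle:4}, a pair with $\rho_2=e^{-iH_0t}\rho_d(0)e^{iH_0t}$ has diagonal (resp.\ nonzero) commutator iff the corresponding pair over $\rho_d(0)$ does. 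Hence it suffices to work in the fiber $\rho_2=\rho_d(0)$. By Theorem~\ref{thm:lasalle:4}, $(\rho_1,\rho_d(0))\in E$ iff $[\rho_1,\rho_d(0)]$ is diagonal, which in the Stokes representation is $\tilde{A}(\vec{s}_d(0))\vec{s}_1=0$, i.e.\ $\vec{s}_1\in\ker\tilde{A}(\vec{s}_d(0))$; and the commutator is nonzero iff $\vec{s}_1\notin\ker A(\vec{s}_d(0))$, the centralizer of $\rho_d$. Thus $E$ contains a nonzero-commutator point iff $\M$ meets $\ker\tilde{A}(\vec{s}_d(0))\setminus\ker A(\vec{s}_d(0))$.

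Next I would pin down the relevant dimensions. Because $A(\vec{s}_d(0))$ is real antisymmetric (the Stokes form of $\Ad_{\rho_d}$), its kernel is the image of the centralizer of $\rho_d$, which for a generic (non-degenerate) $\rho_d$ is exactly the $(n-1)$-dimensional Cartan subalgebra in the eigenbasis of $\rho_d$; moreover $\ker A\subseteq\ker\tilde{A}$ always, since a vanishing commutator is in particular diagonal. Writing $\mathfrak{m}=\operatorname{im}A(\vec{s}_d(0))=(\ker A)^\perp$, the preimage identity $\ker\tilde{A}=A^{-1}(S_\C)$ together with invertibility of $A$ on $\mathfrak{m}$ gives
\[
  \dim\ker\tilde{A}(\vec{s}_d(0)) = (n-1)+\dim(\mathfrak{m}\cap S_\C),
  \qquad
  \rank\tilde{A}(\vec{s}_d(0)) = (n^2-n)-\dim(\mathfrak{m}\cap S_\C).
\]
The necessity direction is then immediate: if $\rank\tilde{A}(\vec{s}_d(0))=n^2-n$ then $\dim(\mathfrak{m}\cap S_\C)=0$, so $\ker\tilde{A}=\ker A$, every point of $E$ over $\rho_d(0)$ has $[\rho_1,\rho_d(0)]=0$, and $E$ contains no nonzero-commutator point.

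For sufficiency I would work infinitesimally at the commuting point $\rho_1=\rho_d$ and then promote to an exact solution. Parametrizing $\M$ near $\rho_d$ by $\rho_1=e^X\rho_d e^{-X}$ with $X\in\mathfrak{m}$, a short computation (consistent with the convention $\Ad_{\rho_d}(Y)=[\rho_d,Y]$) gives $[\rho_1,\rho_d]=\Ad_{\rho_d}^2X+O(|X|^2)$, and since $\Ad_{\rho_d}$ is invertible on $\mathfrak{m}$ the leading term is diagonal and nonzero precisely for the family of directions $X$ with $\Ad_{\rho_d}^2X\in S_\C\setminus\{0\}$, a family of dimension $\dim(\mathfrak{m}\cap S_\C)$. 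By the identity above this is nontrivial exactly when $\rank\tilde{A}(\vec{s}_d(0))<n^2-n$, so in that regime the linearized solution set of $\tilde{A}(\vec{s}_d(0))\vec{s}_1=0$ on $\M$ has positive dimension and tangent directions off the centralizer, producing candidate states with diagonal, nonzero commutator.

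The main obstacle is \emph{realizability}: upgrading this linearized solution to an honest $\rho_1\in\M$ (isospectral with $\rho_d$) for which $[\rho_1,\rho_d(0)]$ is \emph{exactly} diagonal and nonzero. Since $\rho_d$ is simultaneously a commuting critical point, it is a singular point of the zero set of $F(\rho_1)=P_{S_\T}[\rho_1,\rho_d(0)]$ (the projection of the commutator onto $S_\T$), so a naive implicit-function argument fails there. I would instead either (i) run a continuation or degree argument showing that the positive-dimensional tangent cone forces the algebraic set $\M\cap\ker\tilde{A}$ to be strictly larger than the finite set of commuting permutation states, hence to contain a non-commuting one; or (ii) integrate along a curve $X(t)$ tangent to $\mathfrak{m}\cap(\text{solution directions})$, controlling the higher-order terms so that the $S_\T$-component of the commutator stays zero while its $S_\C$-component remains nonzero. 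The explicit example preceding this lemma furnishes a concrete witness for $n=3$ and is a useful consistency check. I expect the delicate point to be verifying that the higher-order terms cannot conspire to drive the entire commutator to zero, which is exactly what would be needed to keep realizability honest.
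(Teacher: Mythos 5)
Your fiber reduction to $\rho_2=\rho_d(0)$ and your proof of the ``full rank $\Rightarrow$ zero commutator'' direction are correct and essentially the paper's own argument: the paper also reduces to this fiber, shows $\dim\ker A(\vec{s}_d(0))=n-1$ (via a circulant-determinant argument rather than your appeal to the centralizer of a non-degenerate Hermitian matrix, but to the same effect), and then concludes $\ker\tilde{A}=\ker A$; the paper phrases this last step via row spaces, you via the dimension count $\rank\tilde{A}=(n^2-n)-\dim(\mathfrak{m}\cap S_\C)$, which is the same thing.

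The genuine gap is the converse ($\rank\tilde{A}<n^2-n$ implies $E$ contains nonzero-commutator points): you leave it as a plan with two unexecuted strategies, so as written this half is not proved. Moreover, the obstruction you invoke --- that $\rho_d$ is ``a singular point of the zero set of $F(\rho_1)=P_{S_\T}[\rho_1,\rho_d(0)]$, so a naive implicit-function argument fails'' --- is not actually there, and seeing why closes the gap with ingredients you already have. The point you missed is that $F$ is the restriction to $\M$ of the \emph{linear} map $\tilde{A}(\vec{s}_d(0))$, so its image lies in $\operatorname{im}\tilde{A}=\tilde{A}(\mathfrak{m})$, a subspace of dimension $\rank\tilde{A}=(n^2-n)-d$ with $d=\dim(\mathfrak{m}\cap S_\C)\ge 1$; and since $T_{\rho_d}\M=\mathfrak{m}$ is a linear complement of $\ker A$ while $\tilde{A}$ annihilates $\ker A$, the differential $DF(\rho_d)=\tilde{A}|_{\mathfrak{m}}$ is \emph{onto} $\operatorname{im}\tilde{A}$. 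In other words, $F:\M\to\operatorname{im}\tilde{A}$ is a submersion at $\rho_d$, equivalently $\M$ meets the linear subspace $\ker\tilde{A}$ transversally there (check: $(n^2-n)+(n-1+d)-d=n^2-1$). Hence $F^{-1}(0)=\M\cap\ker\tilde{A}$ is, near $\rho_d$, a smooth submanifold of dimension $d\ge 1$ --- there is no possible conspiracy of higher-order terms. Since for non-degenerate $\rho_d$ the commutant meets $\M$ in only the $n!$ isolated permutation states, every point of this submanifold sufficiently close to, but distinct from, $\rho_d$ yields a pair $(\rho_1,\rho_d(0))\in E$ with $[\rho_1,\rho_d(0)]$ diagonal and nonzero, which is the required witness. (In fairness, the paper itself dispatches this direction with a single ``Similarly, we can prove\ldots'', so you correctly identified it as the nontrivial half; but your proposal both leaves it open and misdiagnoses why it is delicate.)
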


\begin{proof}
It suffices to show that if $\rank \tilde{A}(\vec{s}_d)=n^2-n$, 
then for any $\rho$ such that $[\rho,\rho_d(0)]$ diagonal, we have
$[\rho,\rho_d(0)]=0$.  If this is true then for any 
$(\rho_1,\rho_2)\in E$ with diagonal commutator, there exists some
$t_0$ such that $\rho_2=e^{iH_0t_0}\rho_d(0)e^{-iH_0t_0}$ and since
$[\rho_1,\rho_2]$ is diagonal, 
$[e^{-iH_0t_0}\rho_1 e^{iH_0t_0},\rho_d(0)]$ is also diagonal, 
hence equal to zero and $[\rho_1,\rho_2]=0$. 

Let $\rho_2=\rho_d(0)$.  First we show that the kernel of $A(\vec{s}_2)$ 
has dimension $n-1$ and thus $\rank A(\vec{s}_2) \le n^2-n$.  In this 
case $\rank \tilde{A}(\vec{s}_d) = n^2-n = \rank A(\vec{s}_2)$ implies 
that the remaining $n-1$ rows of $A(\vec{s}_2)$ are linear combinations 
of the rows of $\tilde{A}(\vec{s}_2)$ and therefore 
$\tilde{A}(\vec{s}_2)\vec{s}_1=\vec{0}$ implies 
$A(\vec{s}_2)\vec{s}_1=\vec{0}$, or $[\rho_1,\rho_2]=0$.

In order to show that the kernel of $A(\vec{s}_2)$ has dimension
$n-1$, we recall that if $\rho_2=U \diag(w_1,\ldots,w_n) U^\dagger$
for some $U\in\SU(n)$ then $[\rho_1,\rho_2]=0$ for all 
$\rho_1=U\diag(w_{\tau(1)},\ldots,w_{\tau(n)}) U^\dagger$, where 
$\tau$ is a permutation of $\{1,\ldots,n\}$.  If the $w_k\ge 0$ are 
distinct then these $\rho_1$'s span at least a subspace of dimension 
$n$ since the determinant of the circulant matrix
\begin{equation*}
  C = \begin{pmatrix}
       w_1 & w_2 & \ldots & w_{n-1} & w_n \\
       w_2 & w_3 & \ldots & w_n & w_1 \\
       \vdots & \vdots & \ddots & \vdots & \vdots \\
       w_{n-1} & w_n & \ldots & w_{n-3} & w_{n-2} \\
       w_n & w_1 & \ldots & w_{n-2} & w_{n-1}
      \end{pmatrix}
\end{equation*}
is non-zero, and hence its columns are linearly independent and span
the $n$-dimensional subspace of diagonal matrices.  If the $w_k$ are
distinct then the kernel cannot have dimension greater than $n-1$
since the $\rho_1$ can only span a subspace isomorphic to the set of
diagonal matrices. Thus, the kernel of $A(\vec{s}_2)$ has dimension
$n-1$.  (The dimension is reduced by one since we drop the projection 
of $\rho$ onto the identity in the Stokes representation.)  Similarly, 
we can prove if $\rank \tilde{A}(\vec{s}_d(0))<n^2-n$, then $E$ 
contains points with nonzero commutator.
\end{proof}

This lemma provides a necessary and sufficient condition on $\rho_d(0)$ 
to ensure that $[\rho_1,\rho_2]$ diagonal implies $[\rho_1,\rho_2]=0$.
Assuming the first $n^2-n$ rows correspond to $S_\T$, let $\tilde{A}_1$ 
be the submatrix generated from the first $n^2-n$ rows and last $n^2-n$ 
columns of $\tilde{A}(\vec{s}_d(0))$.  If $\det(\tilde{A}_1)\ne 0$ then 
$\rank \tilde{A}(\vec{s}_d(0))=n^2-n$, hence $E=\{[\rho_1,\rho_2]=0\}$.
We can easily verify that if the diagonal elements of $\rho_d(0)$ are not 
equal then $\det(\tilde{A}_1)$ is a non-trivial polynomial, i.e., 
$\det(\tilde{A}_1)$ can only have a finite set of zeros.  Hence we have:

\begin{theorem}
The invariant set $E$ for a generic $\rho_d(t)$ contains points with 
nonzero commutator only if either $\rho_d$ has some equal diagonal 
elements or $\det(\tilde{A}_1)=0$.  Therefore, the set of $\rho_d(0)$ 
such that $E$ contains points with nonzero commutator has measure zero 
with respect to the state space $\M$.
\end{theorem}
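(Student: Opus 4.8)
The plan is to read this theorem as a measure-theoretic corollary of the preceding lemma combined with a single explicit non-vanishing computation. By that lemma, the set of ``bad'' target states---those $\rho_d(0)$ for which $E$ contains points with nonzero commutator---is precisely $\{\rho_d(0): \rank\tilde{A}(\vec{s}_d(0)) < n^2-n\}$. Since $\tilde{A}_1$ is a square $(n^2-n)\times(n^2-n)$ submatrix of $\tilde{A}(\vec{s}_d(0))$, non-vanishing of its determinant is a sufficient condition for $\tilde{A}$ to have full row rank $n^2-n$. Hence $\det(\tilde{A}_1)\neq 0$ forces $\rho_d(0)$ to be good, and contrapositively every bad $\rho_d(0)$ satisfies $\det(\tilde{A}_1)=0$. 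This already delivers the ``only if'' part, modulo the caveat about equal diagonal elements addressed below.

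The decisive step is to show that $\det(\tilde{A}_1)$, regarded as a function of the entries of $\rho_d(0)$ (equivalently of $\vec{s}_d(0)$), is a \emph{nontrivial} polynomial, for otherwise its zero set would be all of $\M$ and no measure-zero conclusion could follow. I would exhibit a single point where it is nonzero: take $\rho_d(0)$ diagonal in the eigenbasis of $H_0$, say $\rho_d(0)=\diag(d_1,\ldots,d_n)$ with the $d_k$ distinct, which lies in $\M$ whenever the prescribed spectrum is nondegenerate. By the commutation relations~(\ref{eq:lambda_comm}), $\Ad_{\rho_d}$ annihilates the Cartan directions $S_\C$ and maps each root space $\T_{k\ell}=\Span\{\lambda_{k\ell},\bar{\lambda}_{k\ell}\}$ into itself, acting there as $(d_k-d_\ell)$ times a rotation. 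Thus the $S_\T\to S_\T$ block $\tilde{A}_1$ is block-diagonal with $2\times 2$ blocks of determinant $(d_k-d_\ell)^2$, so that
\begin{equation*}
  \det(\tilde{A}_1)\big|_{\rho_d\,\mathrm{diagonal}} = \prod_{1\le k<\ell\le n} (d_k-d_\ell)^2,
\end{equation*}
which is nonzero exactly when the diagonal entries are distinct. This evaluation establishes $\det(\tilde{A}_1)\not\equiv 0$ and simultaneously explains the role of the diagonal elements: the natural non-vanishing witness degenerates precisely on the locus where $\rho_d$ has coincident diagonal entries.

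With nontriviality in hand, the conclusion follows by a standard genericity argument. The map $\vec{s}_d(0)\mapsto\det(\tilde{A}_1)$ is the restriction to the connected, real-analytic flag manifold $\M$ of a polynomial, hence a real-analytic function that does not vanish identically; therefore its zero set is a proper analytic subvariety of $\M$ and has measure zero. Since the set of bad $\rho_d(0)$ is contained in $\{\det(\tilde{A}_1)=0\}$ together with the equal-diagonal-elements locus---itself a proper subvariety and hence measure zero---the set of target states for which $E$ contains points with nonzero commutator has measure zero in $\M$, as claimed.

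The main obstacle I anticipate is not the rank-to-minor bookkeeping, which is routine linear algebra, but verifying nontriviality of $\det(\tilde{A}_1)$ \emph{as a function on $\M$} rather than on the ambient $\RR^{n^2-1}$. One must confirm that the diagonal evaluation point genuinely lies in the flag manifold of the prescribed spectrum, and that the polynomial restricted to $\M$ is not forced to vanish by the constraints defining $\M$; the explicit product formula above resolves this, but some care is needed to check that the $S_\T\to S_\T$ block computed there coincides with the submatrix $\tilde{A}_1$ under the chosen ordering of rows and columns.
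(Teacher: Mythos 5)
Your strategy is the paper's own: the theorem carries no separate proof in the paper, its entire justification being the paragraph that precedes it (the rank lemma, the observation that a nonzero maximal minor certifies $\rank\tilde{A}(\vec{s}_d(0))=n^2-n$, and the unproved assertion that $\det(\tilde{A}_1)$ is a non-trivial polynomial). So your real contribution is to supply the witness computation that the paper dismisses with ``we can easily verify'', together with the standard fact that the zero set of a real-analytic function not identically zero on a connected manifold has measure zero. That logic is sound, and your product formula $\prod_{k<\ell}(d_k-d_\ell)^2$ is correct for the $S_\T\times S_\T$ block of $A(\vec{s}_d)$.

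The one genuine problem is the point you flagged in your closing paragraph but then resolved only by assertion. Your witness evaluation is valid only if $\tilde{A}_1$ is the submatrix formed by the \emph{first} $n^2-n$ columns of $\tilde{A}$ (the $S_\T\times S_\T$ block). The paper defines $\tilde{A}_1$ via the \emph{last} $n^2-n$ columns, and with the basis ordering used throughout the paper ($S_\T$ first, $S_\C$ last) that submatrix contains the $n-1$ Cartan columns. At your witness point, where $\rho_d(0)$ is diagonal, those Cartan columns vanish identically, since $\Ad_{\rho_d}$ annihilates every diagonal matrix when $\rho_d$ is diagonal; hence for the paper's literal $\tilde{A}_1$ your evaluation gives $\det(\tilde{A}_1)=0$, not $\prod_{k<\ell}(d_k-d_\ell)^2$, and non-triviality is not established. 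The repair is immediate, and it indicates the discrepancy lies in the paper's wording rather than in your mathematics: rank deficiency of $\tilde{A}$ forces \emph{every} maximal minor to vanish, so the bad set is contained in the zero set of the first-columns minor, which your computation shows is not identically zero on the connected manifold $\M$; the measure-zero conclusion follows. A two-level sanity check makes the distinction concrete: in Bloch coordinates
\begin{equation*}
  \tilde{A}(\vec{s}_d) = \begin{pmatrix} 0 & -z_d & y_d \\ z_d & 0 & -x_d \end{pmatrix},
\end{equation*}
so the first-columns minor is $z_d^2$, which vanishes exactly on the genuinely bad set $\{z_d=0\}$ (the equatorial target states of the paper's two-level analysis), whereas the last-columns minor is $x_d z_d$, which vanishes at your witness $(0,0,z_d)$. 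So either state explicitly that you take $\tilde{A}_1$ to be the $S_\T\times S_\T$ block, or produce a non-diagonal witness for the paper's minor; as written, the evaluation step fails under the paper's literal definition.
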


Hence, if we choose a generic target state $\rho_d(0)$ randomly, with 
probability one, it will be such that $E=\{[\rho_1,\rho_2]=0\}$.
Simulations suggests Lyapuonv control is generally effective in this
case, and we shall now prove this.  Let $\tau_k$ for $k=1,\ldots,n!$ 
denote all the permutations of the numbers $\{1,\ldots,n\}$ with 
$\tau_1$ being the identity permutation and $\tau_{n!}$ being the 
inversion.  For any given density operator 
\begin{equation}
  \rho(t)=\sum_{m=1}^n w_m \ket{m}\bra{m}, 
\end{equation}
define the `permutation'
\begin{equation}
 \rho^{(k)}(t)=\sum_{m=1}^n w_{\tau_k(m)} \ket{m}\bra{m}.
\end{equation}

\begin{theorem}
\label{thm:generic:conv1}
If $\rho_d(t)$ is a generic state with invariant set
$E=\{[\rho_1,\rho_2]=0\}$ then any solution $\rho(t)$ converges to 
$\rho_d^{(k)}(t)$ for some $k\in \{1,\ldots,n!\}$, and all
solutions except $\rho_d^{(1)}(t)=\rho_d(t)$, which is stable, are
unstable.
\end{theorem}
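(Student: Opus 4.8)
The plan is to leverage two facts already in hand: the LaSalle principle (Theorem~\ref{thm:lasalle:2}) confines the asymptotics of every trajectory to $E$, and the hypothesis $E=\{[\rho_1,\rho_2]=0\}$ together with the Morse data of Theorem~\ref{thm:crit:2} determines which of the finitely many limiting trajectories a given solution can approach. The first step is to decompose $E$. Since $\rho_d$ is generic, every $\rho_2\in\M$ has $n$ distinct eigenvalues, so the only elements of $\M$ commuting with $\rho_2$ are its $n!$ spectral permutations $\rho_2^{(k)}$. Writing $\Phi_k\colon\rho_2\mapsto\rho_2^{(k)}$, which is continuous because the eigenprojectors of a matrix with simple spectrum depend continuously on it, the invariant set becomes the disjoint union $E=\bigsqcup_{k=1}^{n!}E_k$ of the graphs $E_k=\{(\Phi_k(\rho_2),\rho_2):\rho_2\in\M\}$. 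Each $E_k$ is homeomorphic to the connected manifold $\M$, and since $\M$ is compact with $\norm{\Phi_k(\rho_2)-\Phi_{k'}(\rho_2)}>0$ for every $\rho_2$ whenever $k\neq k'$, the components $E_k$ are pairwise separated by a strictly positive distance.

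For the convergence claim I would argue as follows. By Theorem~\ref{thm:lasalle:2} the solution $(\rho(t),\rho_d(t))$ tends to $E$, and its positive limiting set $\Gamma^+$ is a \emph{connected} compact invariant subset of $E$. Positive separation of the $E_k$ then forces $\Gamma^+\subset E_k$ for a single index $k$, and Lemma~\ref{lemma:2} yields convergence to $E_k$. Because $\rho_d(t)$ is the prescribed second coordinate and $\Phi_k$ is uniformly continuous on the compact $\M$, convergence to the \emph{set} $E_k$ upgrades to convergence to the moving point, $\norm{\rho(t)-\rho_d^{(k)}(t)}\to0$, which establishes the first assertion.

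The stability classification then rests on one observation: the critical value $c_k:=V(\rho_d^{(k)}(t),\rho_d(t))=\tfrac12\norm{\rho_d^{(k)}(0)-\rho_d(0)}^2$ is \emph{independent of} $t$, because simultaneous free evolution is unitary and preserves the Hilbert--Schmidt norm; its values are those of Eq.~(\ref{eqn:V}), with $c_1=0<c_k$ for $k\neq1$. Since $V=\tfrac12\norm{\rho-\rho_d}^2$ is non-increasing along the flow, a small $V(\rho(0),\rho_d(0))$ keeps $\norm{\rho(t)-\rho_d(t)}$ small for all $t$, which is precisely Lyapunov stability of $\rho_d(t)$; if in addition $V(\rho(0),\rho_d(0))<\min_{k\neq1}c_k$, the limiting value $c_k$ supplied by the previous step must be $0$, forcing $k=1$ and hence asymptotic stability. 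For $k\neq1$, Theorem~\ref{thm:crit:2} identifies $\rho_d^{(k)}$ as a saddle or the global maximum of $V(\cdot,\rho_d)$, so arbitrarily close to it there exist states $\rho(0)$ with $V(\rho(0),\rho_d(0))<c_k$; monotonicity of $V$ then forces the trajectory to converge to some $\rho_d^{(j)}(t)$ with $c_j<c_k$, so $j\neq k$, and the positive separation of $E_j$ from $E_k$ keeps $\rho(t)$ bounded away from $\rho_d^{(k)}(t)$, i.e.\ $\rho_d^{(k)}(t)$ is unstable.

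I expect the main obstacle to be the passage from convergence to the \emph{set} $E_k$ to convergence to the \emph{trajectory} $\rho_d^{(k)}(t)$: this is where the non-stationarity of the target genuinely enters, and it hinges on the global (uniform) continuity of the spectral-permutation map $\Phi_k$ and on the positive separation of the components, both of which must be argued carefully from compactness of $\M$ and simplicity of the spectrum. Once those topological facts are secured, the stability statements follow quickly from $\dot V\le0$ and the constancy of the critical values $c_k$.
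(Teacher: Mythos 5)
Your proposal is correct, but it takes a genuinely different route from the paper's proof on both halves of the statement. For the convergence claim, the paper extracts subsequential limits $(\rho(t_m),\rho_d(t_m))\to(\rho_1,\rho_2)\in E$, identifies $\rho_1=\rho_2^{(k)}$, and then argues that two different subsequences must yield the same index $k$ because $V$ is monotone along the trajectory; this step is actually slightly fragile, since distinct permutations can share the same critical value $c_k$ (e.g.\ for equally spaced eigenvalues), in which case monotonicity of $V$ alone does not pin down a single index. Your decomposition of $E$ into the graphs $E_k$ of the continuous spectral-permutation maps $\Phi_k$, combined with connectedness of the positive limiting set and the uniform positive separation $\norm{\Phi_k(\rho_2)-\Phi_{k'}(\rho_2)}=\bigl(\sum_m(w_{\tau_k(m)}-w_{\tau_{k'}(m)})^2\bigr)^{1/2}$ (which is in fact constant in $\rho_2$), closes exactly this gap, and the uniform-continuity upgrade from set convergence to trajectory convergence is clean. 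For the stability claim, the paper passes to the interaction picture, where $\bar\rho_d^{(k)}$ become genuine fixed points of a non-autonomous system, and rules out stability of the intermediate points by a contradiction argument (stable $\Rightarrow$ asymptotically stable $\Rightarrow$ local minimum of $V$, contradicting the saddle structure of Theorem~\ref{thm:crit:2}); you instead work entirely in the original picture, using constancy of the critical values $c_k$, monotonicity of $V$, and the Morse data of Theorem~\ref{thm:crit:2} to exhibit, arbitrarily close to $\rho_d^{(k)}(0)$ with $k\neq1$, initial states whose orbits must converge to a strictly lower critical trajectory and hence, by the separation of the $E_j$, escape a fixed neighborhood of $\rho_d^{(k)}(t)$. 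Your version is more elementary — it needs neither the interaction-picture machinery nor the hyperbolicity results of Theorem~\ref{thm:generic:hyperbolic}, only Theorem~\ref{thm:crit:2} — while the paper's interaction-picture setup earns its keep by being reused in the theorem that follows (on solutions near the saddles).
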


\begin{proof}
For any solution $(\rho(t),\rho_d(t))$ there exists a subsequence 
$\{t_m\}$ such that $(\rho(t_m),\rho_d(t_m))\to (\rho_1,\rho_2)\in E$. 
If $E$ only contains pairs $(\rho_1,\rho_2)$ that commute then we can 
choose an orthonormal basis such that both $\rho_1$ and $\rho_2$ are 
diagonal, and since $\rho_1$ and $\rho_2$ have the same spectrum, the 
diagonal elements of $\rho_1$ must be a permutation of those of $\rho_2$, 
i.e., $\rho_1=\rho_2^{(k)}$ for some $k$.  Thus we have
$\rho(t_m)\to \rho_1=\rho_2^{(k)}$, $\rho_d(t_m) \to \rho_2$ and 
therefore
\begin{equation}
\label{eqn:1}
 \rho(t_m)\to \rho_d^{(k)}(t_m).
\end{equation}
If $(\bar\rho_1,\bar\rho_2)\in E$ is a different positive limiting
point of $(\rho(t),\rho_d(t))$, we can similarly find a subsequence
$\{t_{m'}\}$ such that $\rho(t_{m'})\to \rho_d^{(k')}(t_{m'})$, for
some $k'$. Since $V(\rho(t),\rho_d(t))$ is non-increasing along the
trajectory, we must have $k=k'$. Therefore, the result~(\ref{eqn:1})
holds for any subsequence $\{t_m\}$. 

To see that all solutions except those with $\rho(t)\to \rho_d(t)$ are 
unstable, we consider the dynamics in the interaction picture.  Let
\begin{eqnarray*}
\bar \rho_d(t) &=& e^{iH_0t}\rho_d(t)e^{-iH_0t}=\rho_d(0)\\
\bar \rho(t)   &=& e^{iH_0t}\rho_d(t)e^{-iH_0t}.
\end{eqnarray*}
We have $\dot {\bar \rho}_d(t)=0$ and the dynamical system becomes:
\begin{subequations}
\label{eqn:inter}
\begin{align}
\dot {\bar\rho}(t)  &=\bar f(t)[-i\bar H_1(t), \bar\rho(t)]\\
          \bar f(t) &=\Tr([-i\bar H_1(t), \bar\rho(t)]\bar\rho_d)
\end{align}
\end{subequations}
where $\bar H_1(t)=e^{iH_0t}H_1(t)e^{-iH_0t}$ and $\bar f(t)=f(t)$.
Thus, the original autonomous dynamical system $(\rho(t),\rho_d(t))$, 
where $\rho_d$ is not stationary, has transformed into a non-autonomous 
system, where $\bar\rho_d$ is a fixed point.  According to Theorems 
\ref{thm:generic:crit} and \ref{thm:generic:hyperbolic}, for a given 
$\bar\rho_d$ there are $n!$ hyperbolic critical points of the function 
$V(\bar\rho)=V(\bar\rho,\bar\rho_d)$, denoted by $\bar\rho_d^{(k)}$, 
$k=1,\ldots,n!$, with $\bar\rho_d^{(1)}=\bar\rho_d$ and $\bar\rho_d^{(n!)}$ 
corresponding to the minimum and maximum, respectively.  They are also 
the fixed points of the dynamical system~(\ref{eqn:inter}).  

If $E=\{[\rho_1,\rho_2]=0\}$ then any solution $\bar\rho(t)$ must 
converge one of the critical points $\bar\rho_d^{(k)}$.  Since the 
fixed points of the dynamical system~(\ref{eqn:inter}) coincide with 
the $n!$ hyperbolic critical points of $V(\bar\rho)$ for a given 
$\bar\rho_d$ and $V$ is non-increasing along any solution, it is 
easy to see that $\bar\rho_d$ and $\bar\rho_d^{(n!)}$ correspond to 
a stable and unstable point, respectively.  For any other fixed point 
$\bar\rho_d^{(k)}$, if it is stable, it must be asymptotically stable 
since all solutions must converge to one of these fixed points.  
However, by the continuity of the function $V$, this would imply that 
$\bar\rho_d^{(k)}$ is a local minimum, which is a contradiction to 
the fact that it is a hyperbolic saddle of $V(\bar\rho)$.  Therefore, 
all the 'intermediate' fixed points are unstable for the 
system~(\ref{eqn:inter}), and therefore $\rho_d^{(k)}(t)$, 
$k=2,\ldots,n!-1$, must be unstable.
\end{proof}

Numerical simulations for non-stationary target states $\rho_d(t)$ 
such that $E=\{[\rho_1,\rho_2]=0\}$ suggest that almost all solutions 
$\bar\rho(t)$ converge to $\bar\rho_d$, which is consistent with the 
theorem.  However, unlike for the stationary case we cannot conclude 
that the solutions converging to the saddles between the maximum and 
minimum constitute a measure-zero set.  We can still show, though, 
that in principle there exist solutions $\bar\rho(t)$ starting very 
close to $\bar\rho_d^{(n!)}(t)$ that converge to the target state 
$\bar\rho_d$.  So the region of asymptotic stability of $\rho_d(t)$ 
is at least not confined to a local neighborhood of it.

\begin{theorem}
In the interaction picture~(\ref{eqn:inter}) for any saddle point
$\bar\rho_d^{(k_0)}$ with $1<k_0<n!$ not all solutions $\bar\rho(t)$ 
with $V(\bar\rho(0))>V(\bar\rho_d^{(k_0)})$ can converge to it.
\end{theorem}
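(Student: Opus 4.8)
The plan is to identify the set of initial data whose solutions converge to the saddle $\bar\rho_d^{(k_0)}$ with (a suitable notion of) its global stable set, and to show this set is too thin to exhaust the open super-level set $\{\bar\rho : V(\bar\rho) > V_0\}$, where $V_0 = V(\bar\rho_d^{(k_0)})$ and $V(\bar\rho)=V(\bar\rho,\bar\rho_d)$. I would argue by contradiction: assume every solution $\bar\rho(t)$ with $V(\bar\rho(0))>V_0$ converges to $\bar\rho_d^{(k_0)}$, and exhibit at least one solution starting above the level $V_0$ that does not. The conceptual reason such a witness must exist is that, by Theorem~\ref{thm:generic:hyperbolic}, $\bar\rho_d^{(k_0)}$ is a hyperbolic saddle with a nontrivial unstable direction, so its stable set has codimension at least one in $\M$ and hence Lebesgue measure zero, whereas $\{V>V_0\}$ is open and non-empty and therefore has positive measure.

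The key steps are as follows. First, since $V$ is non-increasing along every solution and (by Theorem~\ref{thm:generic:conv1}) converges to the critical value of the limiting fixed point, any solution converging to $\bar\rho_d^{(k_0)}$ must satisfy $V(\bar\rho(t))\downarrow V_0$; hence the entire stable set is contained in $\{V \ge V_0\}$. Second, the super-level set $\{V>V_0\}$ is non-empty: it contains the source $\bar\rho_d^{(n!)}$, whose critical value $V_{\rm max}$ strictly exceeds $V_0$, because by Theorem~\ref{thm:crit:2} the maximum of $V$ (equivalently the minimum of $J$) is attained uniquely at the inversion permutation $\tau_{n!}$ and $k_0<n!$. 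The cleanest rigorous witness, which bypasses all measure-theoretic bookkeeping, is then the constant solution $\bar\rho(t)\equiv\bar\rho_d^{(n!)}$: one checks directly that $\bar\rho_d^{(n!)}$ is a fixed point of~(\ref{eqn:inter}), since $[\bar\rho_d^{(n!)},\bar\rho_d]=0$ forces $\bar f(t)=\Tr(-i\bar H_1(t)\,[\bar\rho_d^{(n!)},\bar\rho_d])=0$ for all $t$, yet this solution has $V=V_{\rm max}>V_0$ and plainly does not converge to $\bar\rho_d^{(k_0)}$. This already settles the literal statement of the theorem.

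The main obstacle lies in upgrading the witness to the morally stronger conclusion that the solutions converging to $\bar\rho_d^{(k_0)}$ form a measure-zero set, because $\bar\rho_d^{(k_0)}$ is a fixed point only in the interaction picture~(\ref{eqn:inter}), which is non-autonomous with coefficients $\bar H_1(t)=e^{iH_0t}H_1e^{-iH_0t}$ that are merely quasi-periodic for a strongly regular $H_0$. The standard stable-manifold theorem guaranteeing that the stable set is an embedded submanifold of codimension $\dim W^u\ge 1$ is stated for autonomous or periodic flows, and—as noted earlier—the Poincar\'e-map machinery is unavailable here precisely because the orbit of $\rho_d(t)$ is generically non-periodic for $n>2$. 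To make the dimension count rigorous I would lift back to the autonomous system~(\ref{eqn:auto}) on $\M\times\M$, under which $\bar\rho_d^{(k_0)}$ corresponds to the quasi-periodic trajectory $\rho_d^{(k_0)}(t)=e^{-iH_0t}\bar\rho_d^{(k_0)}e^{iH_0t}$, and analyze its orbital stable set via the linearization computed in Theorem~\ref{thm:generic:hyperbolic}, whose nontrivial unstable spectrum persists along the trajectory. Establishing the embedded-submanifold structure in the non-periodic regime is the delicate point; for the theorem as stated, however, the constant-source solution renders this refinement optional.
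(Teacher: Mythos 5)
Your proof is correct for the statement as written, but it reaches it by a genuinely different --- and much more degenerate --- witness than the paper's. The paper's own proof uses hyperbolicity of the saddle to pick a point $\bar\rho_0\neq\bar\rho_d^{(k_0)}$ on the level set $\{V=V_0\}$, $V_0=V(\bar\rho_d^{(k_0)})$, observes that the solution through $\bar\rho_0$ cannot be stationary (under the standing assumption $E=\{[\rho_1,\rho_2]=0\}$ the stationary solutions of~(\ref{eqn:inter}) are exactly the $n!$ isolated critical points), and flows \emph{backward} in time to reach $V(\bar\rho(t_1))>V_0$; since convergence of this trajectory to the saddle would force $V\equiv V_0$ and hence stationarity forward in time, it cannot converge to $\bar\rho_d^{(k_0)}$. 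Your witness is instead the constant solution at the source $\bar\rho_d^{(n!)}$: your check that $[\bar\rho_d^{(n!)},\bar\rho_d]=0$ forces $\bar f\equiv 0$ is right, the strict inequality $V_{\rm max}>V_0$ does follow from Theorem~\ref{thm:crit:2}, and a fixed point obviously does not converge to a different point, so the literal statement is settled. Your route even sidesteps a subtlety the paper glosses over: the paper's witness has $V(\bar\rho(0))=V_0$ exactly, and re-basing time so that its initial value lies above $V_0$ is delicate because~(\ref{eqn:inter}) is non-autonomous, so time-shifts of solutions are not solutions; your constant solution needs no shift. What your witness does \emph{not} buy is the dynamical content the theorem carries in its context: being stationary, it says nothing about non-stationary solutions that start above the saddle level and descend past it, whereas the paper's witness does exactly that (forward in time its $V$ must drop strictly below $V_0$, so by Theorem~\ref{thm:generic:conv1} it converges to a critical point of strictly smaller critical value), which is what supports the surrounding claim that the region of attraction of $\bar\rho_d$ is not confined to a local neighborhood. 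Finally, you were right to abandon the stable-manifold/measure-zero plan of your first paragraph --- as you note, that machinery is unavailable for this non-autonomous, generically non-periodic system --- and the paper does not attempt it either.
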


\begin{proof}
From the topological structure near a hyperbolic saddle point we know 
that the pre-image of $V=V(\bar\rho_d^{(k_0)})$ contains not only 
$\bar\rho_d^{(k_0)}$.  Therefore, we can choose a $\bar\rho_0$ such 
that $V(\bar\rho_0)=V(\bar\rho_d^{(k_0)})$ and 
$\bar\rho_0\neq\bar\rho_d^{(k_0)}$.  Since the solution $\bar\rho(t)$ 
with $\bar\rho(0)=\bar\rho_0$ cannot be a stationary, there exists a 
time $t_1<0$ such that $V(\bar\rho(t_1))>V(\bar\rho_d^{(k_0)})$.
\end{proof}

\subsection{Other stationary target states}

We have shown that for `ideal' systems, Lyapunov control is mostly
effective for both pseudo-pure and generic states, which covers the
largest and most important classes of states.  Finally, we show that 
if $\rho_d$ is stationary but has degenerate eigenvalues then there
may be large critical manifolds but we can still derive a result 
similar to the asymptotic stability of $\rho_d$ in the discussion of 
generic stationary states $\rho_d$. 

\begin{theorem}
\label{thm:degenerate:1}
If $\rho_d$ is a stationary state with degenerate eigenvalues then 
$\rho=\rho_d$ is a hyperbolic critical point of the function 
$V(\rho)=V(\rho,\rho_d)$ and it is isolated from the other critical 
points.
\end{theorem}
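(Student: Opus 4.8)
The plan is to exploit the characterization, established in Lemma~\ref{lemma:crit:1} and Theorem~\ref{thm:crit:1}, that the critical points of $V(\rho)=V(\rho,\rho_d)$ on $\M$ are exactly the states $\rho$ isospectral to $\rho_d$ with $[\rho,\rho_d]=0$, i.e.\ those commuting with $\rho_d$. I would handle the two assertions---isolation and hyperbolicity (non-degeneracy) of the critical point $\rho=\rho_d$---in turn, noting in advance that either can be made to imply the other through the Morse lemma, but giving a direct argument for each to keep the degenerate structure transparent.

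For isolation I would argue spectrally. Write the distinct eigenvalues of $\rho_d$ as $\mu_1,\ldots,\mu_r$ with multiplicities $m_1,\ldots,m_r$. Any critical point $\rho$ commutes with $\rho_d$ and hence preserves its eigenspaces, so it block-diagonalizes as $\rho=\bigoplus_{j=1}^r \rho^{(j)}$, where $\rho^{(j)}$ acts on the $\mu_j$-eigenspace. If $\rho$ lies in a sufficiently small neighborhood of $\rho_d$, each block $\rho^{(j)}$ is close to $\mu_j I_{m_j}$, so its eigenvalues are close to $\mu_j$; but by isospectrality these eigenvalues belong to the discrete, separated set $\{\mu_1,\ldots,\mu_r\}$. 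Hence every eigenvalue of $\rho^{(j)}$ must equal $\mu_j$, forcing $\rho^{(j)}=\mu_j I_{m_j}$ and therefore $\rho=\rho_d$. Thus no critical point other than $\rho_d$ lies in that neighborhood.

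For hyperbolicity I would compute the Hessian of $V$ at $\rho_d$ intrinsically on $\M$, reusing the local parameterization of Theorem~\ref{thm:crit:2}: a point near $\rho_d$ is written as $e^{\xs}\rho_d e^{-\xs}$ with $\xs\in\su(n)$, and $J=\Tr(e^{\xs}\rho_d e^{-\xs}\rho_d)$. Expanding $e^{\xs}\rho_d e^{-\xs}=\rho_d+[\xs,\rho_d]+\frac{1}{2}[\xs,[\xs,\rho_d]]+\cdots$ and using the cyclic property together with $\Tr([\xs,[\xs,\rho_d]]\rho_d)=-\Tr([\xs,\rho_d]^2)=-\norm{[\xs,\rho_d]}^2$ (noting $[\xs,\rho_d]$ is Hermitian since $\xs$ is anti-Hermitian), the linear term vanishes and the second-order term of $V=C-J$ is the quadratic form
\begin{equation*}
  Q(\vec{x})=\frac{1}{2}\norm{[\xs,\rho_d]}^2 .
\end{equation*}
The key point is that $Q$ vanishes precisely when $[\xs,\rho_d]=0$, i.e.\ on the stabilizer subalgebra $\mathfrak{k}=\{X\in\su(n):[X,\rho_d]=0\}$. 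For degenerate $\rho_d$ this $\mathfrak{k}$ is strictly larger than the Cartan $\C$---it also contains the generators $\lambda_{k\ell},\bar{\lambda}_{k\ell}$ mixing equal-eigenvalue levels, for which $[\rho_d,\lambda_{k\ell}]=0$---but these are exactly the directions that fix $\rho_d$ and hence are not tangent to $\M$. The tangent space $T_{\rho_d}\M=\{[\xs,\rho_d]\}$ is isomorphic to the complement $\mathfrak{k}^\perp$, and on $\mathfrak{k}^\perp$ the form $Q$ is strictly positive. Hence the Hessian of $V$ restricted to $T_{\rho_d}\M$ is positive definite, so $\rho_d$ is a non-degenerate (hyperbolic) critical point; being a non-degenerate minimum, it is also isolated by the Morse lemma, independently confirming the previous paragraph.

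I expect the main obstacle to be the bookkeeping for the degenerate case: one must carefully separate the genuine tangent directions of $\M$ from the enlarged stabilizer $\mathfrak{k}$, since it is precisely the swaps of equal eigenvalues---which leave $\rho_d$ fixed and would naively appear as zero modes of the Hessian---that could threaten non-degeneracy. Showing that these directions are orthogonal to $T_{\rho_d}\M$, so that the restricted form $Q|_{\mathfrak{k}^\perp}$ remains positive definite, is the crux; the remaining second-variation computation is the routine one already carried out for the generic case in Theorem~\ref{thm:crit:2}.
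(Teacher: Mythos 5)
Your proof is correct, but it takes a genuinely different route from the paper's in both halves. For hyperbolicity the paper reuses the swap calculus of Theorem~\ref{thm:crit:2}: along each curve generated by $\lambda_{k\ell}$ or $\bar\lambda_{k\ell}$ with $k,\ell$ in \emph{distinct} eigenvalue blocks, $J$ strictly decreases at second order, and counting these directions gives $2(n_1\sum_{\ell\ge2}n_\ell+\cdots+n_{k-1}n_k)=n^2-\sum_\ell n_\ell^2=\dim\M$, exhausting the tangent space; you instead compute the Hessian in closed form, $Q(X)=\frac{1}{2}\norm{[X,\rho_d]}^2$, and observe that it is positive definite on the complement $\mathfrak{k}^\perp$ of the stabilizer algebra, which is exactly what parameterizes $\M$ near $\rho_d$. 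Your version is arguably tighter: negativity of $J$ along many independent directions does not by itself imply negative definiteness of the Hessian (the paper's count works because the swap directions happen to diagonalize the Hessian), a subtlety your intrinsic formula bypasses entirely. For isolation the paper appeals to the discreteness of the critical values in Eq.~(\ref{eqn:V}) together with the fact that $\rho_d$ is the unique minimizer, so all other critical points have $V$ bounded away from zero; you argue spectrally that a commuting isospectral state near $\rho_d$ block-diagonalizes and eigenvalue rigidity forces each block to equal $\mu_j I_{m_j}$ --- and, as you note, isolation also follows for free from non-degeneracy via the Morse lemma, so your two halves reinforce each other. One small wording caveat: the stabilizer directions are not literally ``orthogonal to $T_{\rho_d}\M$'' (elements of $\mathfrak{k}$ are anti-Hermitian while tangent vectors $[X,\rho_d]$ are Hermitian, so they live in different spaces until one identifies them via multiplication by $i$, as in the paper's Bloch representation); the precise statement you need, and in effect use, is that $\mathfrak{k}$ is the kernel of $X\mapsto[X,\rho_d]$, so $\mathfrak{k}^\perp$ maps isomorphically onto $T_{\rho_d}\M$ and furnishes the chart in which $Q$ is the Hessian. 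This does not affect the validity of the argument.
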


\begin{proof}
Choose a basis such that $\rho_d$ is diagonal,
\begin{equation}
\label{eqn:degenerate}
\rho_d=\diag(a_1,\ldots,a_1,a_2,\ldots,a_2,\ldots,a_k,\ldots,a_k),
\end{equation}
and let $n_1,n_2,\ldots,n_k$, denote the multiplicities of the distinct
eigenvalues, where $\sum_{\ell=1}^k n_\ell=n$.  Using the same notation 
as in Theorem~\ref{thm:crit:2}, $\rho=\rho_d$ achieves the maximal value 
of $J=\Tr(\rho\rho_d)=\Tr(U\rho_dU^\dagger\rho_d)$.  To show that it is
a hyperbolic maximum of $J$ (hence minimum of $V$) we need to find $n'$
independent directions along each of which $J$ is a local maximum, where 
$n'$ is the dimension of the manifold $\M$, in our case 
$n'=n^2-\sum_{\ell=1}^k n_\ell^2$.  As in the proof of 
Theorem~\ref{thm:crit:2}, we note that for the curves with
$\xs=\lambda_{k\ell}t$ and $\xs=\bar\lambda_{k\ell}t$, the conjugate
action of $\lambda_{k\ell}$ or $\bar\lambda_{k\ell}$ on the critical
point $\rho=\rho_d$ swaps the $k$-th and $\ell$-th diagonal
elements. Hence the number of swaps that decrease the value of $J$ is
\begin{align*}
 & 2(n_1 \sum_{\ell=2}^k n_\ell 
    +n_2 \sum_{\ell=3}^k n_\ell
    +\cdots+n_{k-1} n_k \\
 &= n^2-\sum_{\ell=1}^k n_\ell^2=n'.
\end{align*}
Therefore, $\rho=\rho_d$ is a hyperbolic point of $J$, hence of $V$.
Since the critical values of $V$ as shown in Eq.~(\ref{eqn:V}), are
isolated and $\rho_d$ is the unique minimal value, it must also be
isolated from the other critical points, which completes the proof.
\end{proof}

Furthermore, we can show that $\rho_d$ is also a hyperbolic fixed point 
for the dynamical system~(\ref{eqn:auto1}):

\begin{theorem}
\label{thm:degenerate:2}
If $\rho_d$ is a stationary state with degenerate eigenvalues then 
$\rho=\rho_d$ is a hyperbolic sink of the dynamical 
system~(\ref{eqn:auto1}).
\end{theorem}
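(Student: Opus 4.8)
The plan is to adapt the Stokes-representation linearization of Theorem~\ref{thm:generic:hyperbolic} to the degenerate setting, exploiting the fact that we linearize \emph{precisely} at $\rho_0=\rho_d$, where the two vectors $\vec u$ and $\vec v$ appearing there coincide. First I would fix the eigenbasis of $H_0$ so that $\rho_d$ has the form~(\ref{eqn:degenerate}) and identify the tangent space $T_{\rho_d}\M$. Since $T_{\rho_d}\M$ is the image of $\mathrm{ad}_{\rho_d}$ and $[\rho_d,\lambda_{k\ell}]=i(w_k-w_\ell)\bar\lambda_{k\ell}$, the tangent space is spanned exactly by the \emph{inter-block} generators $\{\lambda_{k\ell},\bar\lambda_{k\ell}: w_k\neq w_\ell\}$, of total dimension $n'=n^2-\sum_\ell n_\ell^2$, while the Cartan directions and the \emph{within-block} generators ($w_k=w_\ell$) lie in the kernel of $\mathrm{ad}_{\rho_d}$ and are therefore transverse to $\M$. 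Because the controlled flow is isospectral, it is tangent to $\M$, so the linearization $D_f(\vec s_d)$ maps $T_{\rho_d}\M$ into itself, and it suffices to study its restriction $B$.

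Next I would compute $B$ explicitly. Using that $A_1$ is skew, $\vec s_d^{\,T}A_1=-(A_1\vec s_d)^T$, so that $D_f(\vec s_d)=A_0-(A_1\vec s_d)(A_1\vec s_d)^T$, and restricting to $T_{\rho_d}\M$ gives $B=B_0-\vec v\vec v^{\,T}$ with $\vec v$ the restriction of $A_1\vec s_d$. Here $B_0$ is the block-diagonal matrix of $2\times2$ blocks $\omega_{k\ell}\bigl(\begin{smallmatrix}0&1\\-1&0\end{smallmatrix}\bigr)$ over the inter-block pairs, invertible because strong regularity gives $\omega_{k\ell}\neq0$; and by the computation leading to Eq.~(\ref{eq:ukl}) each block of $\vec v$ is $\vec v^{(k,\ell)}\propto(w_k-w_\ell)(\Im b_{k\ell},\Re b_{k\ell})^T$, which is nonzero for every inter-block pair since $w_k\neq w_\ell$ and $b_{k\ell}\neq0$ by full connectivity. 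In particular $A_1\vec s_d$ has no within-block component, confirming that the rank-one term indeed has its image inside $T_{\rho_d}\M$.

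The sink property then follows from two observations. First, the symmetric part of $B$ is $\tfrac12(B+B^T)=-\vec v\vec v^{\,T}$, which is negative semidefinite, so every eigenvalue $\mu$ of $B$ with eigenvector $\vec e$ satisfies $\Re(\mu)\,|\vec e|^2=\vec e^{*}\bigl(-\vec v\vec v^{\,T}\bigr)\vec e=-|\vec v^{\,T}\vec e|^2\le0$. Second, I would rule out $\Re(\mu)=0$: equality forces $\vec v^{\,T}\vec e=0$, whence $B\vec e=B_0\vec e$ and $\mu=\pm i\omega_{k\ell}$ is a (simple, by strong regularity) eigenvalue of $B_0$ with $\vec e$ supported only on the $(k,\ell)$ block and proportional to $(1,\pm i)$; then $\vec v^{\,T}\vec e=0$ collapses to $\vec v^{(k,\ell)}=0$, contradicting the nonvanishing just established. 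This is exactly the dichotomy of Lemma~\ref{lemma:generic:3} specialized to $\vec u=\vec v$. Hence every eigenvalue of $B$ has strictly negative real part, so $\rho_d$ is a hyperbolic sink, consistent with Theorem~\ref{thm:degenerate:1} identifying it as a strict local minimum of the Lyapunov function.

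The main obstacle I anticipate is bookkeeping rather than a genuine difficulty: one must verify that, despite the degeneracy, $B$ is the honest linearization of the dynamics \emph{constrained to} $\M$. The danger is that the within-block isotropy directions --- on which $A_0$ still acts nontrivially, since $[-iH_0,\lambda_{k\ell}]\neq0$ even when $w_k=w_\ell$ because $\omega_{k\ell}\neq0$ --- might spoil hyperbolicity. This is resolved precisely by the observation that these directions form the kernel of $\mathrm{ad}_{\rho_d}$ and are therefore removed on passing to $\M$, so they never enter $B$; the only pairs contributing to $B_0$ and to $\vec v$ are the inter-block pairs, for which $w_k\neq w_\ell$ is exactly what makes $\vec v^{(k,\ell)}\neq0$ and drives the argument.
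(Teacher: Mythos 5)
Your proposal is correct and follows the same skeleton as the paper's proof: linearize in the Stokes representation at $\vec{s}_d$, obtain $B=B_0-\vec{v}\vec{v}^T$ with $\vec{v}^{(k,\ell)}=\frac{\Delta_{k\ell}}{\sqrt{2}}(\Im b_{k\ell},\Re b_{k\ell})^T$, observe that the within-block pairs ($\Delta_{k\ell}=0$) drop out of the rank-one term and correspond to directions orthogonal to $T_{\rho_d}\M$, and rule out purely imaginary eigenvalues on the tangent directions by the dichotomy of Lemma~\ref{lemma:generic:3} (here with $\vec{u}=\vec{v}$, so the contradiction with full connectivity and $\Delta_{k\ell}\neq 0$ is immediate). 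Where you genuinely depart from the paper is the sink part: the paper first argues the tangential eigenvalues have nonzero real part, counts them ($2\binom{n}{2}-\bar{N}=\dim\M$), and then infers \emph{negativity} indirectly from the fact that $\rho_d$ minimizes $V$ (leaning on Theorem~\ref{thm:degenerate:1}); you instead note that $B_0$ is antisymmetric, so the symmetric part of $B$ is $-\vec{v}\vec{v}^T\preceq 0$, giving $\Re(\mu)\,|\vec{e}|^2=-|\vec{v}^T\vec{e}|^2\le 0$ for every eigenpair at the outset, after which only the boundary case $\Re(\mu)=0$ needs the dichotomy argument. Your version is self-contained linear algebra and merges the "nonzero" and "negative" steps into one, which tightens the paper's rather terse appeal to minimality of $V$; the paper's bookkeeping of the $\bar{N}$ within-block imaginary eigenvalues, which you bypass by restricting directly to $T_{\rho_d}\M$ (legitimately, since the flow is isospectral and $\vec{v}$ has no within-block support), has the side benefit of making explicit the centre directions that reappear in the subsequent analysis of the other stationary points in the invariant set.
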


\begin{proof}
As in Theorem~\ref{thm:generic:hyperbolic}, we need to analyze the 
eigenvalues of linearization matrix $D_f(\vec{s}_d)$.  In order to 
show $\vec{s}_d$ is hyperbolic, it suffices to show that there are 
$n_{\M}$ eigenvalues with nonzero real parts, corresponding to 
$n_{\M}$ eigenvectors in the tangent space of $\M$ at $\vec{s}_d$, 
denoted as $T_{\M}(\vec{s}_d)$.  Let $\vec{v}$ be a column vector 
consisting of $n(n-1)/2$ elementary parts:
\begin{equation}
 \vec{v}^{(k,\ell)} =
\frac{\Delta_{k\ell}}{\sqrt{2}}
 \begin{pmatrix}
 \Im(b_{k\ell}) \\ \Re(b_{k\ell})
\end{pmatrix},
\end{equation}
and let $B=B_0-\vec{v}\vec{v}^T$ be the restriction of $D_f(\vec{s}_d)$
to the subspace $S_\T$ as before.  Following a similar argument as in 
Lemma~\ref{lemma:generic:3} it is easy to see that for $(k,\ell)$ such 
that $\Delta_{k\ell}=0$, the eigenelement $(\omega_{k\ell},\vec{e}_{k\ell})$ 
of $B_0$ is also an eigenelement of $B$ as $\vec{v}^T\vec{e}_{k\ell}=0$,
and that $\vec{e}_{k\ell}$ corresponds to a direction orthogonal to the 
tangent space $T_{\M}(\vec{s}_d)$.  The number of such $(k,\ell)$ is
\[
  \bar{N} = 2 \sum_{\ell=1}^k \binom{n_\ell}{2}.
\] 
By same arguments as in the proof of Theorem~\ref{thm:generic:hyperbolic}, 
it therefore is easy to show that the remaining eigenvalues of $B$ with 
eigenvectors corresponding to the directions in $T_{\M}(\vec{s}_d)$ must 
have non-zero real parts.  A simple counting argument shows that the 
number of these eigenvalues is $2\binom{n}{2}-\bar n=\dim(\M)$ and thus
$\rho_d$ is a hyperbolic point.  Since $\rho_d$ achieves the minimum of 
$V$, these eigenvalues must have negative real parts, i.e., $\rho_d$ must 
be a sink.
\end{proof}

Hence, any solution $\rho(t)$ near $\rho_d$ will converge to $\rho_d$ 
for $t\to +\infty$, which establishes local asymptotic stability of 
$\rho_d$.  The next question is whether this asymptotic convergence 
holds for a larger domain, as in the case of stationary non-degenerate 
$\rho_d$.  In order to answer this, we need to investigate the LaSalle 
invariant set.  For a stationary $\rho_d$ with degenerate eigenvalues
as in Eq.~(\ref{eqn:degenerate}) there are $p=\frac{n!}{n_1!\cdots n_k!}$ 
distinct diagonal $\rho_0$ satisfying $[\rho_0,\rho_d]=0$.  First of 
all, we have the following lemma:

\begin{lemma}
\label{lemma:degenerate}
Any stationary point $\rho_0$ other than $\rho_d$ must correspond to
a maximum along some direction.
\end{lemma}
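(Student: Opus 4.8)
The plan is to reduce the lemma to a purely combinatorial statement about rearrangements and then settle that statement with a ``largest misplaced eigenvalue'' argument. Since any stationary $\rho_0$ commutes with $\rho_d$ and is isospectral to it, I would first pass to the eigenbasis of $\rho_d$, in which both operators are diagonal and $\rho_0$ is merely a rearrangement of the same multiset of eigenvalues as $\rho_d$. Following exactly the second-order computation in the proof of Theorem~\ref{thm:crit:2}, I would examine $J(\rho)=\Tr(\rho\rho_d)$ (so that $V=\mathrm{const}-J$) along the curves $\xs=\lambda_{pq}t$ and $\xs=\bar\lambda_{pq}t$ through $\rho_0$. The conjugate action swaps the $p$-th and $q$-th diagonal entries, so the coefficient of $t^2$ is
\[
  \Delta J = -\big[(\rho_0)_{pp}-(\rho_0)_{qq}\big]\big[(\rho_d)_{pp}-(\rho_d)_{qq}\big].
\]
A local \emph{maximum of $V$} (equivalently a local minimum of $J$) along this direction occurs precisely when $\Delta J>0$, i.e.\ when the pair $(p,q)$ is oppositely ordered in $\rho_0$ and $\rho_d$; and it is a genuine tangent direction of $\M$ exactly when $(\rho_0)_{pp}\neq(\rho_0)_{qq}$, so that the swap is not an isotropy (degenerate-block) direction.

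The lemma therefore reduces to the combinatorial claim that for every $\rho_0\neq\rho_d$ there exist positions $p,q$ with $(\rho_0)_{pp}<(\rho_0)_{qq}$ while $(\rho_d)_{pp}>(\rho_d)_{qq}$. To prove this I would order $\rho_d=\diag(a_1,\ldots,a_1,\ldots,a_k,\ldots,a_k)$ with $a_1>\cdots>a_k$, letting block $j$ occupy positions $N_{j-1}+1,\ldots,N_j$ where $N_m=n_1+\cdots+n_m$. Since $\rho_0\neq\rho_d$ and $\rho_d$ is the unique decreasingly sorted arrangement, some eigenvalue is misplaced; let $a_j$ be the \emph{largest} misplaced value. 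Because $a_1,\ldots,a_{j-1}$ then still occupy positions $1,\ldots,N_{j-1}$ exactly, every entry of $\rho_0$ in positions $>N_{j-1}$ is $\le a_j$, so block $j$ of $\rho_0$ must contain an entry strictly below $a_j$ at some position $p\in\{N_{j-1}+1,\ldots,N_j\}$, while one displaced copy of $a_j$ sits at a position $q>N_j$. This yields $(\rho_0)_{pp}<a_j=(\rho_0)_{qq}$ together with $(\rho_d)_{pp}=a_j>(\rho_d)_{qq}$, exactly the oppositely-ordered pair with entries distinct in both matrices.

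Combining the two steps finishes the argument: along $\lambda_{pq}$ (or $\bar\lambda_{pq}$) one has $\Delta J>0$, so $\rho_0$ is a strict local maximum of $V$ along this direction, and since $(\rho_0)_{pp}\neq(\rho_0)_{qq}$ the direction is tangent to $\M$. This exhibits, for every stationary $\rho_0\neq\rho_d$, an escape direction of the $V$-decreasing flow, distinguishing such points from the sink $\rho_d$ of Theorem~\ref{thm:degenerate:2} and setting up the subsequent instability statement.

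The step I expect to be the main obstacle is the combinatorial claim, because the degeneracy of $\rho_d$ breaks the naive argument: an ordinary adjacent inversion of $\rho_0$ may fall \emph{inside} a degenerate block of $\rho_d$, where $(\rho_d)_{pp}=(\rho_d)_{qq}$ forces $\Delta J=0$ and produces no maximum. The ``largest misplaced eigenvalue'' device is precisely what is needed to guarantee the strict inequality $(\rho_d)_{pp}>(\rho_d)_{qq}$ across two different eigenvalue blocks while simultaneously keeping $(\rho_0)_{pp}\neq(\rho_0)_{qq}$; making this selection correctly, rather than the routine second-order expansion borrowed from Theorem~\ref{thm:crit:2}, is where the care lies.
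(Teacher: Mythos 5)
Your proof is correct and takes essentially the same route as the paper's: the paper's own (one-line) proof simply notes $V(\rho_0)>V(\rho_d)$ and appeals to the swap computation of Theorem~\ref{thm:crit:2} to produce a direction along which $\rho_0$ is a maximum of $V$. Your ``largest misplaced eigenvalue'' construction merely makes explicit what the paper leaves implicit in the phrase ``analogous to the proof in Theorem~\ref{thm:crit:2}'', namely the existence of an inversion pair $(p,q)$ lying in two \emph{distinct} eigenvalue blocks of $\rho_d$ with $(\rho_0)_{pp}\neq(\rho_0)_{qq}$, so that the swap direction is a genuine tangent direction of $\M$ and the second-order coefficient is strictly positive.
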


\begin{proof}
Since $\rho_0\ne \rho_d$, we have $V(\rho_0)>V(\rho_d)$, and
analogous to the proof in Theorem \ref{thm:crit:2}, there exists 
some swap $\lambda_{k\ell}$ such that $\rho_0$ corresponds to a 
maximum along that direction.
\end{proof}

Furthermore, we can prove that the LaSalle invariant set consists of
centre manifolds with the diagonal stationary states $\rho_0$ as
centres.  This can be easily illustrated with the following example:

\begin{example}
For a three-level system with 
$\rho_d=\diag(\frac{1}{4},\frac{1}{4},\frac{1}{2})$, the dimension
of the state manifold $\M$ is $\dim{M}=3^2-2^2-1=4$ and the LaSalle
invariant set $E$ contains all points $\rho_0$ of the form
\begin{align*}
 \rho_0= 
  \begin{pmatrix}
  a_{11}   & a_{12} & 0 \\
  a_{12}^* & a_{22} & 0 \\
  0        & 0      & a_{33},
  \end{pmatrix}
\end{align*}
with eigenvalues $\{\frac{1}{4},\frac{1}{4},\frac{1}{2}\}$.  If
$a_{33}=\frac{1}{2}$ then we have $\rho_0=\rho_d$, which is an isolated
hyperbolic sink.  All other $\rho_0$ in $E$ satisfy $a_{33}=\frac{1}{4}$
and form a manifold $\M_0$, which contains two stationary states
\(\rho_1=\diag\big(\frac{1}{4},\frac{1}{2},\frac{1}{4}\big)\) and
\(\rho_2=\diag\big(\frac{1}{2},\frac{1}{4},\frac{1}{4}\big)\) that 
commute with $\rho_d$.

Analogous to the proof of Theorem~\ref{thm:degenerate:2}, we can analyze
the linearization of the dynamical system near one of the critical point
$\rho_\ell$, $\ell=1,2$. It is easy to see that the two tangent vectors
of the centre manifold at $\rho_1$ (corresponding to two purely imaginary
eigenvalues) are also the tangent vectors of the invariant manifold $\M_0$. 
Therefore, $\M_0$ is the centre manifold.  The other two eigenvalues must 
have positive real parts, since $\rho_1$ is the maximal point of $V$. 
This analysis is also true for $\rho_2$.  Hence, except for the target 
state $\rho_d$, the points in the LaSalle invariant set form a centre 
manifold with the stationary points $\rho_\ell$, $\ell=1,2$ as centres.
\end{example}

In general, we can analyze the linearization near any of the
$p=\frac{n!}{n_1!\cdots n_k!}$ stationary points of the dynamical
system.  For a stationary point $\rho_0$ other than $\rho_d$, analysis 
the eigenvalues of the linearized system, analogous to the previous 
example, shows that the purely imaginary eigenvalues correspond to the 
centre manifolds generated by the LaSalle invariant set, where $\rho_0$ 
is a centre on the centre manifold.  Other eigenvalues can be similarly 
proved to have either positive or negative real parts.  Moreover, the 
spectrum must contain eigenvalues with positive real parts; otherwise, 
$\rho_0$ would be dynamically stable, corresponding to a local minimum 
of $V$, which contradicts Lemma~\ref{lemma:degenerate}.  Hence, near
$\rho_0$, except for the solutions on the stable manifold of $\rho_0$, 
all the other solutions will move away.  Therefore, globally, provided
we start outside the LaSalle invariant set, most solutions will converge 
to $\rho_d$, similar to the results for generic stationary $\rho_d$.

\section{Convergence of Lyapunov Control for realistic systems}
\label{sec:conv_real}

In the previous section we studied the invariant set and convergence
behavior of Lyapunov control for systems that satisfy very strong
requirements, namely complete regularity of $H_0$ and complete
connectedness of the transition graph associated with $H_1$.  We shall
now consider how the invariant set and convergence properties change
when the system requirements are relaxed.  Without loss of generality,
we present the analysis for a qutrit system noting that the
generalization to $n$-level systems is straightforward.

\subsection{$H_1$ not fully connected}

Suppose $H_0$ is strongly regular but $H_1$ does not have couplings
between every two energy levels, i.e., the field does not drive every
possible transition, as is typically the case in practice.  For example,
for many model systems such as the Morse oscillator only transitions
between adjacent energy levels are permitted and we have for $n=3$:
\begin{equation*}
H_0 =\begin{pmatrix}
a_1 & 0 & 0  \\
0 & a_2 & 0  \\
0 & 0 & a_3
\end{pmatrix}, \qquad
H_1=\begin{pmatrix}
0 & b_1 & 0  \\
b_1^* & 0 & b_2  \\
0 & b_2^* & 0
\end{pmatrix}
\end{equation*}
where we may assume $a_1<a_2<a_3$, for instance.

According to the characterization of the invariant set $E$ derived
in Section~\ref{sec:LaSalle}, a necessary condition for
$(\rho_1,\rho_2)$ to be in the invariant set $E$ is that
$[\rho_1,\rho_2]$ is orthogonal to the subspace spanned by the
sequence $\B=\Span\{B_m\}_{m=0}^\infty$ with
$B_m=\Ad_{-iH_0}^{(m)}(-iH_1)$.  Comparison with (\ref{eq:Bm}) shows
that if the coefficient $b_{k\ell}=0$ then none of the generators
$B_m$ have support in the root space $\T_{k\ell}$ of the Lie
algebra, and it is easy to see that the subspace of $\su(n)$
generated by $\B$ is the direct sum of all root spaces $\T_{k\ell}$
with $b_{k\ell}\neq 0$.

Thus, in our example, a necessary condition for $(\rho_1,\rho_2)$ to
be in the invariant set $E$ is $[\rho_1,\rho_2] \in \T_{13}\oplus\C$,
which shows that $[\rho_1,\rho_2]$ must be of the form
\begin{equation}
 \label{eq:rho13}
 [\rho_1,\rho_2]= \begin{pmatrix}
  \alpha_{11} & 0 & \alpha_{13}  \\
  0 & \alpha_{22} & 0  \\
  \alpha_{13}^* & 0 & \alpha_{33}
\end{pmatrix}.
\end{equation}
Furthermore, if $(\rho_1,\rho_2)$ is of type~(\ref{eq:rho13}) then
\begin{equation}
 U_0(t) [\rho_1,\rho_2] U_0(t)^\dagger
 = \begin{pmatrix}
  \alpha_{11} & 0 & e^{i\omega_{13}t}\alpha_{13}  \\
  0 & \alpha_{22} & 0  \\
 e^{-i\omega_{13}t}\alpha_{13}^* & 0 & \alpha_{33}
\end{pmatrix}
\end{equation}
with $U_0=e^{-iH_0t}$ and $\omega_{k\ell}=a_\ell-a_k$, also has the
form.  Therefore, $[\rho_1,\rho_2]\in \C\oplus\T_{13}$ is a necessary
and sufficient condition for the invariant set $E$.

If $\rho_d$ is diagonal with non-degenerate eigenvalues then $E$
consists of all $(\rho_1,\rho_2)$ with $\rho_2=\rho_d$ and $\rho_1$ 
of the form
\begin{equation}
\rho_1= \begin{pmatrix}
       \beta_{11} & 0 & \beta_{13}  \\
       0 & \beta_{22} & 0  \\
       \beta_{13}^* & 0 & \beta_{33}
\end{pmatrix}.
\end{equation}
Thus, the invariant set $E$ contains a finite number of isolated
fixed points corresponding to $\beta_{13}=0$, which coincide with
the critical points of $V(\rho,\rho_d)$ as a function on the
homogeneous space $\M \times \M$ with 
$\M\simeq \UU(3)/\{\exp(\sigma):\sigma\in\C\}$
for fixed $\rho_d$, as well as an infinite number of trajectories
with $\beta_{13}\ne 0$.

We check the stability of linearized system near these fixed points,
concentrating on the local behavior near $s_d$.  Working with a real
representation of the linearized system~(\ref{eqn:linear}) and using
the same notation as before, we can still show that $D_f(\vec{s}_d)$
has $n^2-n$ nonzero eigenvalues, $n$ equal to three in our case.
Since $-iH_1$ has no support in the root space $\T_{13}$, the
$\lambda_{13}$ and $\bar\lambda_{13}$ components of $A_1\vec{s}_d$,
(which correspond to $[-iH_1,\rho_d]$) vanish, and $D_f(\vec{s}_d)$
has a pair of purely imaginary eigenvalues whose eigenspaces span
the root space $\T_{13}$ and four eigenvalues with non-zero real
parts, which must be negative since $\vec{s}_d$ is locally stable
from the Lyapunov construction.  However, the existence of two
purely imaginary eigenvalues means that the target state is no
longer a hyperbolic fixed point but there is centre manifold of
dimension two.  From the centre manifold theory, the qualitative
behavior near the fixed point is determined by the qualitative
behavior of the flows on the centre manifold~\cite{Carr}. Therefore,
the next step is to determine the centre manifold.  For dimensions
greater than two, this is generally a hard problem if we do not know
the solution of the system.  However, since we know the tangent
space of the centre manifold, if we can find an invariant manifold
that has this tangent space at $\vec{s}_d$, then it is a centre
manifold.

In our case solutions in the invariant set $E$ form a manifold that
is diffeomorphic to the Bloch sphere for a qubit system, with the
natural mapping embedding
\begin{equation}
\rho=\begin{pmatrix}
\beta_{11} & 0 & \beta_{13}  \\
0 & \beta_{22} & 0  \\
\beta_{13}^* & 0 & \beta_{33}
\end{pmatrix}
\to \rho'=
\frac{1}{\beta_{11}+\beta_{33}}
\begin{pmatrix}
\beta_{11} & \beta_{13}\\
\beta_{13}^* &  \beta_{33}
\end{pmatrix},
\end{equation}
which maps the state $\rho_d$ (or $\vec{s}_d$) of the qutrit to the
point $\vec{s}_d'$ on the Bloch sphere corresponding to
$\rho_d'=\diag(w_1,w_3)/(w_1+w_3)$, and the two tangent vectors of the
centre manifold at $\rho_d$ to the two tangent vectors of the Bloch
sphere at $\vec{s}_d'$.  Thus this manifold is the required centre
manifold at $\rho_d$ (or $\vec{s}_d$).  On the centre manifold $\rho_d$
is a centre with the nearby solutions cycling around it.  The 
Hartman-Grobman theorem in centre manifold theory proved by
Carr~\cite{Carr} shows that all solutions outside $E$ converge
exponentially to solutions on the centre manifold belonging to
$\vec{s}_d$, while the solutions actually converging to $\vec{s}_d$ only
constitute a set of measure zero.  Therefore, when $H_1$ is not fully
connected, the trajectories $\rho(t)$ for most initial states $\rho(0)$
will not converge to the target state $\rho_d$ (or another critical
point of $V$) but to other trajectories $\rho_1(t)\subset E$, which are
not in orbit of $\rho_d$ either.

\subsection{$H_0$ not strongly regular}

Next let us consider systems with $H_1$ fully connected but $H_0$
not strongly regular, such as
\begin{equation}
H_0= \begin{pmatrix}
0 & 0 & 0  \\
0 & \omega & 0  \\
0 & 0 & 2\omega
\end{pmatrix}, \quad
H_1=\begin{pmatrix}
0 & 1 & 1  \\
1 & 0 & 1  \\
1 & 1 & 0
\end{pmatrix}
\end{equation}
i.e., $\omega_{12}=\omega_{23}=\omega$.  In order to determine the
subspace spanned by $\B=\{B_m\}_{m=0}^\infty$ [See~(\ref{eq:Bm})],
we note that the characteristic Vandermonde
matrix~(\ref{eq:vandermonde}) of the system
\begin{equation*}
V =
\begin{pmatrix}
1 & \omega^2 & \omega^4  \\
1 & (2\omega)^2 & (2\omega)^4 \\
1 & \omega^2 & \omega^4
\end{pmatrix}
\end{equation*}
has rank two as only the first two rows are linearly independent. We
find that in this case the invariant set $E$ is characterized by
$[\rho,\rho_d]\in\C\oplus\Span\{\mu,\bar{\mu}\}$ with
$\mu=\lambda_{12}-\lambda_{23}$,
$\bar{\mu}=\bar\lambda_{12}-\bar{\lambda}_{23}$.
\begin{equation*}
[\rho,\rho_d]= \begin{pmatrix}
0                      & -\beta_{12}\Delta_{12}  & -\beta_{13}\Delta_{23}  \\
\beta_{12}^*\Delta_{12}& 0                       & -\beta_{23}\Delta_{23}  \\
\beta_{13}^*\Delta_{23}& \beta_{23}^*\Delta_{23}& 0
\end{pmatrix}
\end{equation*}
where $\Delta_{k\ell}=w_k-w_\ell$.
Thus $[\rho,\rho_d]\in \C\oplus\Span\{\mu,\bar\mu\}$ implies
$\beta_{13}=0$ and $\beta_{12}\Delta_{12}=-\beta_{23}\Delta_{23}$.
So all $\rho\in E$ form a two-dimensional manifold with coordinates
determined by the $\lambda_{12}$ and $\bar\lambda_{12}$ components
of $\rho$.

As we are interested in the local dynamics near the target state, we
again study the linearization at the fixed point $\rho_d$ for the
case of a generic stationary state, i.e., $\rho_d$ diagonal with
non-degenerate eigenvalues.  Using the same notation as before, the
matrix $B_0$, i.e., the restriction of $\Ad_{-iH_0}$ to the subspace
$S_\T$, has six non-zero eigenvalues $\{\pm i\omega,\pm 2i\omega\}$,
where $\pm i\omega$ occurs with multiplicity two, and since
$\det(B)=\det(B_0)$, we know that $B$ also has six non-zero
eigenvalues.  However, two of these are purely imaginary, namely
$\pm i\omega$, as it can easily be checked that $\det(B\pm i\omega
I)=0$, and the corresponding vectors are
\begin{equation*}
 \vec{e}_{\pm i\omega}  = (-\Delta,\mp i\Delta,0,0,1,-i)^T
\end{equation*}
where $\Delta=\Delta_{23}/\Delta_{12}$.  Moreover, we know that all
other eigenvalues of $B$ must have negative (non-zero) real parts.
Analogous to the last subsection, we can show that the invariant set
$E$ forms a centre manifold near $\vec{s}_d$ with $\vec{s}_d$ as a
centre.  Thus by the Hartman-Grobman theorem of the centre manifold
theory, we can again infer that most of the solutions near
$\vec{s}_d$ will not converge to $\vec{s}_d$.

\section{Conclusions and Further discussions}

We have presented a detailed analysis of the Lyapunov method for the
problem of steering a quantum system towards a stationary target state,
or tracking the trajectory of a non-stationary target state under free
evolution, for finite-dimensional quantum systems governed by a bilinear
control Hamiltonian.  Although our results are partially consistent with
previously published work in the area, our analysis suggests a more 
complicated picture than previously described.

First, to allow proper application of the LaSalle invariance principle
we transform the original control problem into an autonomous dynamical
systems defined on an extended state space.  Characterization of the
LaSalle invariant set for this system shows that it always contains the
full set of critical points $F$ of the distance-like Lyapunov function
$V(\rho_1,\rho_2)=\frac{1}{2}\norm{\rho_1-\rho_2}^2$ defined on the
extended state space $\M\times\M$, where $\M$ is the appropriate flag
manifold for the density operators $\rho_1,\rho_2$.  Consistent with
previous work we show that the critical points of $V$ are the only points
in the invariant set for ideal systems, i.e., systems with strongly
regular drift Hamiltonian $H_0$ and fully connected control Hamiltonian
$H_1$, and stationary target states $\rho_d$.  However, we also show
that the invariant set is larger for non-stationary target states or
non-ideal systems, the main difference being that for ideal systems,
there is only a measure-zero set of target states for which the
invariant set $E$ is larger than $F$, while for non-ideal systems the
invariant set is always significantly larger than $F$.  This observation
is important because numerical simulations sugggest that Lyapunov
control design is mostly effective if the invariant set is limited to
the critical points of $V$, but likely to fail otherwise.  Our analysis
for various cases explains why.

For a generic target state (stationary or not) there is always a finite
set of $n!$ critical points of $V$, and it can be shown using stability
analysis that all of these critical points, except the target state, are
unstable.  Specifically, for a stationary generic target state we can
show that all the critical points are hyperbolic critical points of $V$
and hyperbolic critical points of the dynamical system, with the target
state being the only hyperbolic sink.  All the other critical points are
hyperbolic saddles, except for one hyperbolic source corresponding to
the global maximum.  Although this picture is somewhat similar to that
presented in~\cite{altafini2}, our dynamical systems analysis shows the
other critical points, referred to as antipodal points
in~\cite{altafini2}, are unstable, but except for the global maximum,
not repulsive.  In fact, all the hyperbolic saddles have stable
manifolds of positive dimension.  Thus, the set of initial states that
do not converge to the target state, even in this ideal case, is larger
than the (finite) set of antipodal points itself, although for ideal
systems and generic stationary target states, it is a measure-zero set
of the state space.  For stationary states with degenerate eigenvalues
(non-generic states) the set of critical points is much larger, forming
a collection of multiple critical manifolds.  However, for ideal systems
we can show that even in this case the target state is the still the
only hyperbolic sink of the dynamical system and asymptotically stable.
Thus, in general we can still conclude that most states will converge to
the target state, although it is non-trivial to show that the set of
states that converge to points on the other critical manifolds has
measure zero, except for the class of pseudo-pure states.  This class is
special since the set of critial points in this case has only two
components: a single isolated point corresponding to the global minimum
of $V$, which is a hyperbolic sink of the dynamical system, and a
critical manifold homeomorphic to $\CC P^{n-2}$ for $\M=\CC P^{n-1}$, on
which $V$ assumes its global maximum value $V_{\rm max}$.  Thus, although
the points comprising the critical manifold are not repulsive, since $V$
is decreasing as function of $t$, no initial states outside this manifold 
can converge to it.  We note that this argument was employed
in~\cite{altafini2} to argue that the critical points other than the
target state are `repulsive' but our analysis shows that it works only
for the class of pseudo-pure states.

Thus, although our analysis suggest that, e.g., that there are initial
states other than the antipodal points that will not converge to the
target state even for ideal systems, the set of states for which the
Lyapunov control fails is small, except for a measure-zero set of target
states for which the invariant set contains non-critical points.  For
ideal system one could therefore conclude that the Lyapunov method is
overall an effective control strategy.  However, most physical systems
are not ideal, and the Hamiltonians $H_0$ and $H_1$ are unlikely to
satisfy the very stringent conditions of strong regularity and full
connectedness, respectively.  For instance, these assumptions rule out
all systems with nearest-neighbour coupling only, as well as any system
with equally spaced or degenerate energy levels, despite the fact that
most of these systems can be shown to be completely controllable as
bilinear Hamiltonian control systems.  In fact, the requirements for
complete controllability are very low. Any system with strongly regular
drift Hamiltonian $H_0$, whose transition graph is not disconnected, for
instance, is controllable~\cite{CP267p001}, and in many cases even much
weaker requirements suffice~\cite{PRA63n063410, prep1}.  In practice, a 
bilinear Hamiltonian system can generally fail to be controllable only 
if it is decomposable into non-interacting subsystems or has certain 
(Lie group) symmetries, ensuring that, e.g., the dynamics is restricted 
to a subgroup such as the symplectic group~\cite{JPA35p2327}.  

Unfortunately, our analysis shows that the picture changes drastically
for non-ideal systems, with the target state ceasing to be a hyperbolic
sink of the dynamical system and becoming a centre on a centre manifold
contained in a significantly enlarged invariant set $E$.  Using results
from centre manifold theory, we must conclude that most of the solutions
$\rho(t)$ converge to solutions on the centre manifold other than the
target state $\rho_d$.  This result casts serious doubts on the
effectiveness of the Lyapunov method for realistic systems, in fact, it
strongly suggests that Lyapunov control design is an effective method
only for a very small subset of controllable quantum systems.  These
results appear to be in conflict with some recently published results on
Lyapunov control, which suggest that when the Hamiltonian and target
state satisfy a certain algebraic condition then any state $\rho(0)$
that is not an `antipodal' point of $\rho_d(0)$ asymptotically converges
to the orbit of the target state $\rho_d(t)$~\cite{altafini2}, and
claims that the `antipodal' points are repulsive.  Since the notion of
orbit convergence that was used in~\cite{altafini2} is weaker than the
notion of convergence in the sence of trajectory tracking we have used,
one might conjecture this to be the source of the discrepancy, and since
orbit tracking may be quite adequate for many control problems that do
not require precise phase control, for instance, this could mean that
Lyapunov control might still be an effective control strategy for many
quantum control problems.  However, this does not appear to be the case
here.  For instance, the notions of orbit and trajectory tracking are
identical for stationary target states but even for ideal systems and
stationary generic target states that satisfy the conditions
in~\cite{altafini2}, our analysis suggests that the antipodal points,
except one global maximum, are hyperbolic saddle points and hence
unstable but not repulsive.  Furthermore, careful analysis of our
results shows that for ideal systems convergence of $\rho(t)$ to the
orbit of $\rho_d(t)$ implies $\rho(t)\to\rho_d(t)$ except for a
measure-zero set of target states $\rho_d(t)$.\\

\section*{Acknowledgments}

XW is supported by the Cambridge Overseas Trust and an Elizabeth Cherry
Major Scholarship from Hughes Hall, Cambridge.  SGS acknowledges UK
research council funding from an \mbox{EPSRC} Advanced Research
Fellowship and additional support from the \mbox{EPSRC QIP IRC} and 
Hitachi.  She is currently also a Marie Curie Fellow under the European
Union Knowledge Transfer Programme MTDK-CT-2004-509223.  We sincerely
thank Peter Pemberton-Ross, Tsung-Lung Tsai, Christopher Taylor, Jack 
Waldron, Jony Evans, Dan Jane, Yaxiang Yuan, Jonathan Dawes,
Lluis Masanes, Rob Spekkens, Ivan Smith for interesting and fruitful 
discussions.

\end{document}